\pgfplotsset{
width= 0.3\linewidth,
height= 4.3cm,
every axis/.append style={
label style={font=\sffamily\scriptsize},
tick label style = {font=\sansmath\scriptsize}
},
cycle list={
{only marks, mark options={color= blue, mark=o,scale=0.8}},
{only marks, mark options={color= red, mark=diamond ,scale=0.9}}
},
ylabel near ticks,
legend style={
anchor=north,
legend columns=1,
cells={anchor=west},
font=\sffamily\scriptsize
}
}
\newcommand{\MYPROTOCOL}{CausalSpartanX\xspace}
\newtheorem{theorem}{\textbf{Theorem}}
\newtheorem{corollary}{\textbf{Corollary}}
\newtheorem{observation}{\textbf{Observation}}
\newtheorem{lemma}{\textbf{Lemma}}
\newtheorem{definition}{\textbf{Definition}}
\newtheorem{proof}{\textbf{Proof}}
\newcommand{\deps}{$\ \textsf{dep} \ $}
\newcommand{\name}{CausalSpartanX\xspace}
\begin{document}
\title{\MYPROTOCOL: Causal Consistency and Non-Blocking Read-Only Transactions \thanks{A preliminary version of this work appeared in \cite{causalspartan}.}}


\author{
    \IEEEauthorblockN{
        Mohammad Roohitavaf\IEEEauthorrefmark{1},
        Murat Demirbas\IEEEauthorrefmark{2}, and
        Sandeep Kulkarni\IEEEauthorrefmark{1}}
    \IEEEauthorblockA{
        \begin{tabular}{cc}
            \begin{tabular}{@{}c@{}}
                \IEEEauthorrefmark{1}
                    Department of Computer Science and Engineering\\
Michigan State University\\
East Lansing, MI\\
Email: \{roohitav, sandeep\}@cse.msu.edu
            \end{tabular} & \begin{tabular}{@{}c@{}}
                \IEEEauthorrefmark{2}
                   Department of Computer Science and Engineering \\
University of Buffalo, SUNY \\
Buffalo, NY \\ 
Email: demirbas@buffalo.edu
            \end{tabular}
        \end{tabular}
    }
}

\IEEEtitleabstractindextext{%
\begin{abstract}
Causal consistency is an intermediate consistency model that can be achieved together with high availability and performance requirements even in presence of network partitions. In the context of partitioned data stores, it has been shown that implicit dependency tracking using timestamps is more efficient than explicit dependency tracking. Existing time-based solutions depend on monotonic psychical clocks that are closely synchronized. These requirements make current protocols vulnerable to clock anomalies. 
In this paper, we propose a new time-based algorithm, \MYPROTOCOL, that instead of physical clocks, utilizes Hybrid Logical Clocks (HLCs).  We show that using HLCs, without any overhead, we make the system robust on physical clock anomalies. 
This improvement is more significant in the context of query amplification, where a single query results in multiple GET/PUT operations. 
%
We also show that \MYPROTOCOL decreases the visibility latency for a given data item compared with existing time-based approaches. In turn, this reduces the completion time of collaborative applications where two clients accessing two different replicas edit same items of the data store. \MYPROTOCOL also provides causally consistent distributed read-only transactions. \MYPROTOCOL read-only transactions are non-blocking and require only one round of communication between the client and the servers. Also, the slowdowns of partitions that are unrelated to a transaction do not affect the performance of the transaction. Like previous protocols, \MYPROTOCOL assumes that a given client does not access more than one replica. We show that in presence of network partitions, this assumption (made in several other works) is essential if one were to provide causal consistency as well as immediate availability to local updates.

\end{abstract}
\begin{IEEEkeywords}
Causal Consistency, Hybrid Logical Clocks, Distributed Data Stores, Key-value Stores, Geo-replication, Transactions
\end{IEEEkeywords} 
}

\maketitle

\section{Introduction}

Geo-replicated data stores are one of the integral parts of today's Internet services. Service providers usually replicate their data on different nodes worldwide to achieve higher performance and durability. However, when we use this approach, the consistency among replicas becomes a concern. In an ideal situation, any update to any data item instantaneously becomes visible in all replicas. This model of consistency is called \textit{strong consistency}. 
Unfortunately, it is impossible to achieve strong consistency without sacrificing the availability when we have network partitions; 
the CAP theorem \cite{cap} implies that in presence of network partitions, we cannot have strong consistency and availability together. Even in the absence of network partitions, strong consistency comes with its performance overhead \cite{pacelc}.

Due to the availability and performance overhead of the strong consistency, many systems use \textit{eventual consistency} \cite{ec}. In this consistency model, as the name suggests, the only guarantee is that replicas become consistent "eventually". We can implement always-available services under this consistency model. However, it may expose clients to anomalies, and application programmers need to consider such anomalies. To understand how eventual consistency may lead to an anomaly, consider the following example from \cite{cops}: Suppose in a social network, Alice uploads a photo and then adds it to an album. Under the eventual consistency model, a remote replica may update the album before writing the photo. That scenario is not desirable, as the album is pointing to a photo which is not visible to clients. Despite such anomalies, because of the availability and performance benefits, some distributed data stores (e.g., Dynamo \cite{dynamo}) use eventual consistency.

\textit{Causal consistency} is an intermediate consistency model.
Causal consistency requires that the effect of an event can be visible only when the effect of its causal dependencies is visible. The causal dependency captures the notion of happens-before relation defined in \cite{lamport}. Under this relation, any event by a client depends on all previous events by that client. Thus, in our example, adding the reference to the album depends on the event of uploading the photo. Thus, no replica can update the album before writing the photo. Causal consistency 
is achievable with availability even in the presence of networks partitions. 


To \textit{guarantee} causal consistency for an \textit{always-available} system, we need some mechanism to track and check causal dependencies before making a version visible in a remote replica. Tracking and checking dependencies are especially challenging for partitioned systems where each replica consists of several machines. 
Existing work on causal consistency for replicated and partitioned data stores can be classified into those where dependencies are tracked and checked explicitly (e.g., COPS \cite{cops}), and those that do it implicitly (e.g., GentleRain \cite{gentleRain}). The former suffers from overhead resulted from high metadata and message-complexity that is avoided by the latter. For example, GentleRain \cite{gentleRain} uses physical clock timestamps to track causal dependencies that results in much lower metadata overhead and message complexity compared with COPS \cite{cops}.


Although GentleRain reduces the overhead of tracking and checking dependencies, the physical clocks used for timestamping the versions must be monotonic and synchronized with each other. Specifically, it requires that clocks are strictly increasing. This may be hard to guarantee if the underlying service such as NTP causes non-monotonic updates to POSIX time \cite{NTPbad} or suffers from leap seconds \cite{leapsecond,leapsecond2}. In addition, as we will see, the clock skew between physical clocks of partitions may lead to cases where GentleRain must intentionally delay write operations.

The issue of clock anomalies is intensified in the context of query amplification, where a single query results in many (possibly 100s to 1000s) GET/PUT operations \cite{facebook}. In this case, the delays involved in each of these operations contribute to the total delay of the operation, and can substantially increase the response time for the clients.

Our goal in this paper is to analyze the effect of clock anomalies to develop a causally consistent data store that is resistant to clock skew among servers. This will allow us to ensure that high performance is provided even if there is a clock skew among servers. It would obviate the need for all servers in a data center to be co-located for the sake of reducing clock anomalies.

To achieve this goal, we develop \MYPROTOCOL that is based on the structure of GentleRain but utilizes Hybrid Logical Clocks (HLCs)\cite{hlc}. HLCs combine the logical clocks \cite{lamport} and physical clocks. In particular, these clocks assign a timestamp $hlc$ for event $e$ such that if $e$ happened before $f$ (as defined in \cite{lamport}), then $hlc.e < hlc.f$. Furthermore, the value of $hlc$ is very close to the physical clock and is backward compatible with NTP clocks \cite{ntp}. 

In addition to providing causal consistency for basic PUT and GET operations, \name also provides causally consistent read-only transactions as another powerful abstraction that can significantly help application developers. \name read-only transaction reads a set of keys from a causally consistent snapshot of the system that is also causally consistent with the client previous reads. \name read-only transaction algorithm is non-blocking, i.e., the servers involved in the transaction do not need to wait for an external event before reading the values of the keys requested by the transaction. Also, unlike previous protocols such as \cite{cops}, \cite{gentleRain}, and \cite{cont}, \name requires only one-round of communication between the client and the servers. An important advantage of \name read-only transactions (over existing works such as \cite{gentleRain}) is that the performance of \name read-only transaction is not affected by a slow unrelated partition (i.e., a partition that does not host any key asked by the transaction).  


Similar to \cite{gentleRain, cops, eiger, orbe}, we assume that clients only accesses one replica during their execution (i.e. clients are \textit{sticky}). Since this assumption is standard in the literature, we investigate its necessity. We observe that this assumption is essential if we want to provide causal consistency while ensuring that all local updates are immediately visible. In other words, we show that if the clients are not sticky and a replica makes local updates visible immediately then it is impossible to provide causal consistency with availability in presence of network partitions. This impossibility result is different than the existing impossibility result that requires sticky clients for read-your-writes \cite{peter} that is a part of causal consistency.
Specifically, the proof of impossibility result in \cite{peter} relies on the inability of clients to cache their updates locally. 
By contrast, our impossibility result holds even if clients can cache their updates.

{\bf Contributions of the paper. } \
\begin{itemize}
  \setlength\itemsep{0mm}

\item We present our \name protocol that provides causally consistent basic PUT and GET operations as well as non-blocking causally consistent read-only transactions.

\item We show that in the presence of clock anomalies, \MYPROTOCOL reduces the latency of PUT operations compared with that of GentleRain. Moreover, the performance or correctness of \MYPROTOCOL is unaffected by clock anomalies. 


\item We demonstrate that  \MYPROTOCOL is especially effective to deal with delays associated with query amplification. 

\item We demonstrate that \MYPROTOCOL reduces the update visibility latency. This is especially important in collaborative applications associated with a data store. For example, in an application where two clients update a common variable (for example, the bid price for an auction) based on the update of the other client, \MYPROTOCOL reduces the execution time substantially. 
\item We show that \name non-blocking read-only transaction algorithm provides better performance compared with GentleRain by not being affected by slowdowns of unrelated partitions. 

\item We show that using HLC instead of physical clocks does not have any overhead.
\item We demonstrate the efficiency provided by our approach by performing experiments on cloud services provided by Amazon Web Services\cite{aws}.
\item We provide an impossibility result that shows stickiness of the clients is necessary for a causally consistent data store that immediately makes local updates visible. We note that unlike the existing result \cite{snow}, our impossibility theorem still holds even when clients can cache their updates.
\end{itemize}

{\bf Organization of the paper. } \
In Section \ref{sec:background}, we define our system architecture and the notion of causal consistency. In Section \ref{sec:motivation}, we discuss, in detail, the issues of clock anomalies that we want to address. In Section \ref{sec:hlc}, we provide a brief overview of HLCs from \cite{hlc}. Section \ref{sec:protocol} provides our \MYPROTOCOL for basic operation and Section \ref{sec:rotx} provides \name protocol for read-only transactions. Section \ref{sec:results} provides our experimental results. We provide the impossibility result in Section \ref{sec:impossibility}. In Section \ref{sec:related} we discuss related work.
Finally, Section \ref{sec:conclusion} concludes the paper.

\section{Background}
\label{sec:background}
In this section, we focus on the system architecture, assumptions, and the data model considered in this paper. We also provide an intuitive description of causal consistency.

\subsection{Architecture and Data Model}
\label{sec:arch}    
We focus on the system architecture and assumptions that are the same as those assumed in \cite{cops,eiger, orbe, gentleRain}. We consider a data store whose data is fully replicated into $M$ data centers (i.e., replicas) where each data center is partitioned into $N$ partitions (see Figure \ref{fig:arch}). 
Like \cite{cops,eiger, orbe, gentleRain}, we assume that a client does not access more than one data center. We prove the necessity of this assumption in Section \ref{sec:impossibility}. There might be network failures \textit{between} data centers that cause network partitions. We assume network failures do not happen \textit{inside} data centers. Thus, partitions inside a data center can always communicate with each other.

\begin{figure}
\begin{center}
\includegraphics[scale=0.7]{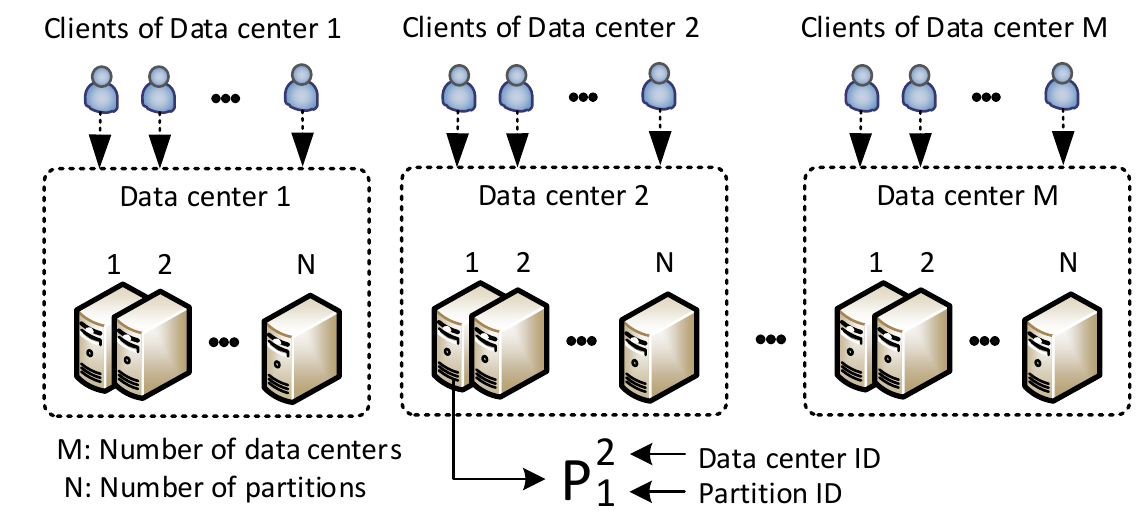}
\caption{A system consisting of $M$ data centers (replicas) each of which consists of $N$ partitions. $p^m_n$ denotes $n$th partition in $m$th data center.}
\label{fig:arch}
\end{center}
\end{figure}

We assume multi-version key-value stores that store several versions for each key. We consider three operations for a key-value store: $PUT(k, val)$, $GET(k)$, and $ROTX(K)$ where  $PUT(k, val)$ writes new version with value $val$ for item with key $k$, $GET(k)$ reads the value of an item with key $k$, and $ROTX(K)$ is a read-only transaction that reads a set of keys $K$.

\subsection{Causal Consistency}
\label{sec:causalConsistency}
Causal consistency is defined based on the happens-before relation between events \cite{lamport}.  In the context of key-value stores, we define happens-before relation as follows: 

\begin{definition} [Happens-before]
\label{def:happens}
Let $a$ and $b$ be two events. We say $a$ happens before $b$, and denote it as $a \rightarrow b$ iff: 
\begin{itemize}
\setlength{\itemsep}{0mm}
\item $a$ and $b$ are two events by the same client, and $a$ happens earlier than $b$, or
\item $b$ reads a value written by $a$, or
\item there is another event $c$ such that $a \rightarrow c$ and $c \rightarrow b$. 
\end{itemize}
\end{definition}

Now, we define causal dependency as follows: 

\begin{definition} [Causal Dependency]
\label{def:causal_dependecy}
Let $v_1$ be a version of key $k_1$, and $v_2$ be a version of key $k_2$. We say $v_1$ causally depends on $v_2$, and denote it as $v_1 \deps v_2$ iff (event of writing $v_2$) $\rightarrow$ (event of writing $v_1$). 
\end{definition}

For example, suppose that a user writes a comment for a post on a social network. Since the user, first reads the post, and then writes the comment, the comment causally depends on the post.


A data store is causally consistent if it satisfies these conditions: 1) when a client reads a version, it always remains visible to the client, 2) writes by a client must be immediately visible to the client, and 3) when a version is visible,  all of its causal dependencies are also visible.

\section{Causal Consistency and Physical Clock Anomalies}
\label{sec:motivation}

To guarantee causal consistency, we need to track the causal relation between the versions. Tracking dependencies implicitly using timestamps is much more efficient than tracking dependencies explicitly (i.e., via a list of dependencies) \cite{gentleRain}. In this section, we identify the issues caused by clock anomalies in providing causal consistency using the physical timestamps in protocols such as GentleRain \cite{gentleRain}. First, in Section \ref{sec:gentleRain}, we identify how causal consistency is achieved in \cite{gentleRain} with synchronized physical clocks. Next, in Section \ref{sec:sensitivity}, we identify why delays have to be introduced in PUT operations to satisfy causal consistency. In Section \ref{sec:query}, we identify why the effect of latency introduced in PUT operations causes a significant problem to queries that result in multiple GET/PUT operations for a given query. 

\subsection{Using Physical Clocks to Achieve Causal Consistency}
\label{sec:gentleRain}

GentleRain assigns each version a timestamp equal to the value of the physical clock of the partition where the write of the version occurs. We denote the timestamp assigned to version $X$ by $X.t$. GentleRain assigns timestamps such that following condition is satisfied:

\begin{itemize}
\item [$C1:$] If version $X$ of object $x$ depends on version $Y$ of object $y$, then $Y.t < X.t$.
\end{itemize} 



Also, each partition in the data center periodically computes a variable called  Global Stable Time (GST) (through communication with other partitions) such that the following condition is satisfied: 
\begin{itemize}
\item [$C2:$] When GST in a node has a certain value $T$, then all versions with timestamps smaller than or equal to $T$ are visible in the data center.
\end{itemize}

When a client performs $GET(k)$, the partition storing $k$, returns the newest version $v$ of $k$ which is either created locally or has a timestamp no greater than GST. According to conditions $C1$ and $C2$ defined above, any version which is a dependency of $v$ is visible in the local data center and causal consistency is satisfied.

\subsection{Sensitivity on physical clock and clock synchronization} 
\label{sec:sensitivity}

To satisfy the condition $C1$, in some cases, it may be necessary to wait before creating a new version. Specifically, if a client has read/written a key with timestamp $t$, then any future PUT operation the client invokes must have a timestamp higher than $t$. Hence,  the client sends the timestamp $t$ of the last version that it has read/written together with a PUT operation. The partition receiving this request first waits until its physical clock is higher than $t$ before creating the new version. This wait time, as we observed in our experiments, is proportional to the clock skew between servers. In other words, as the physical clocks of servers drift from each other, the incidence and the amount of this wait period increases. 

In addition, in the approach explained in Section \ref{sec:gentleRain}, the physical clocks cannot go backward. To illustrate this, consider a system consisting of two data centers $A$ and $B$. Suppose GSTs in both data centers are 6. That means, both data centers assume all versions with timestamps smaller than 6 are visible (condition $C2$). Now, suppose the physical clock of one of the servers in data center $A$ goes backward to 5. In this situation, if a client writes a new version at that server, condition $C2$ is violated, as the version with timestamp 5 has not arrived in data center $B$, but its GST is 6 which is higher than 5. 


\subsection{Query Amplification}
\label{sec:query}

The sensitivity issue identified in Section \ref{sec:sensitivity} is made worse in practice because a single end user request usually translates to many possibly causally dependent internal queries. This phenomenon is known as \textit{query amplification} \cite{facebook}. In a system like Facebook, query amplification may result in up to thousands of internal queries for a single end user request \cite{facebook}. Typically, an end user submits the request to a web server. The web server performs necessary internal queries and then responds to the end user. This implies that the web server needs to wait for all internal queries before responding to the end user request. Thus, any delay in any of internal queries will cumulatively affect the end user experience \cite{facebook}.


\MYPROTOCOL solves the issues identified in this section by ensuring that no delays are added to PUT operations. Therefore, \MYPROTOCOL is unaffected by clock skew even in the presence of query amplification. 

\section{Hybrid Logical Clocks}
\label{sec:hlc}

In this section, we recall HLCs from \cite{hlc}. HLC combines logical and physical clocks to leverage key benefits of both. The HLC timestamp of event $e$, denoted as $hlc.e$, is a tuple $\langle l.e, c.e\rangle$. The first component, $l.e$, is the value of the physical clock, and represents our best approximation of the global time when $e$ occurs. The second component, $c.e$, is a \textit{bounded} counter that is used to capture causality whenever $l.e$ is not enough to capture causality. Specifically, if we have two events $e$ and $f$ such that $e$ happens-before $f$ (see Definition \ref{def:happens}), and $l.e = l.f$, to capture causality between $e$ and $f$, we set $c.e$ to a value higher than $c.f$. Although we increase $c$, as it is proved in \cite{hlc}, the theoretical maximum value of $c$ is $O(n)$ where $n$ is the number of processes. In practice, this value remains very small. In addition to HLC timestamps, each process $a$ maintains an HLC clock $\langle l.a, c.a\rangle$. For completeness, we recall algorithm of HLC from \cite{hlc} below. 

\begin{algorithm}[t] 
{
\footnotesize
\caption{HLC algorithm from \cite{hlc} }
\label{alg:hlc}
\begin{algorithmic} [1]
\STATE \textbf{Upon sending a message or local event by process $a$} 
\STATE \hspace{3mm} $l'.a = l.a$
\STATE \hspace{3mm} $l.a = max(l'.a, pt.a)$ //tracking maximum time event, $pt.a$ is physical time at a
\STATE \hspace{3mm} \textbf{if} $(l.a = l'.a) \ c.a = c.a + 1$ //tracking causality
\STATE \hspace{6mm} \textbf{else} $c.a = 0$
\STATE \hspace{3mm} Timestamp event with $l.a,c.a$

\STATE \vspace{5mm} \textbf{Upon receiving message $m$ by process $a$} 
\STATE \hspace{3mm}  $l'.a = l.a$
\STATE \hspace{3mm}  $l.a = max(l'.a, l.m, pt.a)$ //$l.m$ is $l$ value in the timestamp of the message received
\STATE \hspace{3mm}  \textbf{if} $(l.a = l'.a = l.m)$ then $c.a = max(c.a, c.m) + 1$
\STATE \hspace{3mm}  \textbf{else if} $(l.a = l'.a)$ then $c.a = c.a + 1$
\STATE \hspace{3mm}  \textbf{else if} $(l.a = l.m)$ then $c.a := c.m + 1$
\STATE \hspace{3mm}  \textbf{else} $c.a = 0$
\STATE \hspace{3mm}  Timestamp event with $l.a,c.a$
\end{algorithmic}
}
\end{algorithm}

HLC satisfies logical clock property that allows us to capture (one-way) causality between two events. Specifically, if $e$ happens-before $f$, then $hlc.e < hlc.f$\footnote{$hlc.e < hlc.e$ iff $l.e < l.f \vee (l.e = l.f \wedge c.e < c.f)$.}.  This implies that if $hlc.e = hlc.f$, then $e$ and $f$  are (causally) concurrent. At the same time, just like physical clock, HLC increases spontaneously, and it is close to the physical clock. Thus, it can be used to take snapshot at a given physical time. Moreover, since the timestamps are close to the physical clocks, we can use timestamps to achieve last-writ-wins conflict resolution in case of concurrent updates \cite{session}.

\section{\MYPROTOCOL Basic Protocol}
\label{sec:protocol}


One way to get around the issue of PUT latency identified in Section \ref{sec:motivation} is as follows: Suppose that a client has read a value written at time $t$ and it wants to perform a new PUT operation on a server whose time is less than $t$. To satisfy $C1$, in \cite{gentleRain}, PUT operation is delayed until the clock of the server was increased beyond $t$. Another option is to change the clock of the server to be $t+1$. 
However, changing the physical clock is undesirable; it would lead to a violation of clock synchronization achieved by protocols such as NTP. It would also have unintended consequences for applications using that clock.

Using HLC in this problem solves several problems associated with changing the physical clock. Specifically, HLC is a logical clock and can be changed if needed. In the scenario described in the previous paragraph, this would be achieved by increasing the $c$ value which is still guaranteed to stay bounded \cite{hlc}.
At the same time, HLC is guaranteed to be close to the physical clock. Hence, it can continue to be used in place of the physical clock. Also, HLC uses the physical clock as a read-only variable thereby ensuring that it does not affect protocols such as NTP. For these reasons, \MYPROTOCOL uses HLC. 

Another important improvement in \MYPROTOCOL is the use of Data center Stable Vectors (DSVs) instead of GSTs. DSVs are vectors that have an entry for each data center. If $DSV[j]$ equals $t$ in data center $i$, then it implies that all writes performed at data center $j$ before time $t$ have been received by data center $i$. 
DSVs reduce update visibility latency and allow collaborative clients to work quickly in the presence of some \textit{slow} replicas.
Next, we focus on different parts of the \MYPROTOCOL protocol.

\subsection{Client-Side}

A client $c$ maintains a set of pairs of data center IDs and HLC timestamps called dependency set, denoted as $DS_c$\footnotemark. For each data center $i$, there is at most one entry $\langle i, h \rangle$ in $DS_c$ where $h$ specifies the maximum timestamp of versions read by client $c$ originally written in data center $i$. For a given PUT request, this information is provided by the client so that the server can guarantee causal consistency. 
A client $c$ also maintains $DSV_c$ that is the most recent $DSV$ that the client is aware of. 

\footnotetext{This could be maintained as part of client library as in \cite{cops} so that the effective interface of the client does not have to explicitly maintain this information. Alternatively, this could also be maintained by the server for each client. For sake of simplicity, in our discussion, we assume that this information is provided by the client.}

Algorithm \ref{alg:client} shows the algorithm for the client operations. For a GET operation, the client sends the key that it wants to read together with its $DSV_c$ by sending  $\langle \textsc{GetReq} \ k, DSV_c\rangle$ message to the server where key $k$ resides. In the response, the server sends the value of the requested key together with a list of dependencies of the returned value, $ds$, and the DSV in the server, $dsv$. The client, then, first updates its DSV, and next updates its $DS_c$ by calling $maxDS$ as follows: for each $\langle i, h\rangle \in ds$ if currently there is an entry $\langle i, h'\rangle$ in $DS_c$, it replaces $h'$ with the maximum of $h$ and $h'$. Otherwise, it adds $\langle i, h\rangle$ to the $DS_c$. 

For a PUT operation, the client sends the key that it wants to write together with the desired value and its $DS_c$. In response, the server sends the timestamp assigned to this update together with the ID of the data center. The client then updates its $DS_c$ by calling $maxDS$.

\begin{algorithm} 
{
\footnotesize
\caption{Client operations at client $c$}
\label{alg:client}
\begin{algorithmic} [1]

\STATE \textbf{GET} (key $k$) 

\STATE \hspace{3mm}  send $\langle \textsc{GetReq} \ k, DSV_c\rangle$  to server
\STATE \hspace{3mm}  receive $\langle \textsc{GetReply} \ v, ds, dsv\rangle$
\STATE \hspace{3mm} \label{line:client_update_dsv} $DSV_c \leftarrow max (DSV_c, dsv)$
\STATE \hspace{3mm} \label{line:client_update_ds} $DS_c \leftarrow  maxDS(DS_c, ds)$

\RETURN $v$

\STATE \vspace{5mm} \textbf{PUT} (key $k$, value $v$)
\STATE \hspace{3mm} send $\langle \textsc{PutReq} \ k,v,DS_c \rangle$ to server
\STATE \hspace{3mm} receive $\langle \textsc{PutReply} \ ut, sr \rangle$ 
\STATE \hspace{3mm} \label{line:put_update_ds}$DS_c \leftarrow maxDS(DS_c, \{\langle sr, ut\rangle\})$

\STATE \vspace{5mm} \textbf{maxDS} (dependency set $ds_1$, dependency set $ds_2$)
\STATE \hspace{3mm} \textbf{for} each $\langle i, h\rangle \in ds_2$ 
\STATE\hspace{6mm} \textbf{if} {$\exists \langle i, h'\rangle \in ds_1$} 
\STATE \hspace{9mm} $ds_1 \leftarrow ds_1 - {\langle i, h' \rangle}$
\STATE \hspace{9mm}  $ds_1 \leftarrow \langle i, max (h, h')\rangle$
\STATE \hspace{6mm} \textbf{else}
\STATE \hspace{9mm} $ds_1 \leftarrow ds_1 \cup \{\langle i, h \rangle\} $
\RETURN $ds_1$
\end{algorithmic}
}
\end{algorithm}

\subsection{Server-Side}

In this section, we focus on the server-side of the protocol. We have $M$ data centers (i.e., replicas) each of which with $N$ partitions (i.e., servers). We denote the $n$th partition in $m$th replica by $p^m_n$ (see Figure \ref{fig:arch}). 
We denote the physical clock at partition $p^m_n$ by $PC^m_n$. Each partition $p^m_n$ stores a vector of size $M$ (one entry for each data center) of (HLC) timestamps denoted by  $VV^m_n$. For $k \neq m$, $VV^m_n[k]$ is the latest timestamp received from server $p^k_n$ by server $p^m_n$. $VV^m_n[m]$ is the highest timestamp assigned to a version written in partition $p^m_n$. Partitions inside a data center, periodically share their $VVs$ with each other, and compute DSV as the entry-wise minimum of $VVs$. $DSV^m_n$ is the DSV computed in server $p^m_n$.

For each version, in addition to the key and value, we store some additional metadata including the (HLC) time of creation of the version, $ut$, and the source replica, $sr$, where the version has been written, and a set of dependencies, $ds$, similar to dependency sets of clients. Note that  $ds$ has at most one entry for each data center. 

Algorithm \ref{alg:server1} shows the algorithm for PUT and GET operations at the server-side. Upon receiving a GET request ($\textsc{GetReq}$), the server first updates its DSV if necessary using DSV value received from the client (see Line \ref{line:get:takeMaxDSV} of Algorithm \ref{alg:server1}). 
 After updating DSV, the server finds the latest version of the requested key that is either written in the local data center, or all of its dependencies are visible in the data center.  To check this, the server compares the DS of the key with its DSV. Note that to find the \textit{latest} version, the server uses the last-writer-wins conflict resolution function that breaks ties by data center IDs as explained in Section \ref{sec:causalConsistency}. 
After finding the proper value, the server returns the value together with the list of dependencies of the value, and its DSV in a $\textsc{GetReply}$ message. The server also includes the version being returned in the dependency list in Line \ref{line:includeUt} of Algorithm \ref{alg:server1} by calling the same $maxDS$ function as defined in Algorithm \ref{alg:client}. 

A major improvement in \MYPROTOCOL over GentleRain is providing wait-free PUT operations. Once server  $p^m_n$ receives a PUT request, the server first updates its DSV with the DS value received from the client. The server, next, updates $VV^m_n[m]$  to calling $updateHLC$ function. It uses maximum number between $ds$ values and $DSV^m_n[m]$ as $updateHLC$ argument. This will guarantee that the new $VV^m_n[m]$ to be higher than this maximum and capture causality. Next, the server creates a new version for the key specified by the client and uses the current $VV^m_n[m]$ value for its timestamp. The server sends back the assigned timestamp $d.ut$ and data center ID $m$ to the client in a $\textsc{PutReply}$ message. 

 

Upon creating a new version for an item in one data center, we send the new version to other data centers via replicate messages. Upon receiving a $\langle Replicate \ d\rangle$ message from server $p^k_n$, the receiving server $p^m_n$ adds the new version to the version chain of the item with key $d.k$. The server also updates the entry for server $p^k_n$ in its version vector. Thus, it sets $VV^m_n[k]$ to $d.ut$. 

Algorithm \ref{alg:server2} shows the algorithm for updating DSVs. As mentioned before, partitions inside a data center periodically update their DSV values. Specifically, 
every $\theta$ time, partitions share their VVs with each other and compute DSV as the entry-wise minimum of all VVs (see Line \ref{line:computeDSV} of Algorithm \ref{alg:server2}). Broadcasting VVs has a high overhead. Instead, we efficiently compute DSV over a tree like the way GST is computed in \cite{gentleRain}. Specifically, each node upon receiving VVs of its children computes entry-wise minimum of the VVs and forwards the result to its parent. The root server computes the final DSV and pushes it back through the tree. Each node, then, updates its DSV upon receiving DSV from its parent. Algorithm \ref{alg:server2} also shows the algorithm for the heartbeat mechanism. Heartbeat messages are sent by a server if the server has not sent any replicate message for a certain time $\Delta$. The goal of heartbeat messages is updating the knowledge of the peers of a partition in other data centers (i.e., updating $VV$s). 

\begin{algorithm} 
{
\footnotesize
\caption{PUT and GET operations at server $p^m_n$}
\label{alg:server1}
\begin{algorithmic} [1]
\STATE \textbf{Upon} receive $\langle \textsc{GetReq} \ k, dsv\rangle$
\STATE \hspace{3mm} \label{line:get:takeMaxDSV}$DSV^m_n \leftarrow max(DSV^m_n, dsv)$
\STATE \hspace{3mm}  \label{line:get:obtain} obtain latest version $d$ (the version with lexicographically highest value $\langle ut, sr\rangle$) from version chain of key $k$ s.t. 
\begin{itemize}
\item $d.sr = m$, or
\item for any $\langle i, h\rangle \in d.ds$, $h \leq DSV^m_n[i]$
\end{itemize}
\STATE \hspace{3mm} \label{line:includeUt} $ds \leftarrow maxDS(d.ds, \{\langle d.sr, d.ut\rangle\})$
\STATE \hspace{3mm} send $\langle\textsc{GetReply} \  d.v, ds, DSV^m_n\rangle$ to client

\STATE \vspace{5mm} \textbf{Upon}  receive $\langle \textsc{PutReq} \ k, v, ds \rangle$
\STATE \hspace{3mm} \label{line:put_max_dsv} $DSV^m_n \leftarrow maxDS (DSV^m_n, ds)$
\STATE \hspace{3mm} \label{line:takeMaxforDt} $dt \leftarrow$ max value in $\{ds.values \cup \{DSV^m_n[m]\}\}$  
\STATE \hspace{3mm}  $updateHCL(dt)$
\STATE \hspace{3mm}  Create new item $d$
\STATE \hspace{3mm}  $d.k \leftarrow k$ 
\STATE \hspace{3mm}  $d.v \leftarrow v$
\STATE \hspace{3mm}  $d.ut \leftarrow VV^m_n[m]$
\STATE \hspace{3mm}  $d.sr \leftarrow m$ 
\STATE \hspace{3mm}  $d.ds \leftarrow ds$ 
\STATE \hspace{3mm}  insert $d$ to version chain of $k$
\STATE \hspace{3mm}  send $\langle \textsc{PutReply} \ d.ut , m\rangle$ to client
\STATE \hspace{3mm}  \textbf{for} each server $p^k_n, k \in \{0 \ldots M-1\}, k \neq m$ \textbf{do}
\STATE \hspace{6mm}  \label{line:sendReplicate} send $\langle \textsc{Replicate} \ d \rangle$ to $p^k_n$

\STATE \vspace{5mm} \textbf{Upon}  receive $\langle \textsc{Replicate} \  d\rangle$ from $p^k_n$
\STATE \hspace{3mm}  insert $d$ to version chain of key $d.k$
\STATE \hspace{3mm} $VV^m_n  [k] \leftarrow d.ut$

\STATE \vspace{5mm} \textbf{updateHLCforPut ($dt$)}
\STATE \hspace{3mm} $l' \leftarrow VV^m_n[m].l$
\STATE \hspace{3mm} $VV^m_n[m].l \leftarrow max (l', PC^m_n, dt.l)$
\STATE \hspace{3mm} \textbf{if} $(VV^m_n[m].l = l' = dt.l) \ \ VV^m_n[m].c \leftarrow max(VV^m_n[m].c ,dt.c)+1$
\STATE \hspace{3mm} \textbf{else if} $(VV^m_n[m].l = l') \  \ VV^m_n[m].c \leftarrow VV^m_n[m].c + 1$
\STATE \hspace{3mm} \textbf{else if} $(VV^m_n[m].l = dt.l) \  \ VV^m_n[m].c \leftarrow dt.c + 1$
\STATE \hspace{3mm} \textbf{else} $VV^m_n[m].c \leftarrow 0$

\end{algorithmic}
}
\end{algorithm}

\begin{algorithm} 
{
\footnotesize
\caption{HEARTBEAT and DSV computation operations at server $p^m_n$}
\label{alg:server2}
\begin{algorithmic} [1]
\STATE \textbf{Upon}  every $\theta$ time
\STATE \hspace{3mm} \label{line:computeDSV} $DSV^m_n \leftarrow max(DSV^m_n,\ $entry-wise $min_{j=1}^{N} (VV^m_j))$

\STATE \vspace{5mm} \textbf{Upon} every $\Delta$ time
\STATE \hspace{3mm} \textbf{if} there has not been any replicate message in the past $\Delta$ time 
\STATE \hspace{6mm}  $updateHCL()$
 \STATE \hspace{6mm}  \textbf{for} each server $p^k_n, k \in \{0 \ldots M-1\}, k \neq m$ \textbf{do}
\STATE \hspace{9mm}  send $\langle \textsc{Heartbeat} \ HLC^m_n \rangle$ to $p^k_n$

\STATE \vspace{5mm} \textbf{Upon}  receive $\langle \textsc{Heartbeat} \ hlc \rangle$ from $p^k_n$
\STATE \hspace{3mm}  $VV^m_n  [k] \leftarrow hlc$

\STATE \vspace{5mm} \textbf{updateHLC} ()
\STATE \hspace{3mm} $l' \leftarrow VV^m_n[m].l$
\STATE \hspace{3mm} $VV^m_n[m].l \leftarrow max(VV^m_n[m].l, PC^m_n)$
\STATE \hspace{3mm} \textbf{if} $(VV^m_n[m].l = l')\ $ $VV^m_n[m].c \leftarrow VV^m_n[m].c + 1$
\STATE \hspace{3mm} \textbf{else} $VV^m_n[m].c \leftarrow 0$

\end{algorithmic}
}
\end{algorithm}

\section{\name ROTX Protocol}

In this section, we want to focus on causally consistent read-only transactions. We, first, discuss the motivation for this operation, and, then, present an algorithm to provide this operation in \name. 

\label{sec:rotx}

\subsection{Motivation}
A causally consistent read-only transaction is a powerful abstraction that can significantly help application developers when working with replicated data stores. A read-only transaction allows application developers to read a set of keys such that the returned versions of the key values are causally consistent with each other as well as with previous reads of the application. To understand the benefit of such abstraction, consider the following example. 

Consider a social network such as Facebook where profile pictures are always public. Alice wants to update her profile picture, but she does not want Bob to see her new picture. Since profile pictures are public, the only way for Alice to hide her picture from Bob is to completely block Bob. Thus, she first blocks Bob and then updates her picture (we call this change scenario 1). A data store with only causally consistent PUT and GET operations (without read-only transactions) guarantees that no matter which replicas Alice and Bobs are connected to, the new Alice's picture is visible only when Bob is blocked by Alice. However, it is \textit{not} enough to protect Alice's privacy. Suppose the application first reads Bob's status and finds it unblocked, then it reads Alice's new profile picture. Since it found Bob unblocked, it shows Alice's new picture to Bob that is not acceptable. With a causally consistent read-only transaction, the application could read both values such that they are causally consistent with each other. Thus, the application either 1) reads Alice's old picture, or 2) it reads Alice's new picture, but finds Bob blocked, and both of these cases are acceptable. 

Now, suppose after some time, Alice changes her mind. She changes her photo to the old one and then unblocks Bob (we call this change scenario 2). A causally consistent read-only transaction still protects Alice's privacy after this change. With two independent causally consistent GET operations it is impossible to protect Alice's privacy in both cases. Specifically, based on the order that the application issues the two GET operations, it violates Alice's privacy either in scenario 1 or scenario 2. 

\begin{figure}
\begin{center}
\includegraphics[scale=0.092]{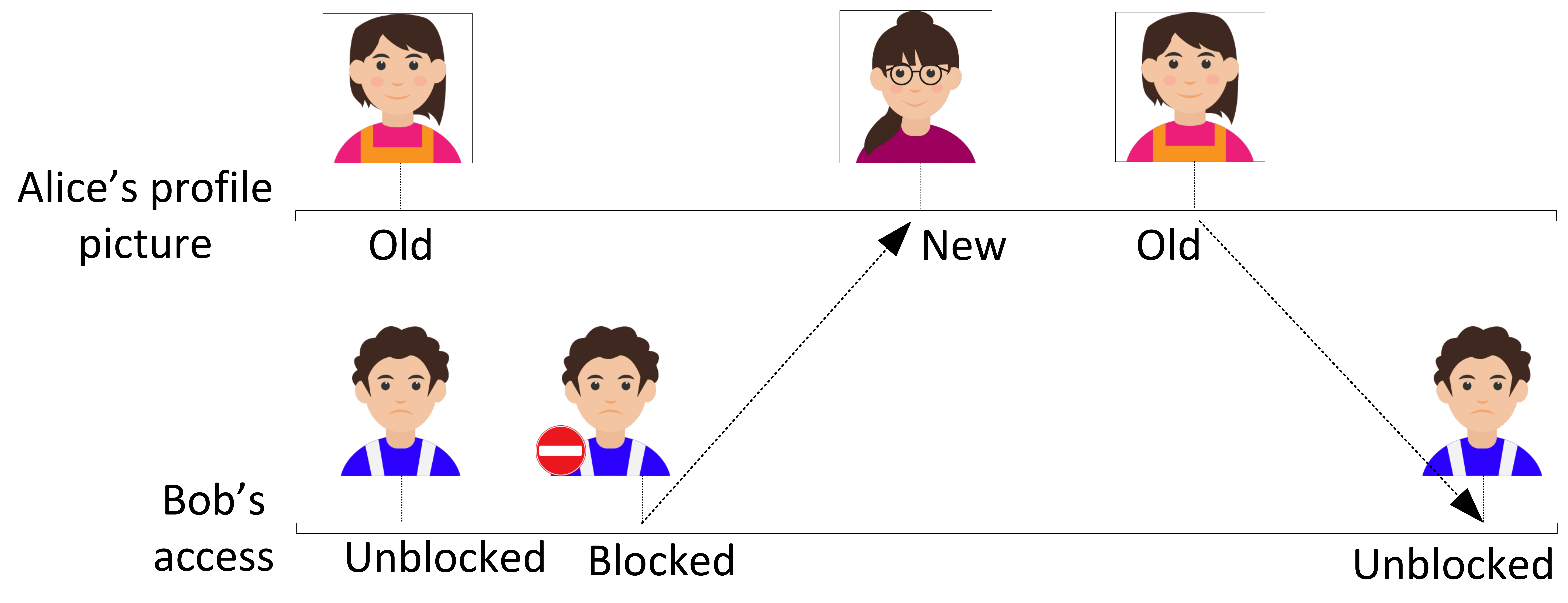}
\caption{Updating profile picture in a social network while blocking another user.}
\label{fig:rotx}
\end{center}
\end{figure}

\subsection{ROTX Algorithm}


In this section, we provide an algorithm for ROTX, that never blocks, and requires only one round of communication between the client and the servers. Algorithm \ref{alg:rotx} shows both sides of our algorithm for ROTX. Upon request of an ROTX operation from the application, the client sends \textsc{ROTX} request with the set of requested keys by the application together with its $DSV_c$ and $DS_c$ to one of the servers hosting one of the requested keys to read. In the response, the server sends the values of the requested keys, updated DSV, and the set of dependencies of the returned values. Upon receiving the response, the client updates its $DSV_c$ and $DS_c$ with the new values returned by the server, and return the received values from the server to the application.

On the server-side, upon receiving an \textsc{ROTX} request from a client, first updates it DSV by $dsv$ and $ds$ values received from the client. The server, next, uses the current value of its DSV as the snapshot vector, $sv$. Then, for each requested key $k$, the server sends a \textsc{SliceReq} request with $sv$ to the partition hosting $k$. Upon receiving response to its  \textsc{SliceReq} message, the server reads the returned value, and updates $ds$. Once the server learned the value of all requested keys, it sends the response to the client together with its new DSV and the update $ds$. 

Upon receiving a \textsc{SliceReq} request, a partition first updates the local entry of its DSV by the corresponding entry in $sv$. Next, it retrieves the most recent version $s$ of the request key such that all dependencies of the version are smaller than their corresponding entries in $sv$, and the update time of the version, $d.ut$, is less than or equal to the $sv[d.sr]$. Then, the partition returns the response to the requesting server with the value of the key and the set of dependencies.

\begin{algorithm} 
{
\footnotesize
\caption{Algorithm for ROTX}
\label{alg:rotx}
\begin{algorithmic} [1]
\STATE //at client $c$
\STATE \textbf{ROTX} (keys $kset$)
\STATE \hspace{3mm} send $\langle \textsc{ROTX} \ kset, DSV_c, DS_c \rangle$ to server
\STATE \hspace{3mm} receive $\langle \textsc{ROTXReply} \ vset, dsv, ds \rangle$
\STATE \hspace{3mm} \label{line:rotx_update_dsv} $DSV_c \leftarrow max (DSV_c, dsv)$
\STATE \hspace{3mm} \label{line:rotx_cliet_update_ds}$DS_c \leftarrow maxDS(DS_c, ds)$
\RETURN $vset$

\STATE \vspace{5mm} //at partition $p^m_n$ 
\STATE \textbf{Upon} receive $\langle \textsc{ROTX} \ kset, dsv, ds\rangle$ 
\STATE \hspace{3mm} \label{line:rotx_taking_max} $DSV^m_n \leftarrow max (DSV^m_n, dsv, ds)$ 
\STATE \hspace{3mm} $vset \leftarrow \emptyset$ 
\STATE \hspace{3mm} \label{line:rotx_setting_sv} $sv \leftarrow$ $DSV^m_n$
\STATE \hspace{3mm}  \textbf{for} each $k \in kset$ do  \COMMENT{\textit{In parallel}}
\STATE \hspace{6mm} send $\langle \textsc{SliceReq} \ k, sv \rangle$ to server
\STATE \hspace{6mm} receive $\langle \textsc{SliceReply} \  v, ds'\rangle$ 
\STATE \hspace{6mm} $vset \leftarrow vset \cup \{v\}$
\STATE \hspace{6mm} \label{line:rotx_return_ds}$ds \leftarrow maxDS(ds, ds')$
\STATE \hspace{3mm} \label{line:rotx_return}send $\langle \textsc{ROTXReply} \ vset,  DSV^m_n, ds \rangle$ to client

\STATE \vspace{5mm} //at partition $p^m_n$ 
\STATE \textbf{Upon} receive $\langle \textsc{SliceReq} \ k, sv \rangle$
\STATE \hspace{3mm} \label{line:ubdateDsvWithSv}$DSV^m_n[m] \leftarrow max(DSV^m_n[m], sv[m])$
\STATE \hspace{3mm} \label{line:obtain_sv}  obtain latest version $d$ from version chain of key $k$ s.t. 
for any $\langle i, h\rangle \in maxDS(d.ds, \{\langle d.sr, d.ut\rangle\})$, $h \leq sv[i]$
\STATE \hspace{3mm} \label{line:rotx_include_self_in_ds}$ds \leftarrow maxDS (d.ds, \{\langle d.sr, d.ut\rangle\})$
\STATE \hspace{3mm} send $\langle \textsc{SliceReply} \ d.v, ds\rangle$ back to server
\end{algorithmic}
}
\end{algorithm}

\section{Experimental Results}
\label{sec:results}

We have implemented \MYPROTOCOL protocol in a distributed key-value store called MSU-DB. MSU-DB is written in Java, and it can be downloaded from \cite{msudb}. MSU-DB uses BerkeleyDB \cite{berkeleyDb} as the storage engine. For comparison purposes, we have implemented GentleRian in the same code base. 
%
%
We run all of our experiments on AWS \cite{aws} on \texttt{c3.large} instances running Ubuntu 14.04. The specification of servers is as follows: 7 ECUs, 2 vCPUs, 2.8 GHz, Intel Xeon E5-2680v2, 3.75 GiB memory, 2 x 16 GiB Storage Capacity.

First, in Section \ref{sec:putresponse}, we investigate the effect of clock skew on PUT latency. 
In Section \ref{sec:queryampl}, we evaluate the effect of this increased PUT latency along with query amplification. 
We analyze the effectiveness of \MYPROTOCOL in reducing update visibility latency by analysis of a typical collaborative application in Section \ref{sec:updatevis}. 
Then, we evaluate the overhead of \MYPROTOCOL by comparing the throughput of \MYPROTOCOL and GentleRain in Section \ref{sec:throughput} in cases where clocks are perfectly synchronized. Finally, we evaluate the efficiency of \name ROTX protocol in Section \ref{sec:rotxResults}. 
%
%
Since our protocol is based on GentleRain, we provide a detailed comparison of our protocol with GentleRain in this section. 
We compare our protocol with other protocols in the literature in Section \ref{sec:related}.

\subsection{Response Time of PUT Operations}
\label{sec:putresponse}


To study the effect of clock skew on the response time accurately, we need to have a precise clock skew between servers. However, the clock skew between two different machines depends on many factors out of our control. To have a more accurate experiment, we consolidate two virtual servers on a single machine and impose an artificial clock skew between them.  Then, we change the value of the clock skew and observe its effect on the response time for PUT operations. A client sends PUT requests to the servers in a round robin fashion. Since the physical clock of one server is behind the physical clock of the other server, half of the PUT operations will be delayed by the GentleRain. On the other hand, \MYPROTOCOL does not delay any PUT operation and processes them immediately.  We compute the average response time for  PUT operations with value size 1K. Figure \ref{fig:ck_on_put}-(a) shows that average response time for PUT operation in GentleRain grows as the clock skew grows, while the average response time in \MYPROTOCOL is independent of clock skew.

\begin{figure}[htp]
  \centering
  \subfigure[]{\includegraphics[scale=0.37]{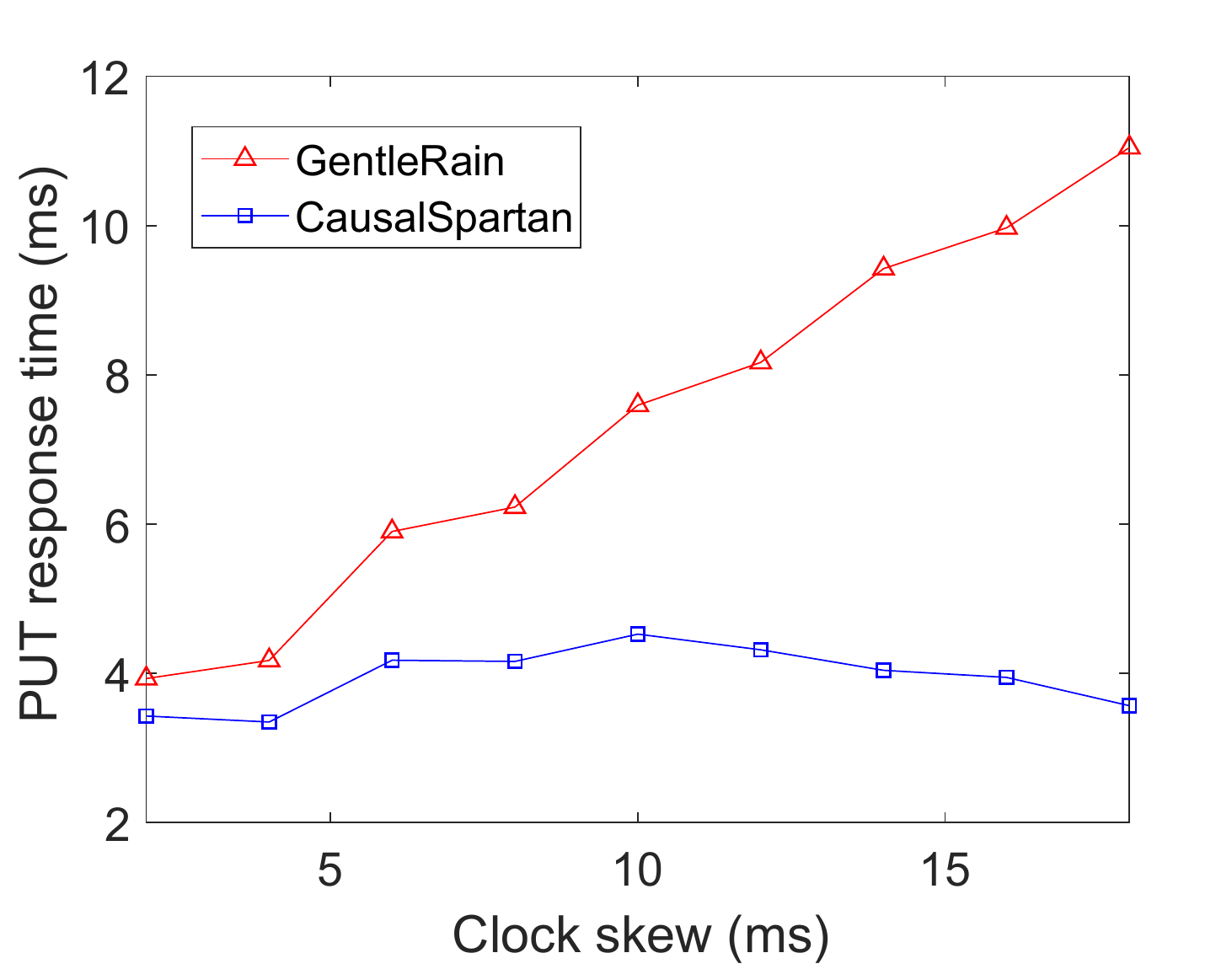}}\quad
  \subfigure[]{\includegraphics[scale=0.37]{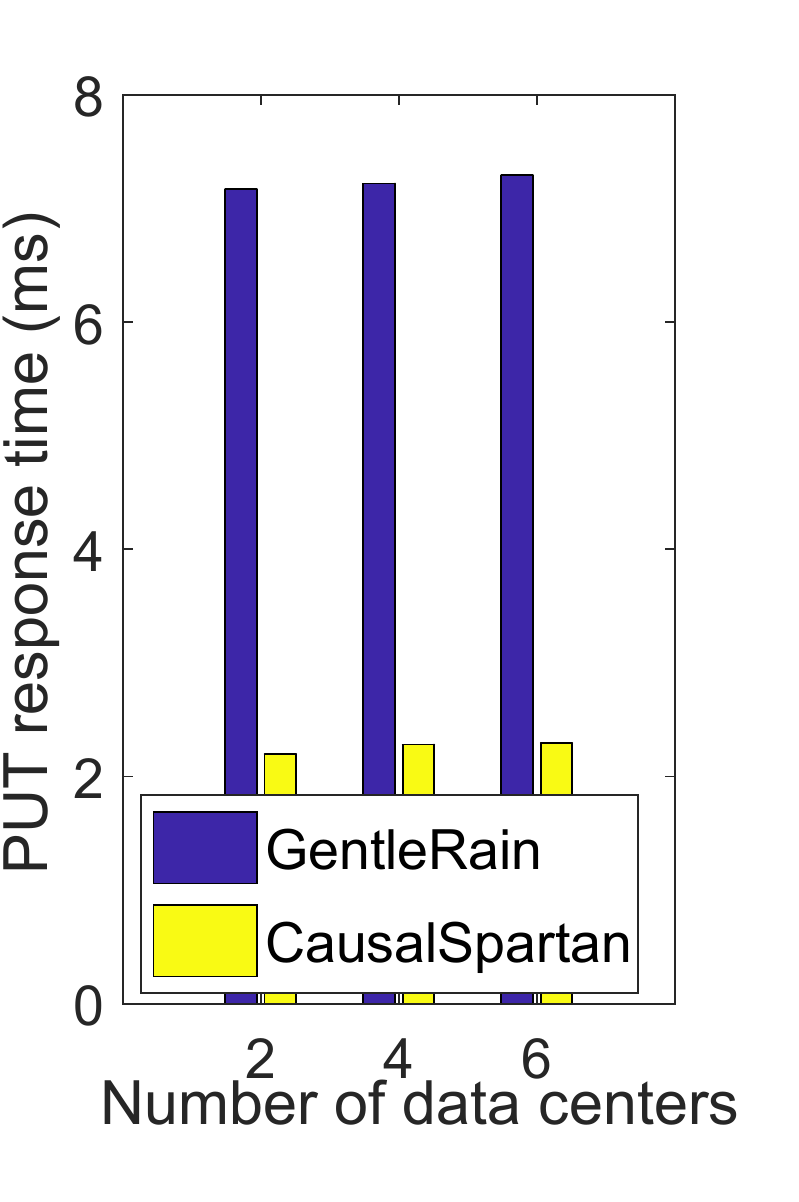}}
 \caption{The effect of clock skew on PUT response time: a) with accurate artificial clock skew when servers are running on the same physical machine b) without any artificial clock skew when servers are running on different physical machines synchronized with NTP.}
 \label{fig:ck_on_put}
\end{figure}

Next, we do the same experiment when the servers are running on two different machines without introducing any artificial clock skew.
 We run NTP \cite{ntp} on servers to synchronize physical clocks. 
 In other words, this simulates the exact condition that is expected to happen in an ideal scenario where we have two partitions within the same physical location.
 In this setting,  the client sends PUT requests to these servers in a round robin manner. 
 Figure \ref{fig:ck_on_put}-(b) shows average delay of PUT operations in GentleRain and \MYPROTOCOL. We observe that in this case, the effect of PUT latency is visible even though the servers are physically collocated and have clocks that are synchronized with NTP.

\subsection{Query Amplification}
\label{sec:queryampl}

In this section, we want to evaluate the effectiveness of \MYPROTOCOL with query amplification.
 As explained in Section \ref{sec:query}, a single user request can generate many internal queries. We define \textit{query amplification factor}  as the number of internal queries that are generated for a single request. In this section, unlike the previous section where we computed average response time for the queries, we compute the average response time for \textit{requests} each of which contains several queries specified by the query amplification factor.

Now, we want to study how the average response time changes as the query amplification factor changes.
 We simulate the scenario where the user sends requests to a web server, and each request generates multiple internal PUT operations. The web server sends PUT operations to partitions in a round robin fashion. The user request is satisfied once all  PUT operations are done. 
 We compute the average response time for different query amplification factors.

 Figure \ref{fig:ck_on_qAmp} shows average response time versus query amplification factor when we have two partitions for different clock skews. As query amplification factor increases, the response time in both GentleRain and \MYPROTOCOL increases. This is expected since each request now contains more work to be done. However, the rate of growth in \MYPROTOCOL is significantly smaller. 
 For example, for only 2 (ms) clock skew, the response time of a request with amplification factor 100 in GentleRain is 4 times higher than that in \name. Note that in practical systems higher clock skews are possible \cite{ck100, facebookClockSkew}. For example, clock skew up to $100$ (ms) is possible when the underlying network suffers from asymmetric links \cite{ck100}. In this case, for a query amplification factor of 100, the response time of GentleRain is 35 times higher than \name.

For results shown in Figure \ref{fig:ck_on_qAmp}, we used a controlled artificial clock skew to study the effect of clock skew accurately. Figure \ref{fig:ck_on_aAmp_noSim}-(a) shows the effect of amplification factor on request response time when there is no artificial clock skew, and servers are synchronized with NTP. It shows how real clock skew between synchronized servers that use NTP affects request response time. For instance, for query amplification factor 100, our experiments show that the response time of GentleRain is 3.89 times higher than that of \name. Figure \ref{fig:ck_on_aAmp_noSim}-(b) also shows the client request throughput in GentleRain and \name for different query amplification factor. 

\begin{figure}
\begin{center}
\includegraphics[scale=0.27]{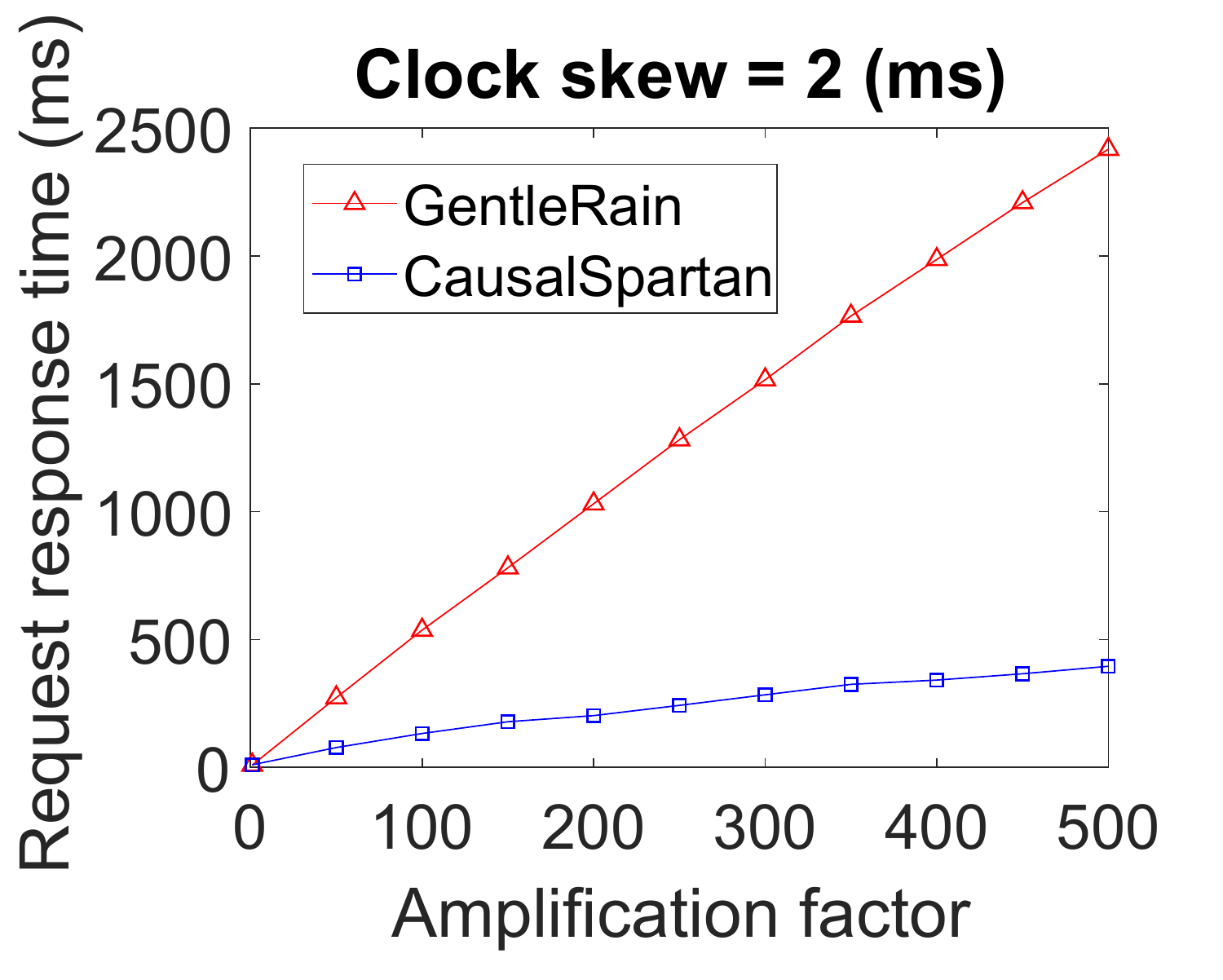}
\includegraphics[scale=0.27]{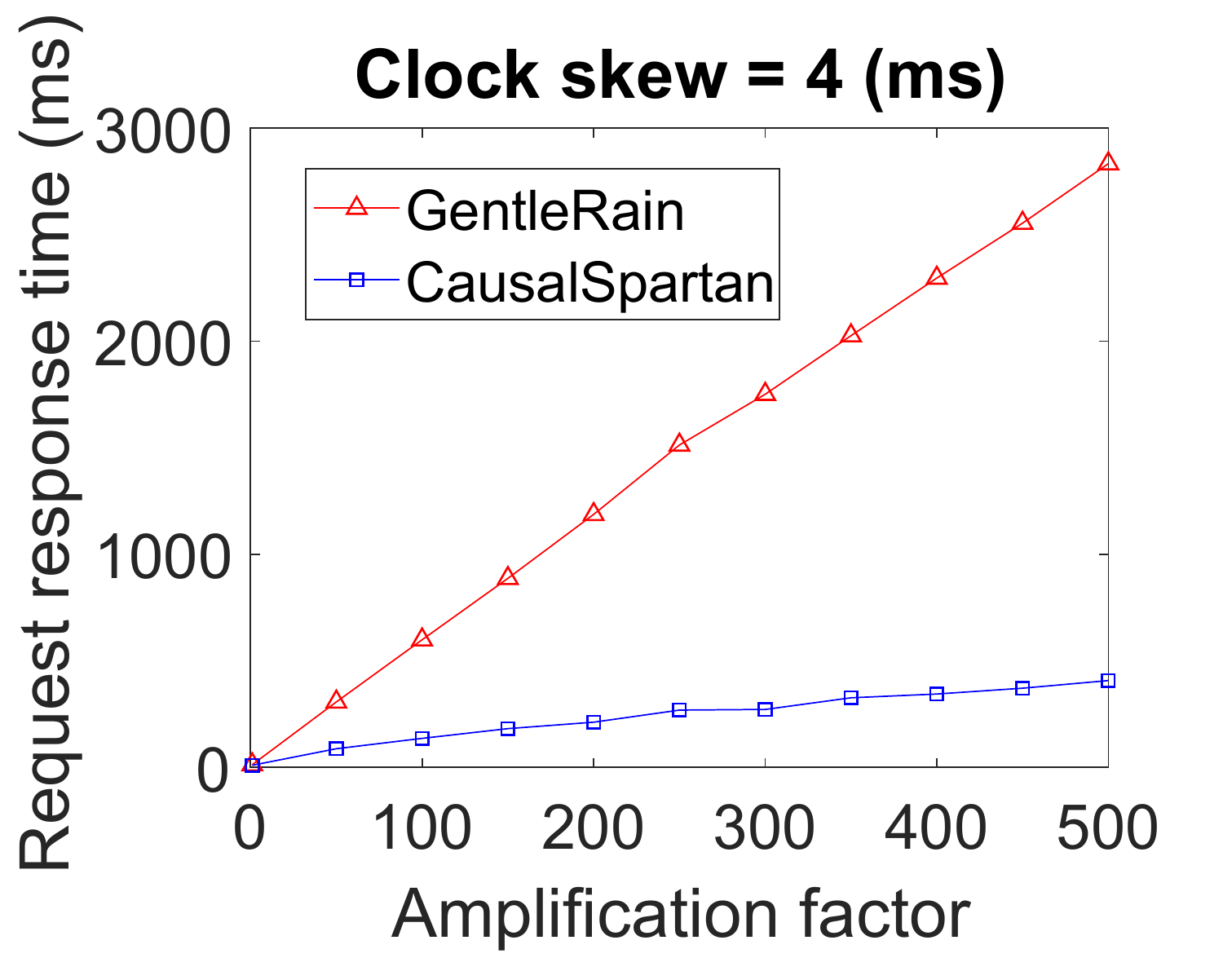}
\includegraphics[scale=0.27]{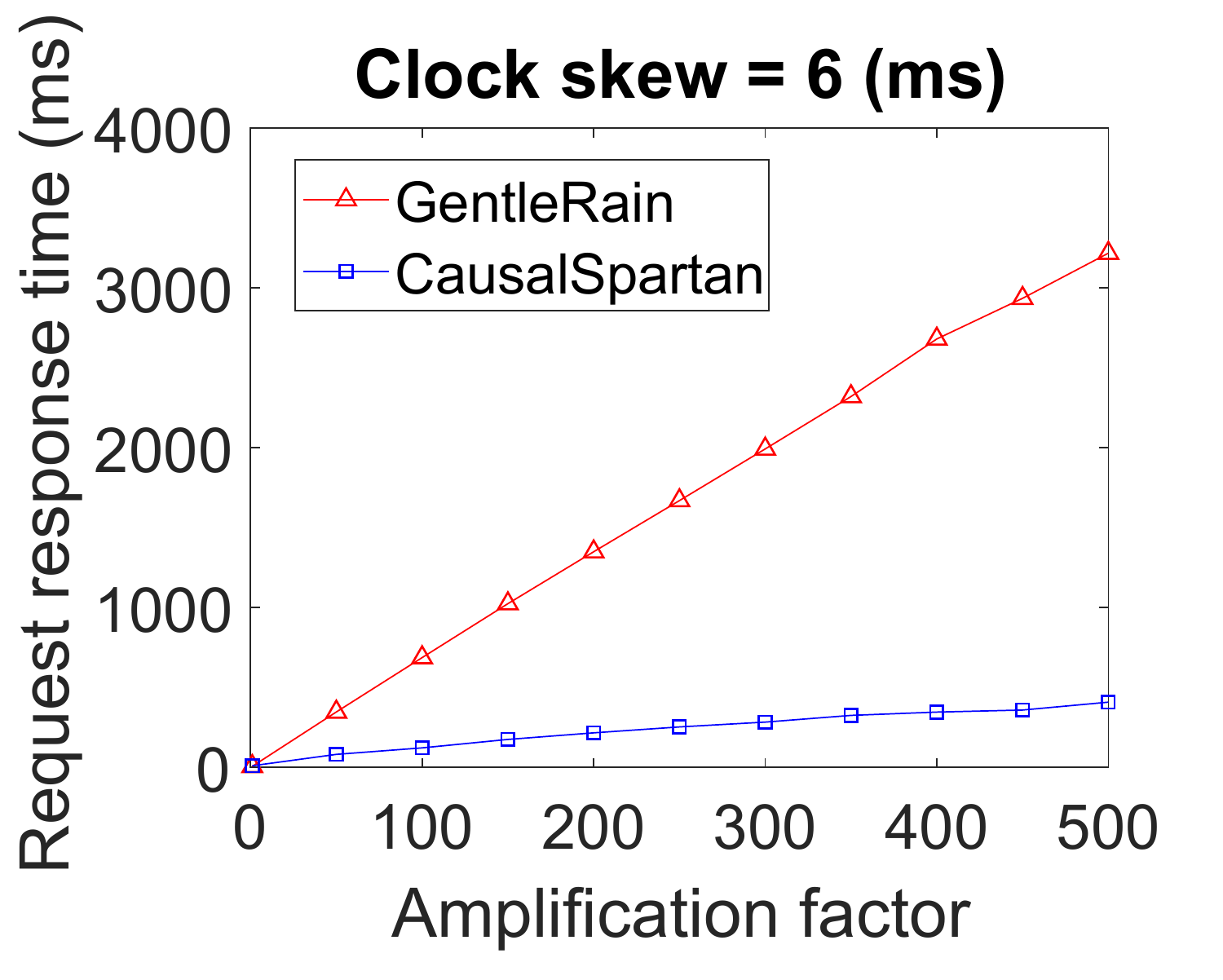}
\includegraphics[scale=0.27]{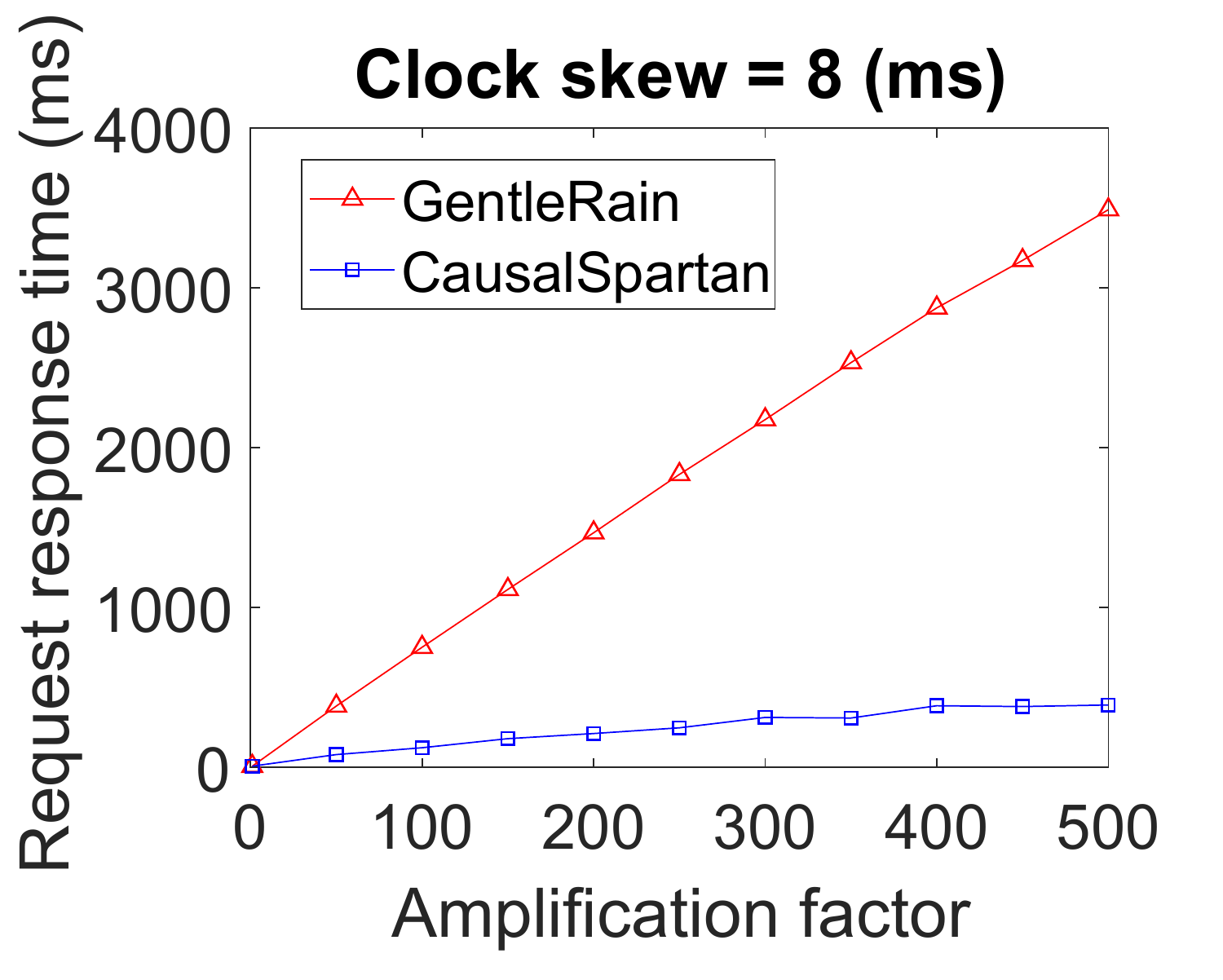}
\caption{The effect of different values of clock skew on request response time for different query amplification factor in GentleRain and \name.}
\label{fig:ck_on_qAmp}

\end{center}
\end{figure}

\begin{figure}[htp]
  \centering
  \subfigure[Request response time]{\includegraphics[scale=0.28]{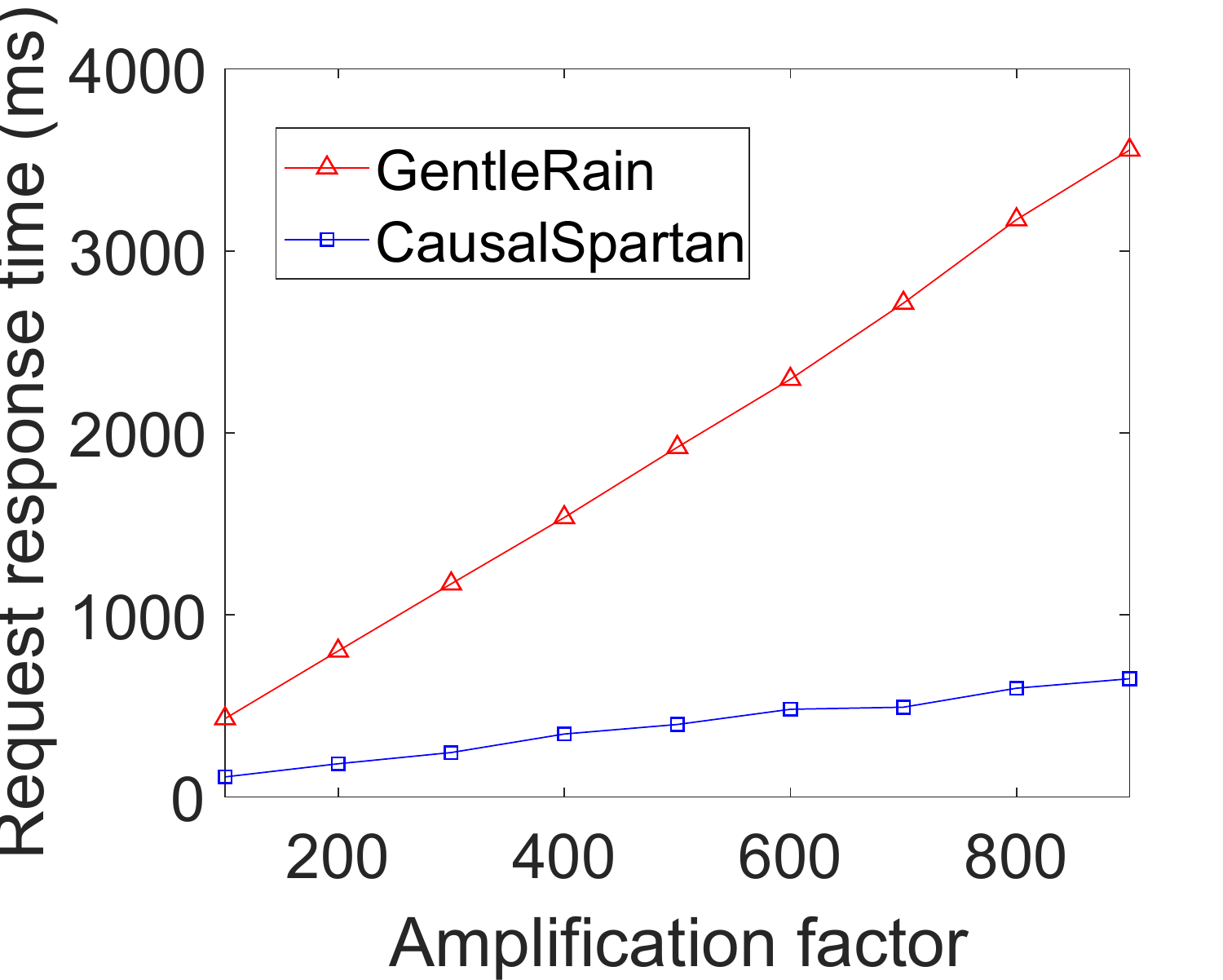}}\quad
  \subfigure[Request throughput]{\includegraphics[scale=0.28]{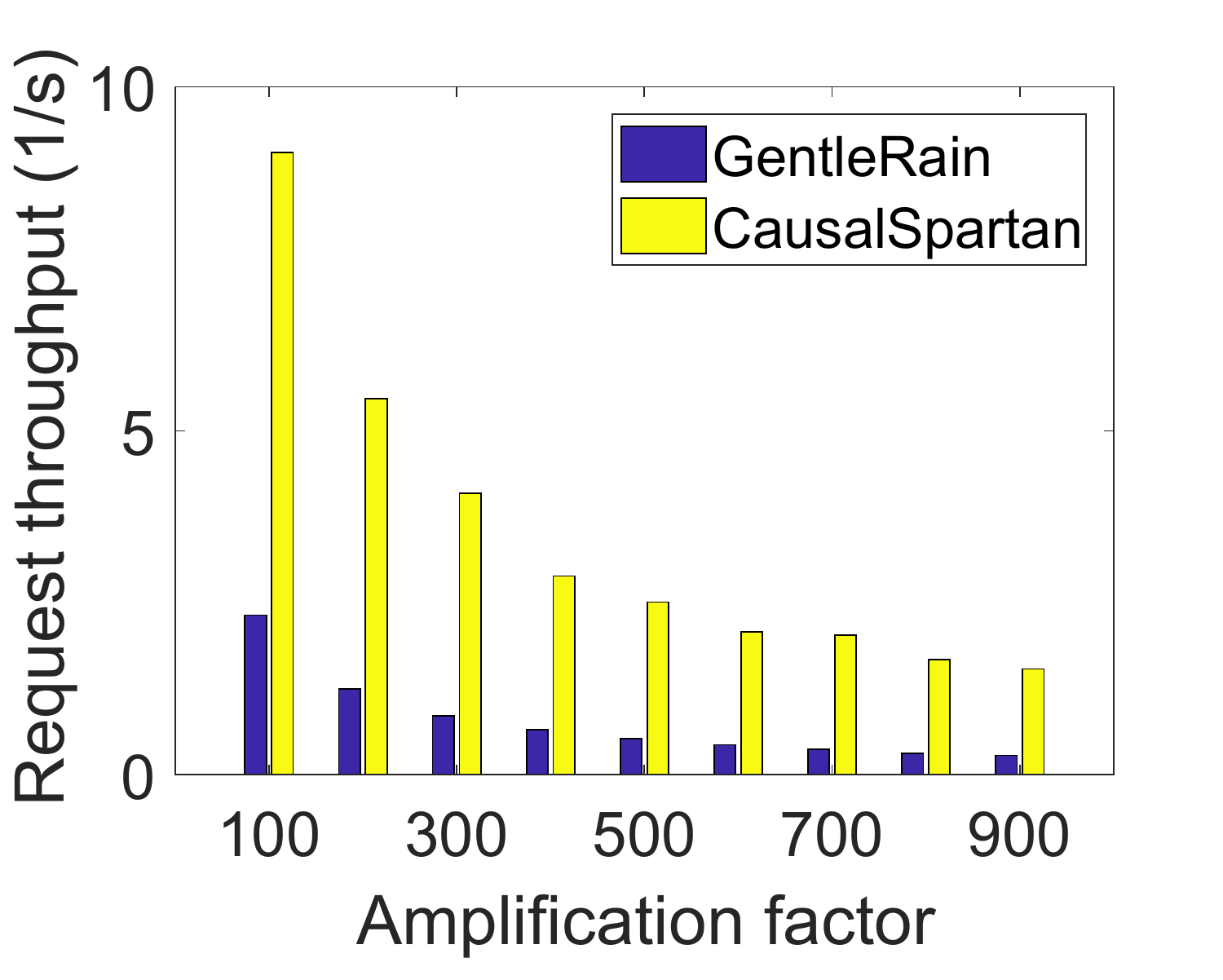}}
 \caption{The effect of amplification factor on client request response time and throughput when we have 8 partitions and 6 data centers, and all partitions are synchronized by NTP without any artificial clock skew.}
 \label{fig:ck_on_aAmp_noSim}
\end{figure}

\subsection{Update Visibility Latency}
\label{sec:updatevis}

In this section, we want to focus on update visibility latency which is another important aspect of a distributed data store. Update visibility latency is the delay before an update becomes visible in a remote replica. Update visibility latency is ultimately important for today's cloud services, as even few milliseconds matters for many businesses \cite{facebook}. 
In GentleRain, only one slow replica adversely affects the whole communication in the system by increasing the update visibility latency. In \name, we use a vector (DSV) with one entry for each data center instead of a single scalar (GST) as used in GentleRain. As a result, a long network latency of a data center only affects the communication with that specific data center and does not affect independent communication between other data centers. 

To investigate how \name \ performs better than GentleRain regarding update visibility latency, we do the following experiment: We run a data store consisting of three data centers $A$, $B$, and $C$. Client $c_1$ at data center $A$ communicates with client $c_2$ at data center $B$ via key $k$ as follows: client $c_1$ keeps reading the value of key $k$ and increments it whenever finds it an odd number. Similarly, client $c_2$ keeps reading the value of key $k$, and increments it whenever finds it an even number. The locations of data centers $A$ and $B$ are fixed, and they are both in California. We change the location of data center $C$ to see how its location affects the communication between $c_1$ and $c_2$. Table \ref{tab:ping_times} shows the round trip times for different locations of data center $C$.

\begin{table} [h]
\vspace*{-2mm}
\begin{center}
\caption{Round trip times.}
\label{tab:ping_times}
\begin{tabular}{ |l|l|l| } 
\hline
 Location of $C$ & RTT to  $A$ (ms)& RTT to  $B$ (ms)\\
\hline
California  & 1.1709114 & 0.3201521    \\
Oregon  & 21.8699663 &    20.6107391\\
Virginia  & 67.0469505 &61.2305881    \\
Ireland  & 138.2809544 &139.3212938    \\
Sydney  & 159.0899451 &    158.4004238\\
Singapore  &  175.6392972&    175.6030464\\
\hline
\end{tabular}
\end{center}
\vspace*{-4mm}
\end{table}

We measure the update visibility latency as the time elapsed between writing a new update by a client and reading it by another client. We use the timestamp of updates to compute the update visibility latency. Thus, because of clock skew, the values we compute are an estimation of actual update visibility latency.  {Figure \ref{fig:uvl}-(a) shows the update visibility latency is lower in \name than that in GentleRain. Also, the update visibility latency in GentleRain increases as the network delay between data center $C$ and $A$/$B$ increases. For example, when data center $C$ is in Oregon the update visibility latency in \name is 83\% lower than that in GentleRain. This value increases to 92\% when data center $C$ is in Singapore}.  Figure \ref{fig:uvl}-(b) shows the throughput of communication between clients as the number of updates by a client per second. The location of data center $C$ affects the throughput of GentleRain, while the throughput of \name is unaffected.

\begin{figure}[htp]
  \centering
  \subfigure[Update visibility latency]{\includegraphics[scale=0.28]{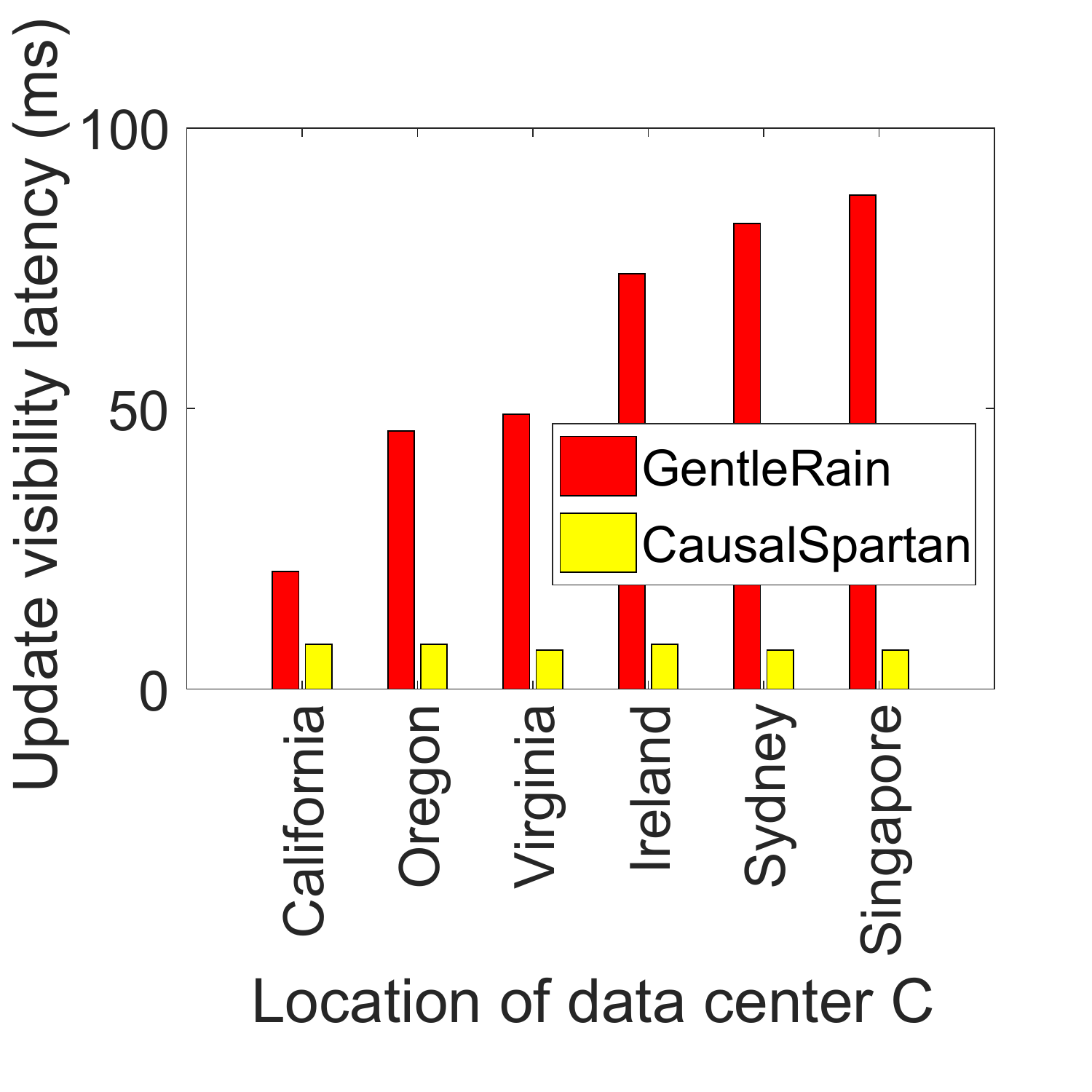}}\quad
  \subfigure[Update throughput]{\includegraphics[scale=0.28]{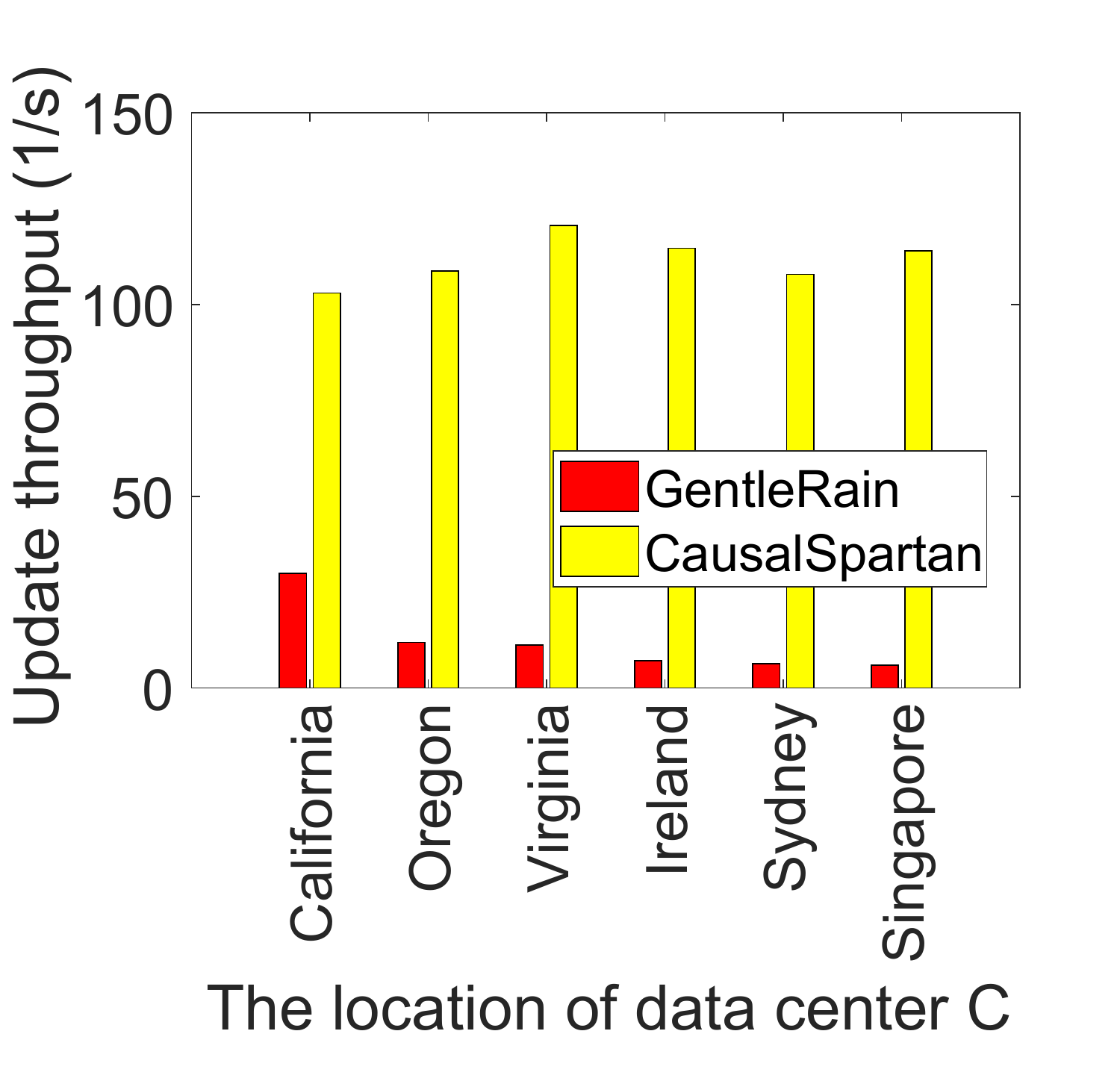}}
 \caption{How the location of an irrelevant data center adversely affects a collaborative communication in GentleRain, while \name is unaffected.}
 \label{fig:uvl}
\end{figure}

\subsection{Throughput Analysis and Overhead of \MYPROTOCOL}
\label{sec:throughput}

\MYPROTOCOL utilizes HLC to eliminates the PUT latency and utilizes DSV to improve update visibility latency. In this section, we analyze the overhead of these features in the absence of clock skew, query amplification or the collaborative nature of the application. In particular, we analyze the throughput when GET/PUT operations by the client are unrelated to each other. 

Since the two features of \MYPROTOCOL, the use of HLC and the use of DSV are independent, we analyze the throughput with just the use of HLC and with both features. Figure \ref{fig:throughput} demonstrates the throughput of GET and PUT operations. We observe that when GET/PUT operations are independent, then throughout of \MYPROTOCOL is 5\% lower than GentleRain. However, the throughput of \MYPROTOCOL with just HLC (and not DSV) is virtually identical to that of GentleRain. We note that additional experiments comparing \MYPROTOCOL with just using HLC are available in \cite{tech}. They show that just using HLC does not add to the overhead of \MYPROTOCOL.

Even though there is a small overhead of \MYPROTOCOL when GET/PUT operations are unrelated, we observe that with query amplification (that causes some  PUT operations to be delayed in GentleRain), the request throughput of \MYPROTOCOL is higher (cf. Figure \ref{fig:ck_on_aAmp_noSim}). Thus, while the throughput of basic PUT/GET operations is slightly higher in GentleRain, the throughput of actual requests issued by the end users is expected to be higher in \name.

\begin{figure}[htp]
  \centering
  \subfigure[PUT]{\includegraphics[scale=0.28]{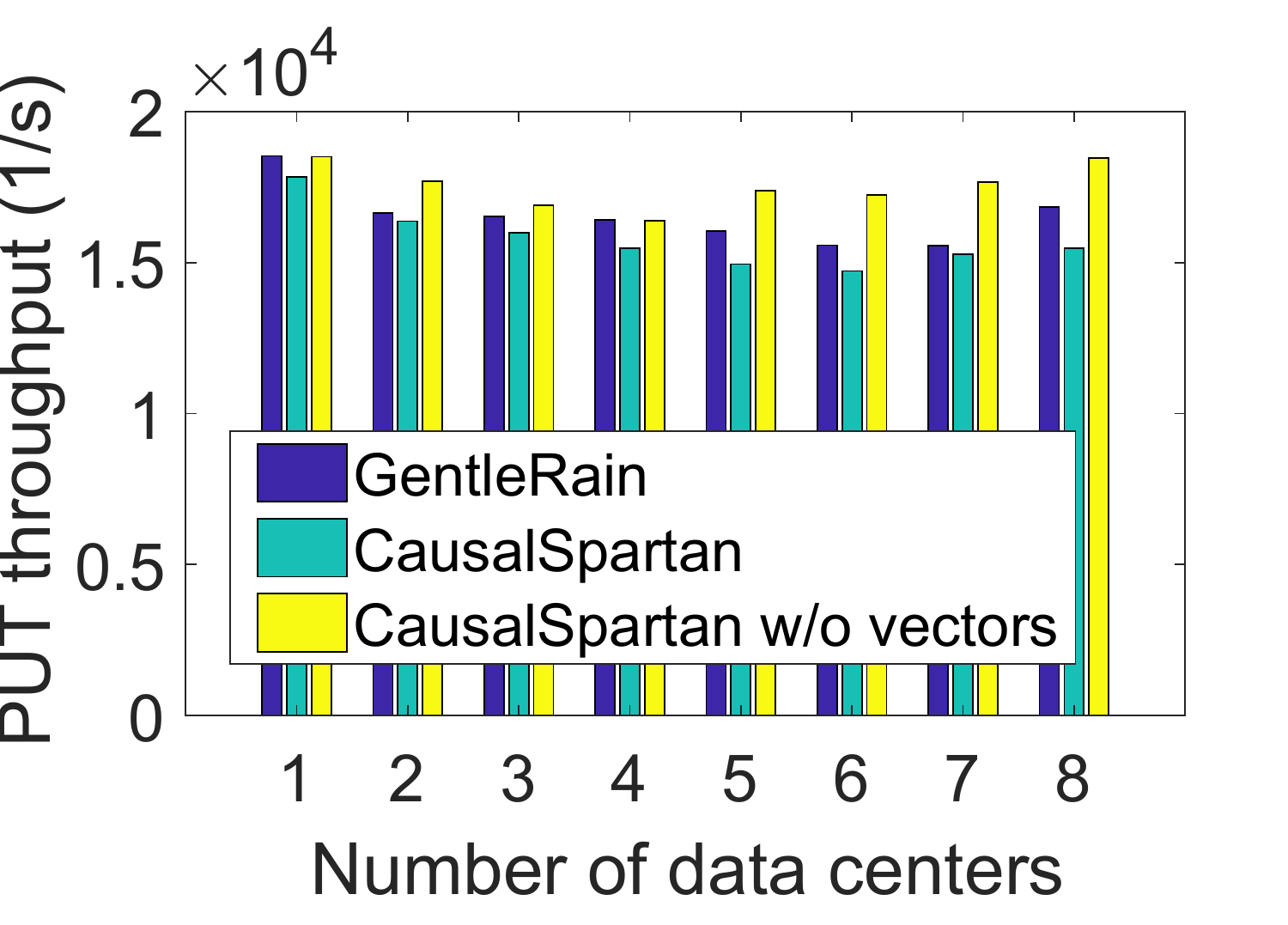}}\quad
  \subfigure[GET]{\includegraphics[scale=0.28]{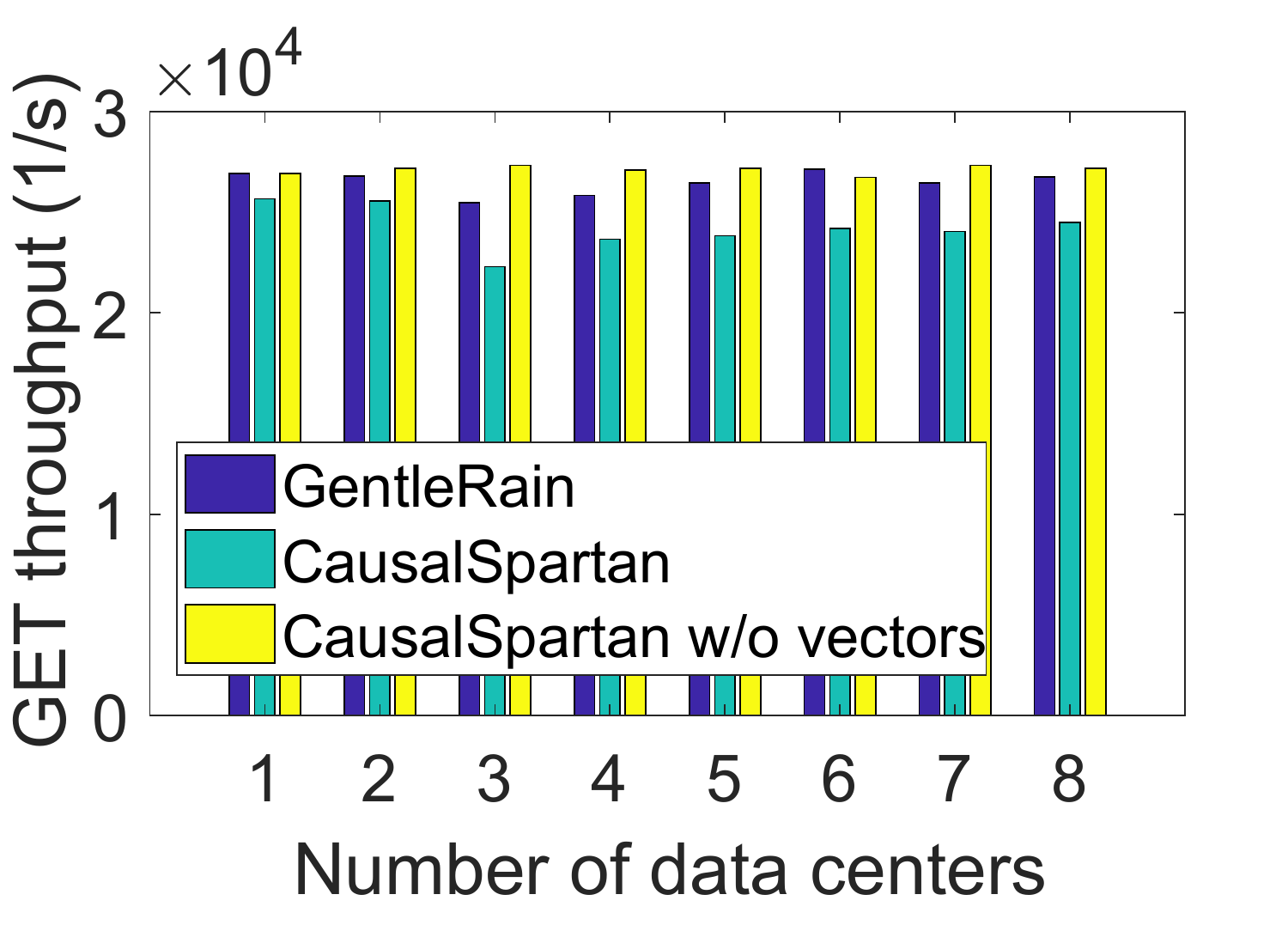}}
 \caption{The basic PUT/GET operations throughput in GentleRain and \name.}
 \label{fig:throughput}
\end{figure}

\subsection{Performance of ROTX operations}
\label{sec:rotxResults}

In this section, we compare the performance of ROTX algorithm provided in Section \ref{sec:rotx} with ROTX (named GET-ROTX in \cite{gentleRain}) of GentleRain \cite{gentleRain}. 
The sketch of GentleRain's ROTX is as follows: like PUT operations, the client includes its $dt$ with its GET-ROTX operations and sends it to one of the servers as the coordinator. Upon receiving an ROTX operation, if $|dt - GST|$ is smaller than a threshold, sever waits for $GST$ to goes higher than $dt$. The server sets the snapshot time as its $GST$ and sends requests to all partitions hosting some of the requested keys. In its request, the server includes the snapshot time. The receiving partitions return versions with timestamps smaller than the snapshot time. If  $|dt - GST|$ is higher than the threshold, the server runs the Eiger \cite{eiger}. Since (1) \cite{gentleRain} reports that this backup option was never triggered (2) there is a significant increase in metadata  and (3) there is up to two more rounds of communication between the client and servers if one intends to use \cite{eiger} for backup, in our experiments, we use threshold values that guarantee the that the backup option was not necessary.

In Section \ref{sec:updatevis}, we saw how waiting for GST can increase the update visibility latency. In the case of read-only transactions, waiting for GST in GentleRain also increases the response time. Specifically, since GentleRain blocks read-only transactions until GST is high enough, any delay in GST leads to higher response times for the clients and reduced throughput in the system. Latency in GST can be caused by slow replicas (as we saw in Section \ref{sec:updatevis}), slow partitions inside the local replica, or any other delay in GST calculation.

To evaluate the response time of transactions in the presence of a slow partition, we set up the experiment as follows: 
(1) We consider a set of hot keys that are constantly written by several clients, (2) We have some clients that perform GET with 50\% probability and ROTX with 50\% probability on hot keys. 
(3) We simulate one partition to be slow by intentionally delaying sending messages from it. Specifically, whenever the slow partition wants to send a message to other servers or the clients, we schedule it to be sent after a certain amount of time. 

Figure \ref{fig:rotxSlowPartition}-(a) shows the latency of 200 ROTX operations with size 3 (i.e. each ROTX reads 3 keys) for GentleRain in a data center with 6 partitions where one  partition is slowed by 100 (ms). The circles show the response time of ROTX operations that do not involve any key hosted by the slow partition. Diamonds, on the other hand, show the response time of the ROTX operations involving a key hosted by the slow partition. Figure \ref{fig:rotxSlowPartition}-(b) shows the same experiment results for the \name. As it is clear from the plots, in \name, ROTX operations that do not involve the slow partition are not affected by the 100 (ms) slowdown. On the other hand, in GentleRain, some of the ROTX operations that do not involve the slow partition are also affected by the slowdown. The response time of some of ROTX operations of GentelRain involving the slow partition is also significantly higher than that of \name. For 100 (ms) slowdown, the average latency of ROTX operations not involving the slow partition is 29 (ms) for GentelRain while it is 8 (ms) for \name. For ROTX operations involving the slow partition, the response time is 169 and 142 for GentleRain and \name, respectively. The improvement of \name is more significant regarding the \textit{tail-latency} \cite{book}. Figure \ref{fig:rotxSlowPartition}-(c) shows the empirical CDFs for both cases of ROTX operations for GentleRain and \name. As it is marked in Figure \ref{fig:rotxSlowPartition}-(c), for 100 (ms) slowdown, the 90th percentile of ROTX operations not involving the slow partition is 129 (ms) for GentelRain while it is only 18 (ms) for \name (i.e., 86.04\% improvement). For ROTX operations involving the slow partition, the 90th percentile response time is 289 (ms) and   203 for GentleRain and \name, respectively (i.e. 29.7\% improvement). Figure \ref{fig:rotxSlowPartition} also shows results for the case with 500 (ms) slowdown. The improvement of \name is more significant for higher slowdowns. For 500 (ms) slowdown, 90th percentile improvement goes to 93.88\%. We note that for higher percentiles also, \name shows clear improvement. For instance, for the 99th percentile, \name leads to 95.99\% and 37.70\% for not involving and involving cases, respectively, compared with GentleRain \cite{gentleRain}. Note that tail latency is very important, as it directly affects the users' experience. Although it affects a small group of clients (e.g. 1 percent of clients in case of 99th percentile), this small group are usually the most valuable users that perform most of the requests in the system \cite{book}. For this reason, many companies such as Amazon describe response time requirements for their services with 99.9th percentile \cite{book}. 

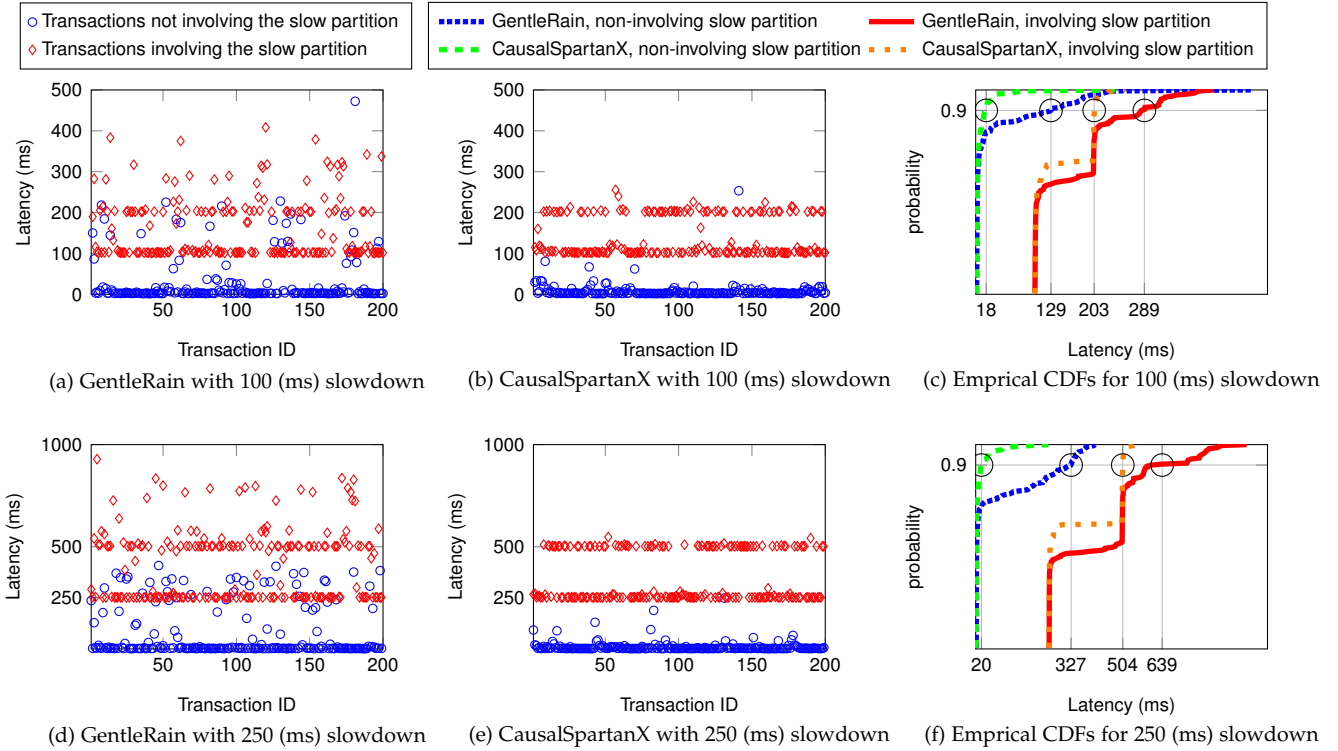
\begin{figure*}
  \centering
  
\begin{tikzpicture}
\begin{groupplot}[
group style={group size=3 by 3, 
horizontal sep= 2cm, 
vertical sep= 2cm, 
group name=myplot},
enlargelimits=false,
ytick distance=100,
xmin=1,xmax=200,
title style = {at={(0.5,-0.6)}, font=\footnotesize}
]

\nextgroupplot [title = (a) GentleRain with 100 (ms) slowdown, xlabel=Transaction ID , ylabel= Latency (ms), legend pos= south east, legend style = {at = {(1.1,1.1)}}, 
ymin=0,ymax=500]
\addplot%
    table[x=tran, y=G_L_F, col sep=comma] {data/delay_100.txt};\label{plots:G_L_F_100}
    \addlegendentry{Transactions not involving the slow partition};    
    \addlegendentry{Transactions involving the slow partition};

\addplot%
    table[x=tran, y=G_L_T, col sep=comma] {data/delay_100.txt};\label{plots:G_L_T_100}

\nextgroupplot [title = (b) \name with 100 (ms) slowdown,  xlabel=Transaction ID, ylabel= Latency (ms), legend pos= south east, 
ymin=0,ymax=500]
\addplot%
    table[x=tran, y=C_L_F, col sep=comma] {data/delay_100.txt};
\addplot%
    table[x=tran, y=C_L_T, col sep=comma] {data/delay_100.txt};

\nextgroupplot [title = (c) Emprical CDFs for 100 (ms) slowdown,  xlabel=Latency (ms), ylabel= probability, legend pos= south east,
grid=both,
legend columns=2,
legend style = {at = {(1,1.1)}}, 
ytick = {0.9},
xtick = {129, 289,    18,    203}, 
ymin=0,ymax=1,
xmin=0,xmax=500]
\addplot[blue, densely dotted, line width=2pt]%
    table[x=G_L_F, y=p, col sep=comma] {data/cdf_100.csv};
\addplot[red, solid, line width=2pt]%
    table[x=G_L_T, y=p, col sep=comma] {data/cdf_100.csv};
\addplot[green, dashed,  line width=2pt]%
    table[x=C_L_F, y=p, col sep=comma] {data/cdf_100.csv};
\addplot[orange, loosely dotted, line width=2pt]%
    table[x=C_L_T, y=p, col sep=comma] {data/cdf_100.csv};
\addlegendentry{GentleRain, non-involving slow partition};    
\addlegendentry{GentleRain, involving slow partition};  
\addlegendentry{\name, non-involving slow partition};  
\addlegendentry{\name, involving slow partition}; 
\draw(axis cs:129,0.9) circle[red, radius=1.5mm];
\draw(axis cs:289,0.9) circle[red, radius=1.5mm];
\draw(axis cs:18,0.9) circle[red, radius=1.5mm];
\draw(axis cs:203,0.9) circle[red, radius=1.5mm];
    
\nextgroupplot [title = (d) GentleRain with 250 (ms) slowdown, xlabel=Transaction ID , ylabel= Latency (ms), ytick = {250, 500, 1000},
ymin=0,ymax=1000]
\addplot%
    table[x=tran, y=G_L_F, col sep=comma] {data/delay_250.txt};\label{plots:G_L_F_250}
   
\addplot%
    table[x=tran, y=G_L_T, col sep=comma] {data/delay_250.txt};\label{plots:G_L_T_250}

\nextgroupplot [title = (e) \name with 250 (ms) slowdown,  xlabel=Transaction ID, ylabel= Latency (ms), legend pos= south east, ytick = {250, 500, 1000},
ymin=0,ymax=1000]
\addplot%
    table[x=tran, y=C_L_F, col sep=comma] {data/delay_250.txt};
\addplot%
    table[x=tran, y=C_L_T, col sep=comma] {data/delay_250.txt};

\nextgroupplot [title = (f) Emprical CDFs for 250 (ms) slowdown,  xlabel=Latency (ms), ylabel= probability, legend pos= south east,
grid=both,
legend columns=2,
legend style = {at = {(1,1.1)}}, 
ytick = {0.9},
xtick = {327,    639,    20,    504}, 
ymin=0,ymax=1,
xmin=0,xmax=1000]
\addplot[blue, densely dotted, line width=2pt]%
    table[x=G_L_F, y=p, col sep=comma] {data/cdf_250.csv};
\addplot[red, solid, line width=2pt]%
    table[x=G_L_T, y=p, col sep=comma] {data/cdf_250.csv};
\addplot[green, dashed,  line width=2pt]%
    table[x=C_L_F, y=p, col sep=comma] {data/cdf_250.csv};
\addplot[orange, loosely dotted, line width=2pt]%
    table[x=C_L_T, y=p, col sep=comma] {data/cdf_250.csv};
\draw(axis cs:20,0.9) circle[red, radius=1.5mm];
\draw(axis cs:327,0.9) circle[red, radius=1.5mm];
\draw(axis cs:504,0.9) circle[red, radius=1.5mm];
\draw(axis cs:639,0.9) circle[red, radius=1.5mm];

\end{groupplot}
      
\end{tikzpicture}
\caption{The effect of slowdown of one partition on the latency of ROTX operations. Each ROTX operations reads 3 keys.}
\label{fig:rotxSlowPartition}
\end{figure*}

\section{Impossibility Results}
\label{sec:impossibility}
In this section, we want to focus on the sticky clients assumption (i.e., client always access only one data center) made in our protocol as well as other protocols \cite{cops, eiger, orbe, gentleRain}. In Section \ref{sec:formalCausalConsistency}, we first formalize the requirements provided in Section \ref{sec:causalConsistency}. Having these femoral definitions is necessary to prove the impossibility result provided in Section \ref{sec:stickyProof}.

\subsection{Formalization of Requirements of Section \ref{sec:causalConsistency}}
\label{sec:formalCausalConsistency}
The definition of causal consistency depends upon the nature of operations allowed on the key-value store. It is possible that the addition of an operation violates the causal consistency provided by the key-value store. For example, if the key value store only supported GET operation and causal consistency is satisfied, adding a new operation such as ROTX can potentially violate causal consistency as the addition of ROTX creates new constraints that have to be satisfied as far as causal consistency is concerned. For this reason, the definitions in this paper are parameterized with a set of operations permitted on the key-value store. This approach allows us to make the definition generic so that if new operations (e.g., read/write transactions) are added, then the definition can be extended to them.  It also allows us to make the proofs in a modular fashion where we can prove correctness with respect to GET and ROTX separately and conclude the correctness of the system that supports both.


Next, we define the notion of visibility for an operation that captures whether a given version (or some concurrent/more recent version) is returned by the operation. 

\begin{definition} [Visibility]
\label{def:visibility_for_get}
We say version $v$ of key $k$ is visible for set of operations $O$ to client $c$ iff any operation in $O$ reading $k$ performed by client $c$ returns $v'$ such that $v' = v$ or $\neg (v \deps v')$.  
\end{definition}

 Note that in the definition above, $O$ is a set of operation types. For instance, a possible $O$ can be $\{GET\}$. In this case, visible for $O$ means, any $GET(k)$ operation performed by a client returns $v'$ such that $v' = v$ or $\neg (v \deps v')$. Instead of $\{GET\}$, $O$ could be $\{GET, ROTX\}$. Now, visible for $O$ means, any $GET(k)$ or $ROTX(K)$ that reads $k$ returns $v'$ such that $v' = v$ or $\neg (v \deps v')$. This way, we can flexibly define visibility for different sets of operations.

Using visibility, we define causal consistency as follows: 

\begin{definition} [Causal Consistency]
\label{def:causal}
Let $k_1$ and $k_2$ be any two arbitrary keys in the store. Let $v_1$ be a version of key $k_1$, and $v_2$ be a version of key $k_2$ such that $v_1 \deps v_2$. Let $O$ and $R$ be two sets of operations. We say 
the store is causally consistent for set of operations $O$ with set of reader operations $R$, if for any client $c$ conditions below hold: 
\begin{itemize}
    \item Any version written by client $c$ is visible for $O$ to  $c$.
    \item Once $c$ reads a version by one of operations of $R$, the version remains visible for $O$ to c.
    \item If $c$ that has read $v_1$ by one of operations of $R$, $v_2$ is visible for $O$ to  $c$.
\end{itemize}

\end{definition}


In the above definitions, we ignore the possibility of conflicts in writes. Conflicts occur when we have two writes on the same key such that there is no causal dependency relation between them. For example, when two clients independently write to a key without reading the update made by the other client.  

\begin{definition} [Conflict]
Let $v_1$ and $v_2$ be two versions for key $k$. We call $v_1$ and $v_2$ are conflicting iff $\neg (v_1 \deps v_2)$ and $\neg (v_2 \deps v_1)$. (i.e., none of them depends on the other.)
\end{definition}

In case of conflict, we want a function that resolves the conflict. Thus, we define conflict resolution function as $f(v_1, v_2)$ that returns one of $v_1$ and $v_2$ as the winner version. If $v_1$ and $v_2$ are not conflicting, any $f$ returns the latest version with respect to the causal dependency, i.e., if $v_1 \deps v_2$ then $f(v_1, v_2) = v_1$. Now, we define the notion of visibility that also captures conflicts: 

\begin{definition} [Visibility+]
\label{def:visibility+}
We say version $v$ of key $k$ is visible+ for set of operations $O$ for conflict resolution function $f$ to client $c$ iff any operation of $O$ performed by client $c$ reading $k$ returns $v'$ such that $v' = v$ or $v' = f(v, v')$.  
\end{definition}

\begin{definition} [Causal+ Consistency]
\label{def:causal+}
Let $k_1$ and $k_2$ be any two arbitrary keys in the store. Let $v_1$ be a version of key $k_1$, and $v_2$ be a version of key $k_2$ such that $v_1 \deps v_2$. Let $O$ and $R$ be two sets of operations. We say the store is causal+ consistent for set of operations $O$ with set of reader operations $R$ for conflict resolution function $f$ if for any client $c$ conditions below hold: 
\begin{itemize}
    \item Any version written by client $c$ is visible+ for $O$ for $f$ to $c$.
    \item Once $c$ reads a version by one of operations of $R$, the version remains visible+ for $O$ for $f$ to c.
    \item If $c$ that has read $v_1$ by one of operations of $R$, $v_2$ is also visible+ for $O$ for $f$ to $c$.
\end{itemize}
\end{definition}


To achieve the least update visibility latency for local updates, we define causal++ consistency that requires the data store to make all local updates visible to clients immediately.

\begin{definition} [Causal++ Consistency]
\label{def:causal++}
Let $O$, $R$, and $I$ be three sets of operations. A store is causal++ consistent for set of operations $O$ with set of reader operations $R$ and set of immediate-visible operations $I$ for conflict resolution function $f$ if conditions below hold: 

\begin{itemize}
\setlength{\itemsep}{0mm}
\item The store is causal+ consistent for $O$ with set of reader operations $R$ for $f$.
\item Any version $v$ written in data center $r$ is \textit{immediately} visible+ for operations of $I$ for $f$ to any client accessing $r$.
\end{itemize}
\end{definition}


\subsection{Necessity of Sticky Clients}
\label{sec:stickyProof}



Now, we show that in presence of network partitions, if we want availability and causal consistency while providing immediate visibility for local updates, the clients must be sticky. 

\begin{theorem}
In an asynchronous network (i.e., in presence of network partitions) with non-sticky clients, it is impossible to 
implement a replicated data store that guarantees following properties:

\begin{itemize}
\item Availability
\item Causal++ consistency (for $O=R=I=\{GET\}$ and any conflict resolution function $f$)
\end{itemize}
\end{theorem}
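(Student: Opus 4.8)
The plan is to exhibit a single execution --- legal under asynchrony and availability --- in which causal++ consistency with $O=R=I=\{GET\}$ is forced to fail the moment a client is allowed to migrate between data centers. Since causal consistency already contains read-your-writes, and since the impossibility of \cite{peter} is defeated by letting clients cache their own updates, the crux must be an argument that caching is \emph{irrelevant} here: the read that cannot be answered is issued by a \emph{second} client, one whose cache is empty for the relevant key, and it is precisely the immediate-visibility clause of causal++ that lets us saddle that second client with an unsatisfiable obligation.

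Concretely, I would work with two data centers $r_1,r_2$ and two keys $k_1,k_2$, and fix at the outset a partition that delays every $r_1$--$r_2$ message forever. The execution is: (i) client $c$ accesses $r_2$ and does $GET(k_1)$ then $PUT(k_1,w_1)$, so that $w_1 \deps w_1^{init}$, where $w_1^{init}$ is the initial version of $k_1$; (ii) $c$, being non-sticky, then accesses $r_1$ and does $GET(k_2)$ then $PUT(k_2,w_2)$, so that $w_2 \deps w_2^{init}$ and, because the two PUT events are by the same client in this order, $w_2 \deps w_1$; (iii) a fresh client $c'$ accesses $r_1$ and does $GET(k_2)$ and then $GET(k_1)$. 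Availability requires every one of these operations --- the two PUTs included --- to return without waiting on the partition to heal.

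The deduction then runs as follows. By the immediate-visibility clause of causal++ with $I=\{GET\}$, the version $w_2$ is visible+ (for $\{GET\}$, for $f$) to $c'$ the moment $c$ finishes step (ii); since the only versions of $k_2$ that $r_1$ has ever seen are $w_2^{init}$ and $w_2$, and $w_2 \deps w_2^{init}$ forces $f(w_2,w_2^{init})=w_2$, the first $GET(k_2)$ of $c'$ must return $w_2$. Now $c'$ has read $w_2$ by an operation of $R$ and $w_2 \deps w_1$, so the third clause of Definition \ref{def:causal+} (carried into causal++) makes $w_1$ visible+ to $c'$; hence $c'$'s $GET(k_1)$ must return some $z$ with $z = w_1$ or $z = f(w_1,z)$. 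But $w_1$ was written only at $r_2$, the partition keeps $r_1$ ignorant of it, and $c'$ has never touched $k_1$, so the only version of $k_1$ that $r_1$ can produce is $w_1^{init}$; and $w_1 \deps w_1^{init}$ gives $f(w_1,w_1^{init}) = w_1 \neq w_1^{init}$. Thus returning $w_1^{init}$ breaks visible+ while returning nothing breaks availability --- a contradiction, so no such store exists.

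The hard part will be arguing rigorously, within the formalism of Section \ref{sec:formalCausalConsistency}, that $r_1$ truly has no route to $w_1$: one must rule out the protocol shipping version \emph{contents} through $c$ (so that $c$'s $PUT(k_2,w_2)$ cannot smuggle a serviceable copy of $w_1$ into $r_1$), and one must ensure the conflict-resolution function cannot be exploited to bless the stale value --- which is exactly why the execution is rigged so that $w_1 \deps w_1^{init}$, pinning $f$ to $w_1$. A secondary detail is sequencing: the ``immediate'' of causal++ has to be invoked at an instant when no $r_1$--$r_2$ communication could have occurred, so that the contradiction is intrinsic and not an artifact of an unluckily timed partition.
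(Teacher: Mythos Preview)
Your proof is correct and reaches the same contradiction, but the route differs from the paper's in one structural choice: you make the \emph{writer} non-sticky (client $c$ crosses from $r_2$ to $r_1$ between its two PUTs, so that the causal chain $w_2 \deps w_1$ straddles the partition), whereas the paper keeps the writer at a single replica and makes the \emph{reader} non-sticky. In the paper's execution, $c$ stays at replica $r$ and performs $GET(k_1)$, $PUT(k_1,v_1^1)$, $PUT(k_2,v_2^1)$ all there; a fresh client $c'$ then reads $k_2$ at $r$ (obtaining $v_2^1$ by the immediate-visibility clause), migrates across the partition to $r'$, and reads $k_1$, where only $v_1^0$ is available.

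The payoff of the paper's arrangement is exactly that it dissolves the ``hard part'' you flag. The only agent crossing the partition is $c'$, who has neither read nor written $k_1$ at the time it arrives at $r'$; under the cache-your-reads-and-writes model there is simply nothing relevant for $c'$ to carry, so the claim that $r'$ is ignorant of $v_1^1$ needs no further argument. In your construction, by contrast, the crossing client $c$ has just written $w_1$, and you must separately rule out protocols in which $c$'s interaction with $r_1$ (its $GET(k_2)$ or $PUT(k_2,w_2)$) deposits a serviceable copy of $w_1$ there --- a step you acknowledge but leave open. Both arguments are sound under the standard assumption that clients are not a replication side-channel; the paper's version just makes that assumption less load-bearing, and thereby cleaner for the comparison with \cite{peter}.
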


\begin{proof}
We prove this by contradiction. Assume an algorithm $A$ exists that guarantees availability and causal++ consistency for conflict resolution function $f$ in asynchronous network with non-sticky clients. We create an execution of $A$ that contradicts causal++ consistency: Assume a data store where each key is stored in at least two replicas $r$ and $r'$. Initially, the data center contains two keys $k_1$ and $k_2$ with versions $v_1^0$ and $v_2^0$. These versions are replicated on $r$ and $r'$. Next, the system executes as follows:


\begin{itemize}
\item There is a partition between $r$ and $r'$, i.e., all future messages between them will be indefinitely delayed. 
\item $c$ performs $GET(k_1)$ on replica $r$, and reads $v_1^0$.
\item $c$ performs $PUT(k_1, v_1^1)$ on replica $r$.
\item $c$ performs $PUT(k_2, v_2^1)$ on replica $r$. 
\item $c'$ performs $GET(k_2)$ on replica $r$. Since $v_2^1$ is a local update, and in causal++ consistency, local updates have to be immediately visible, the value returned is $v_2^1$.
\item $c'$ performed $GET(k_1)$ on replica $r'$. Because of the network partition, there is no way for replica $r'$ to learn $v_1^1$. Thus, the value returned is $v_1^0$.

\end{itemize}

Since $v_1^1 \deps v_1^0$, for any conflict resolution function $f$, $f(v_1^0, v_1^1) = v_1^1$. Thus, according to Definition \ref{def:visibility+}, $v_1^1$ is not visible+ for $f$ to client $c'$. It is a contradiction to causal++ consistency, as $c'$ has read $v_2^1$, but its causal consistency $v_1^1$ is not visible+ to $c'$.
\end{proof}


The necessity of sticky clients for the causal consistency has been investigated in the literature \cite{peter}. The existing proof, however, is based on the impossibility of read-your-write consistency which is part of causal consistency \cite{peter, jerzy}. In other words, since read-your-write consistency is impossible for non-sticky clients for an always-available system in presence of network partition, and because read-your-write consistency is a part of the causal consistency \cite{peter, jerzy}, the impossibility also applies to the causal consistency. However, the read-your-write consistency can be achieved even for non-sticky clients, if clients cache what they have read or written, and use this cache in the read time if a server has an older value. It is straightforward to see that our proof still holds even when clients can cache their read and writes.

\section{Related Work}
\label{sec:related}


In this section, we review some of the previous protocols for causally consistent distributed data stores. We first review protocols for basic PUT and GET operations. Then, we focus on protocols for read-only transactions.

\subsection{Previous Work on Causal Consistency for Basic Operations}
There are several proposals for causally consistent replicated data stores such as \cite{bayou, lazyReplication, practi, causalMemory} where each replica consists of only one machine. Such an assumption is a serious limitation for the scalability of the system, as the whole data must fit in a single machine.  

To solve this scalability issue, we can partition the data inside each replica. When we have more than one machine in each replica, an important issue is how we want to make sure all causal dependencies are visible in the replica if some of them reside in other partitions. Some existing protocols such as \cite{cops, eiger} check dependencies by sending messages to other partitions every time a server receives any replicate message. This approach suffers from high message complexity. 

To eliminate the need for dependency check messages, GentleRain \cite{gentleRain} relies on physical clocks of partitions. However, as we discussed in Section \ref{sec:motivation}, clock anomalies such as clock skew among servers can adversely affect the performance of systems that rely on the synchronized physical clock. 
\name solves this problem by using HLCs instead of physical clocks. The idea of using HLC for providing causal consistency for the first time proposed in our technical report \cite{tech}. This report provides a version of \name without the DSVs and provides additional details about the effect of using HLC to provide causal consistency.

Another issue in GentleRain is that the quality of communication between replicas directly affects the update visibility latency even when a replica is irrelevant in a communication. Thus, even one slow replica adversely affects the update visibility latency in the whole system. \name solves this problem by keeping track of dependencies written in different data center separately. 

A framework for designing adaptive causal consistency protocols is provided in \cite{adaptive}. This framework allows protocol designers to easily trade off between different conflicting objectives such as lower update visibility latency and lower metadata. It also provides a basis for designing self-adaptive protocols that change themselves to match the actual usage pattern of the clients.

\subsection{Previous Work on Causal Consistency for Read-only Transactions}

COPS-GT \cite{cops} is a read-only transaction algorithm that is based on the COPS \cite{cops} algorithm that uses explicit dependency checking. COPS-GT may require two rounds of communication between the client requesting the transaction and the partitions involved in the transaction. These two rounds of communication increase the response time for the client, especially when network latency between the client and the partitions is high. Compared with COPS-GT, our algorithm requires only one round of communication between the client and one of the partitions involved in the transaction. The rest of communications are done between partitions inside the datacenter which have negligible network latency. COPS-SNOW \cite{snow} is an improvement of COPS-GT algorithm. Unlike COPS-GT, COPS-SNOW requires only one round of communication between the client and the servers. COPS-SNOW achieves this by shifting complexity from read operations to write operations. Comparing to COPS-SNOW, CausalSpartanX has significantly smaller metadata. Specifically, in COPS-SNOW, for each version $X$, we need to store a list of all transactions that have read a version older than $X$. For each key, we also need to keep track of all transactions that have read the current version. Another problem of COPS-SNOW is doing explicit dependency check for every write operation. To do this explicit dependency check, nodes need to constantly communicate with each that increase the message complexity of the algorithm. 

We explained GentleRain blocking ROTX algorithm that may require more than one round of communication with servers in Section \ref{sec:rotxResults}. Orbe \cite{orbe} and ChainReaction \cite{chainReaction} are two other blocking algorithms. Contrarian \cite{cont} is a time-based causal consistency protocol that provides non-blocking ROTX operations. However, it requires two rounds of client-server communication. Compared with \cite{cont}, \name trades one round of client-server communication with one round of server-server communication which results in a lower latency for RTOX operations when client-server communication delay is higher than server-server communication delay. For instance, if the client is located in Oregon, and servers are located in California, based on the RTTs provided in Table \ref{tab:ping_times}, \name provides 97\% improvement compared with \cite{cont} regarding the latency of ROTX operations. Even when the client is co-located with the servers, the latency of client-server communication is typically higher due to firewalls and other security checking across different security zones, because usually, the database servers are in the same zone while application servers are in a different zone.

Wren\cite{wren} relies on client cache to provide causally consistent transactions. Wren keeps track of the dependencies of a version with two scalars; one for local updates and one for remote updates. This enables Wren to eliminate the GentleRain’s blocking problem. However, since all remote dependencies are tracked by a single scalar, like GentleRain, Wren still suffers from increased update visibility in cases of slow replicas (cf. Section \ref{sec:updatevis}).  Another issue with Wren is its requirement for the clients to cache the data that they have written in the system which means more work need to be done on the client-side. 

Occult \cite{occult} is another causally consistent protocol. The main design goal of Occult is to decrease the update visibility. In fact, Occult pushes this direction to the end and makes all the updates immediately visible for the clients. However, to achieve this, Occult scarifies the always-availability. Specifically, a client may need to retries their read requests to get a causally consistent value of a key. This problem contradicts one of the most important features of the causal consistency that makes it favorable over stronger consistency models that is availability during network partitions. Another important problem of Occult is its single-leader replication policy. Specifically, all writes in the system must be done in a single replica. Occult uses this policy to manage its metadata overhead that is high in nature (i.e., in the order of the number of all partitions where clients can write) to avoid false positive causal consistency violation detection.

\section{Conclusion}
\label{sec:conclusion}

In this paper, we presented \MYPROTOCOL, a time-based protocol for providing causal consistency for replicated and partitioned key-value stores. Unlike existing time-based protocols such as GentleRain that relies on the physical clocks, our protocol is robust to clock anomalies thanks to utilizing HLCs instead of physical clocks. One of the important effects of eliminating this delay occurs for query processing when a single query results in multiple internal GET/PUT operations on the key-value store. \MYPROTOCOL guarantees that the response time for client operations is unaffected by clock anomalies such clock skew. For example, for the clock skew of 10 (ms), the average response time for PUT operations of \MYPROTOCOL was 4.5 (ms) whereas the average response time of GentleRain was 7.6 (ms).
Also, the correctness of \MYPROTOCOL is unaffected by NTP kinks such as leap seconds, non-monotonic clock updates and so on. 

This reduction in response time is especially important for federated data centers, virtual data centers, and multi-cloud environment. In federated data centers, the data is created by different entities, and all of the partitions may not be physically collocated. Likewise, a virtual data center can be created by utilizing available resources from different cloud service providers to minimize cost. In a multi-cloud environment, the data is split intentionally in such a way that no provider has access to the data but the actual data can be accessed only by clients that have access (with proper credentials) to all providers simultaneously. 
Multi-cloud environments are desired to avoid vendor lock-in.
A key characteristic of such data centers is that the partitions in a data center may be geographically distributed.
 In this case, one has to rely on the NTP protocol for clock synchronization. Typical NTP synchronization provides clock synchronization to be within 10 (ms). Even at this level of synchronization, \MYPROTOCOL reduces the average response time from 814 (ms) to 124 (ms) (for a query that consists of 100 operations) and from 3800 (ms) to 407 (ms) (for a query that consists of 500 operations). Moreover, NTP clock synchronization errors may be even large (e.g., 100 (ms)). In the case of 100 (ms) clock skew, \MYPROTOCOL reduces the average response time from 4540 (ms) to 124 (ms) (for a query that consists of 100 operations).


Another advantage of \MYPROTOCOL is reducing the update visibility latency. Update visibility latency can cause a substantial increase in latency for collaborative applications where two clients read each other's updates to decide what actions should be executed. As a simple application of this collaborative application, we considered the abstract bidding problem where one client reads the updates (using data center $A$) from another client (using data center $B$) and decides to increase its own bid until a limit is reached. We performed this experiment where $A$ and $B$ were in California, but the location of another data center, say $C$ was changed. \MYPROTOCOL performance remained unaffected by the location of $C$. By contrast, in GentleRain, the latency increased from 46 (ms) to 88 (ms) when we move data center $C$ from Oregon to Singapore.

In addition to the basic GET and PUT operations, \name provides read-only transaction operation, ROTX, that allows application developers to read a set of keys such that the returned values are causally consistency with each other as well as with the client past reads. \name ROTX is non-blocking, i.e., servers receiving the request can read the requested values immediately, and it only requires one round of communication between the client and the servers. More importantly, slow servers that are not involved in the transaction do not affect the response time of it. This feature provides a significant improvement for \name, especially regarding the tail latency, compared with the existing protocol such as GentleRain where the slowdown of a single partition in a datacenter affects the response time of all transactions in the datacenter. Our experiment shows that for a 100 (ms) slowdown of a partition, the 90th percentile of ROTX response time of \name shows 86.04\% and  29.7\% improvement over that of GentleRain, for transactions not involving the slow partition and transactions involving the slow partition, respectively. For higher slowdowns and higher percentiles, this improvement is more significant. For example, for 500 (ms) slowdown and 99th percentile, these numbers go up to 95.99\% and 37.70\%, respectively. 
 
Finally, we provided an impossibility result which states in presence of network partitions the locality of traffic is necessary, to have an always available causally consistent data store that immediately makes local updates visible. We note that the assumption about client accessing only the local data center is made in several works in the literature \cite{cops, eiger, orbe, gentleRain}. Our argument shows that this assumption is essential in them. This impossibility result is different than the existing impossibility result that requires sticky clients for read-your-writes \cite{peter} and still hold even when clients can cache their past read and writes.

\bibliography{bib}

\newpage
\clearpage

\appendix


In this section, we focus on the correctness of our proposed protocol. We want to show that the data store running our protocol is a causal++ consistent for $R = O = \{GET, ROTX\}$, $I = \{GET\}$, and conflict resolution function last-write-wins. For ROTX operations, we also need to show that the set of values returned by ROTX operations are causally consistent with each other. 

In practice, in addition to the consistency, we want all data centers to eventually converge to the same data. In other words, we want an update occurred in a data center to be reflected in other connected data centers as well.  We define two data centers connected, if there is no network partition that prevents them from communication. Now, we define convergence as follows: 

\begin{definition} [Convergence] 
\label{def:convergence}
Let $v_1$ be a version for key $k$ written in data center $r$.  
\begin{itemize}
\setlength{\itemsep}{0mm}
\item Let data center $r'$ be continuously connected to data center $r$, and 
\item for any version $v_2$ such that $v_1 \deps v_2$, let data center $r'$  be continuously connected to data center $r''$ where version $v_2$ is written. 
\end{itemize}
The data store is convergent+ for set of operations $O$ for conflict resolution function $f$ if $v_1$ is eventually visible+ for $O$ for $f$ to any client accessing  $r'$.
\end{definition}


\subsection{Causal Consistency}

With the definitions of causal++ consistency defined in Section \ref{sec:impossibility}, and convergence defined in this section, it is straightforward to note the following observations: 

\begin{observation}
\label{obs:comp_R}
If store $S$ is 
\begin{itemize}
\item causal++ consistent for $O$, with set of reader operations $R_1$, and set of instantly visible operations $I$ for conflict resolution $f$, and
\item causal++ consistent for $O$, with set of reader operations $R_2$, and set of instantly visible operations $I$ for conflict resolution $f$,
\end{itemize}

then, $S$ is causal++ consistent for $O$, with set of reader operations $R_1 \cup R_2$, and set of instantly visible operations $I$ for conflict resolution $f$.
\end{observation}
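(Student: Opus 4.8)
The plan is to unwind Definitions \ref{def:causal++}, \ref{def:causal+}, and \ref{def:visibility+} and observe that the reader set enters only as the collection of operation types whose reads trigger obligations, so the obligations for $R_1 \cup R_2$ split cleanly into the obligations already guaranteed for $R_1$ and for $R_2$. Since $O$, $I$, and $f$ are the same in both hypotheses and in the conclusion, there is nothing to reconcile other than the reader set.

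First I would isolate the conditions that do not mention the reader set: the immediate-visibility requirement of Definition \ref{def:causal++} (every version $v$ written in data center $r$ is immediately visible+ for $I$, for $f$, to every client accessing $r$) and the first condition of causal+ consistency in Definition \ref{def:causal+} (every version written by a client $c$ is visible+ for $O$, for $f$, to $c$). Both of these are literally part of each hypothesis, so they carry over to the combined claim with no change.

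Next I would treat the two remaining conditions of Definition \ref{def:causal+}, which are the only ones referring to the reader set. Fix a client $c$ and suppose $c$ reads some version via an operation whose type $t$ lies in $R_1 \cup R_2$; then $t \in R_1$ or $t \in R_2$. If $t \in R_1$, I invoke the hypothesis that $S$ is causal++ (hence causal+) consistent for $O$ with reader operations $R_1$ for $f$: it yields both that the version just read remains visible+ for $O$, for $f$, to $c$, and that if the version read was $v_1$ with $v_1 \deps v_2$ then $v_2$ is visible+ for $O$, for $f$, to $c$. If instead $t \in R_2$, the identical argument applies with $R_2$ in place of $R_1$. Hence both session conditions hold for reads via operations in $R_1 \cup R_2$, and all conditions of Definition \ref{def:causal++} are satisfied with reader set $R_1 \cup R_2$.

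I do not expect a genuine obstacle here; the only point requiring care is the quantifier bookkeeping, i.e., correctly decomposing ``$c$ reads via an operation in $R_1 \cup R_2$'' into ``$c$ reads via an operation in $R_1$, or $c$ reads via an operation in $R_2$,'' which is immediate because $R_1$ and $R_2$ are sets of operation types and each obligation is stated per read. This same template is exactly what would later be used to combine correctness established separately for $\{GET\}$ and $\{ROTX\}$ into correctness for $\{GET, ROTX\}$.
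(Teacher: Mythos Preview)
Your proposal is correct and matches the paper's approach: the paper records this as an observation that is ``straightforward to note'' from the definitions and gives no proof, and your argument is exactly the definition-unwinding that justifies this, splitting the reader-set-dependent conditions of Definition~\ref{def:causal+} over $R_1$ and $R_2$ while noting that the remaining conditions do not mention $R$.
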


\begin{observation}
\label{obs:comp_O}
If store $S$ is 
\begin{itemize}
\item causal++ consistent for $O_1$, with set of reader operations $R$, and set of instantly visible operations $I$ for conflict resolution $f$, and
\item causal++ consistent for $O_2$, with set of reader operations $R$, and set of instantly visible operations $I$ for conflict resolution $f$,
\end{itemize}

then, $S$ is causal++ consistent for $O_1 \cup O_2$, with set of reader operations $R$, and set of instantly visible operations $I$ for conflict resolution $f$.
\end{observation}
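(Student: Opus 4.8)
The plan is to prove Observation~\ref{obs:comp_O} by unfolding the definition of causal++ consistency down to the level of visibility+, and then exploiting the fact that every operation-dependent clause in the definition is \emph{universally quantified} over the operation set, so that a statement about $O_1\cup O_2$ splits as the conjunction of the corresponding statements about $O_1$ and about $O_2$. Concretely, the elementary fact I would isolate first is that for any version $v$ of a key $k$, any conflict-resolution function $f$, and any client $c$,
\[
v \text{ is visible+ for } O_1\cup O_2 \text{ for } f \text{ to } c \iff \bigl(v \text{ is visible+ for } O_1 \text{ for } f \text{ to } c\bigr) \wedge \bigl(v \text{ is visible+ for } O_2 \text{ for } f \text{ to } c\bigr).
\]
This is immediate from Definition~\ref{def:visibility+}: being visible+ for a set of operations asserts that \emph{every} operation in that set, reading $k$ and performed by $c$, returns $v'$ with $v'=v$ or $v'=f(v,v')$; and the operations of $O_1\cup O_2$ that read $k$ and are performed by $c$ are exactly the union of those of $O_1$ and those of $O_2$, so the universal over the union is the conjunction of the two universals.

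Next I would lift this to causal+ consistency. Definition~\ref{def:causal+} states that $S$ is causal+ consistent for $O$ with readers $R$ for $f$ iff, for every client $c$, three clauses hold, each of the form ``(a specific version) is visible+ for $O$ for $f$ to $c$''; the hypotheses of those clauses (that $c$ wrote the version, or read it via an operation of $R$, or read $v_1$ via an operation of $R$) do not mention $O$ at all, and $R$ and $f$ are held fixed throughout. Applying the elementary fact to each of the three clauses shows that ``causal+ consistent for $O_1\cup O_2$ with readers $R$ for $f$'' is equivalent to the conjunction of ``causal+ consistent for $O_1$ with readers $R$ for $f$'' and ``causal+ consistent for $O_2$ with readers $R$ for $f$'', both of which hold by the hypotheses of the observation.

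Finally, Definition~\ref{def:causal++} adds to causal+ a single further clause: every version $v$ written in a data center $r$ is immediately visible+ for the operations of $I$ for $f$ to every client accessing $r$. Since $I$ and $f$ are the same in both hypotheses and in the conclusion, and this clause does not involve $O$ at all, it is inherited verbatim from either hypothesis; combining it with the causal+ part established above yields causal++ consistency for $O_1\cup O_2$ with readers $R$ and immediate-visible set $I$ for $f$, which is the claim. I do not expect a genuine obstacle here --- the argument is purely definitional bookkeeping, and the only care needed is to keep the three parameters $O$, $R$, $I$ separate and to notice that $R$, $I$, $f$ are untouched. The companion statement Observation~\ref{obs:comp_R} follows by the same reasoning applied to the \emph{reader} set: the two clauses of Definition~\ref{def:causal+} that are quantified over $R$ split under $R_1\cup R_2$ into the conjunction of the two hypotheses, while the remaining clause (about $c$'s own writes) and the $I$-clause do not mention $R$.
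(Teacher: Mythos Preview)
Your proposal is correct. The paper does not give an explicit proof of this observation (it introduces it as ``straightforward to note'' from the definitions), and your definitional unfolding --- reducing causal++ to causal+ plus the $I$-clause, then reducing each causal+ clause to a visibility+ statement, and finally using that visibility+ for $O_1\cup O_2$ is the conjunction of visibility+ for $O_1$ and for $O_2$ --- is precisely the intended justification.
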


\begin{observation}
\label{obs:comp_I}
If store $S$ is 
\begin{itemize}
\item causal++ consistent for $O$, with set of reader operations $R$, and set of instantly visible operations $I_1$ for conflict resolution $f$, and
\item causal++ consistent for $O$, with set of reader operations $R$, and set of instantly visible operations $I_2$ for conflict resolution $f$,
\end{itemize}

then, $S$ is causal++ consistent for $O$, with set of reader operations $R$, and set of instantly visible operations $I_1 \cup I_2$ for conflict resolution $f$.
\end{observation}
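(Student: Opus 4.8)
The plan is to unfold Definition~\ref{def:causal++} and observe that the set of immediately-visible operations $I$ enters the definition of causal++ consistency in exactly one place: the second bullet, which asserts that every version written in a data center $r$ is immediately visible+ for operations of $I$ to every client accessing $r$. The first bullet, causal+ consistency for $O$ with reader operations $R$ for $f$ (Definition~\ref{def:causal+}), does not mention $I$ at all. Since both hypotheses assume causal+ consistency for the \emph{same} triple $(O,R,f)$, that half of the conclusion is immediate, and it remains only to establish the second bullet for the union $I_1 \cup I_2$.

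The key step is a distributivity property of visibility+ over unions of operation sets: for any version $v$ of key $k$ and any client $c$, $v$ is visible+ for $I_1 \cup I_2$ for $f$ to $c$ if and only if $v$ is visible+ for $I_1$ for $f$ to $c$ and $v$ is visible+ for $I_2$ for $f$ to $c$. This follows directly from Definition~\ref{def:visibility+}: being visible+ for a set of operations is a universally quantified statement over the operations in that set (``any operation of $I$ performed by $c$ reading $k$ returns $v'$ such that $v' = v$ or $v' = f(v,v')$''), and a universal quantifier ranging over $I_1 \cup I_2$ is equivalent to the conjunction of the universal quantifiers ranging over $I_1$ and over $I_2$. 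The same reasoning applies verbatim to \emph{immediate} visibility+, since the word ``immediately'' constrains only \emph{when} the property must hold, not \emph{which} operations it ranges over.

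Combining these: take any version $v$ written in data center $r$ and any client $c$ accessing $r$. By the first hypothesis, $v$ is immediately visible+ for $I_1$ for $f$ to $c$; by the second hypothesis, $v$ is immediately visible+ for $I_2$ for $f$ to $c$. By the distributivity step, $v$ is immediately visible+ for $I_1 \cup I_2$ for $f$ to $c$. Together with the causal+ consistency for $(O,R,f)$ inherited unchanged from either hypothesis, all clauses of Definition~\ref{def:causal++} hold for $(O, R, I_1 \cup I_2, f)$, which is exactly the claim.

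I do not expect a genuine obstacle here; the statement is a bookkeeping consequence of how the definitions are layered, and neither Observation~\ref{obs:comp_R} nor Observation~\ref{obs:comp_O} is needed. The only point requiring care is confirming that $I$ appears nowhere in the causal+ layer, so that the causal+ guarantee of the first hypothesis can be reused without modification. The analogous arguments for Observations~\ref{obs:comp_R} and~\ref{obs:comp_O} apply the same distributivity idea to the parameters $R$ and $O$ respectively; the $R$ case is marginally more involved since $R$ occurs in two of the three bullets of causal+ consistency, but the structure of the argument is identical.
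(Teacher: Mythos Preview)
Your proposal is correct and matches the paper's approach: the paper states this observation without proof, treating it as an immediate consequence of how Definition~\ref{def:causal++} is layered, and your unfolding of that definition is exactly the intended justification. The key point you identify --- that $I$ appears only in the second bullet of Definition~\ref{def:causal++}, where visibility+ is a universal quantifier over operations and hence distributes over unions --- is precisely why the paper labels this an ``observation'' rather than a lemma requiring proof.
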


\begin{observation}
\label{obs:comp_conv_O}
If store $S$ is 
\begin{itemize}
\item convergent for set of operations $O_1$ for conflict resolution function $f$, and 
\item convergent for set of operations $O_2$ for conflict resolution function $f$,
\end{itemize}

then, it is convergent for set of operations $O_1 \cup O_2$ for conflict resolution function $f$.
\end{observation}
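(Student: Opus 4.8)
The plan is to reduce the claim to two elementary facts. First, the hypotheses in Definition~\ref{def:convergence} --- that $r'$ is continuously connected to $r$ and to every data center $r''$ hosting a version $v_2$ with $v_1 \deps v_2$ --- make no reference whatsoever to the operation set, so they are shared verbatim by the two given instances of convergence. Second, the property ``visible+ for $O_1 \cup O_2$'' decomposes into the conjunction of ``visible+ for $O_1$'' and ``visible+ for $O_2$''; this holds because, by Definition~\ref{def:visibility+}, being visible+ for a set of operations is a universal statement quantified over the operations in that set, and an operation lies in $O_1 \cup O_2$ exactly when it lies in $O_1$ or in $O_2$.

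Concretely, I would fix a version $v_1$ of key $k$ written at data center $r$, a data center $r'$ continuously connected to $r$, and suppose that for every $v_2$ with $v_1 \deps v_2$, written at $r''$, the data center $r'$ is continuously connected to $r''$. These are precisely the premises needed to invoke both convergence hypotheses. Applying convergence for $O_1$ yields a time $T_1$ after which every $O_1$-operation issued by any client accessing $r'$ and reading $k$ returns a value $v'$ with $v' = v_1$ or $v' = f(v_1, v')$; applying convergence for $O_2$ yields an analogous time $T_2$. Taking $T = \max(T_1, T_2)$, every operation in $O_1 \cup O_2$ issued after $T$ by a client accessing $r'$ and reading $k$ falls into one of these two cases, hence returns a suitable $v'$. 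Therefore $v_1$ is eventually visible+ for $O_1 \cup O_2$ for $f$ to every client accessing $r'$, which is exactly convergence for $O_1 \cup O_2$.

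There is no real obstacle here: the only point requiring a moment's care is the temporal ``eventually'' quantifier, where one must combine the two convergence times by taking their maximum rather than arguing pointwise, together with the mild bookkeeping of checking that nothing in the connectivity premises depends on the operation set. This observation is the exact analogue for convergence of Observations~\ref{obs:comp_R}--\ref{obs:comp_I} for causal++ consistency, and it is proved the same way; it is stated separately only so that, in the correctness appendix, convergence for $\{GET\}$ and for $\{ROTX\}$ can be established independently and then merged into convergence for $\{GET, ROTX\}$.
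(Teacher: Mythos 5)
Your proof is correct and fills in exactly the argument the paper leaves implicit: the paper states Observations~\ref{obs:comp_R}--\ref{obs:comp_conv_O} with no proof, calling them ``straightforward to note,'' and the intended justification is precisely the one you give --- that the connectivity premises of Definition~\ref{def:convergence} are independent of $O$, and that visible+ for $O_1 \cup O_2$ is by Definition~\ref{def:visibility+} the conjunction of visible+ for $O_1$ and visible+ for $O_2$. Your extra care about the ``eventually'' quantifier (taking the later of the two stabilization times, justified by the monotonicity of visibility established elsewhere in the appendix) is a reasonable and harmless refinement of the implicit argument.
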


\subsubsection{Causal Consistency for $GET$ Operations}
\label{sec:pp_cs}

As explained in Section \ref{sec:protocol}, we store a dependency set for each version. This set contains at most one entry per data center. Whenever a client reads a version, the client updates its dependency set by the dependency set of the version (see Line \ref{line:client_update_ds} of Algorithm \ref{alg:client}, and Line \ref{line:rotx_cliet_update_ds} of Algorithm \ref{alg:rotx}). This dependency set later will be used as the dependency set of any version written by this client. Thus, dependencies are transitive. In addition, whenever a client writes a version $v$, we update the $DS_c$ of the client in Line \ref{line:put_update_ds} of Algorithm \ref{alg:client} to capture dependency on $v$. Thus, $ds$ of a version written by client $c$ captures dependency on any other version previously written by $c$, and any version $v$ previously read by client $c$ along with all dependencies of $v$. Specifically, we have the following observation, 

\begin{observation}
\label{obs:dep_h'_h}
Let $v_1$ and $v_2$ be two versions for two keys. If $v_1 \deps v_2$, then for any member $\langle i, h\rangle \in v_2.ds$, there exists $\langle i, h'\rangle \in v_1.ds$ such that $h' \geq h$. 
\end{observation}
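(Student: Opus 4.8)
The plan is to prove the Observation by induction on the length of a happens-before chain witnessing $v_1 \deps v_2$, after first isolating two elementary facts about the client-side algorithm. I will say a dependency set $A$ \emph{dominates} $B$ if for every $\langle i,h\rangle \in B$ there is $\langle i,h'\rangle \in A$ with $h' \ge h$; note $maxDS(A,B)$ dominates both $A$ and $B$, and for a version $v$ I write $\widehat{v.ds} := maxDS(v.ds,\{\langle v.sr,v.ut\rangle\})$ for its dependency set augmented with its own self-entry. The first fact I would record is \emph{monotonicity}: over a client $c$'s sequential execution $DS_c$ never drops an entry and never lowers a stored timestamp, because the only writes to $DS_c$ are the $maxDS$ updates in Lines \ref{line:client_update_ds} and \ref{line:put_update_ds} of Algorithm \ref{alg:client} and Line \ref{line:rotx_cliet_update_ds} of Algorithm \ref{alg:rotx}. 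The second is \emph{propagation}: when $c$'s PUT creates version $v$, the stored $v.ds$ equals $DS_c$ at the instant the $\textsc{PutReq}$ is issued, and immediately after the PUT $DS_c = \widehat{v.ds}$ (Line \ref{line:put_update_ds}); and when $c$'s GET or ROTX returns version $v$, the serving partition injects $\langle v.sr,v.ut\rangle$ into the set it returns (Line \ref{line:includeUt} of Algorithm \ref{alg:server1}, resp.\ Line \ref{line:rotx_include_self_in_ds} of Algorithm \ref{alg:rotx}, accumulated across the requested keys by $maxDS$ in Line \ref{line:rotx_return_ds}), so immediately after that operation $DS_c$ dominates $\widehat{v.ds}$, regardless of which replica served the read.

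Next I would prove the following generalized claim by induction on chain length: for every version $v$ with writing event $a$, and every event $b$ with $a \rightarrow b$ or $a=b$, the value of $DS$ of $b$'s client immediately after $b$ dominates $\widehat{v.ds}$. The base case $a=b$ is the propagation fact. For the inductive step I would view $a \rightarrow b$ as a chain of single links, each link being either "same client, earlier event" or "$b$ reads the value written by $a$" (the transitivity clause of Definition \ref{def:happens} being subsumed by chaining), and let $e$ denote $b$'s predecessor; the induction hypothesis gives that $DS$ of $e$'s client dominates $\widehat{v.ds}$ just after $e$. If the last link is of the first kind, $e$ and $b$ belong to the same client with $e$ before $b$, so monotonicity carries the domination to just after $b$. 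If the last link is of the second kind, $e$ is the PUT of some version $w$ and $b$ reads $w$; then the induction hypothesis together with propagation for $e$ gives $\widehat{w.ds}$ dominating $\widehat{v.ds}$, and propagation for $b$ gives $DS$ of $b$'s client dominating $\widehat{w.ds}$, hence $\widehat{v.ds}$.

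To finish, I would observe that the writing event of $v_1$ is a PUT by some client $c_1$, and that $v_1 \deps v_2$ says the writing event of $v_2$ happens-before it. Since a PUT returns no version's value, the \emph{last} link of any chain realizing this relation must be of the "same client, earlier event" kind, so its predecessor $e$ is an operation of $c_1$ occurring before the PUT of $v_1$. Applying the generalized claim with $v := v_2$ to $e$ gives that $DS_{c_1}$ dominates $\widehat{v_2.ds}$, and hence $v_2.ds$, just after $e$; monotonicity keeps this true at the moment $c_1$ issues the $\textsc{PutReq}$ for $v_1$, and by propagation that value is exactly $v_1.ds$. Thus $v_1.ds$ dominates $v_2.ds$, which is precisely the statement.

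The step I expect to be the main obstacle is keeping straight the "before versus after the self-entry is appended" distinction: the stored $v.ds$ is the writer's dependency set \emph{before} its own entry is added, whereas the induction naturally tracks the augmented set $\widehat{v.ds}$. Reconciling the two rests on the one genuinely structural point used above — that a happens-before chain terminating in a write cannot terminate with a "read" link — which lets us retreat one step from the PUT of $v_1$ to an event whose post-state equals $v_1.ds$ by monotonicity. A lesser nuisance is that GET and ROTX must be checked separately against their own pseudocode, since the read-path bookkeeping differs in form (Line \ref{line:includeUt} versus Lines \ref{line:rotx_include_self_in_ds} and \ref{line:rotx_return_ds}) even though it has the same effect.
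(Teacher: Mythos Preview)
Your proposal is correct and rests on the same mechanism the paper invokes---that $DS_c$ is only ever updated via $maxDS$ on reads and writes, so dependency sets propagate transitively along happens-before chains---but you formalize this far more carefully than the paper does: the paper treats the statement as an Observation justified only by the one-paragraph sketch preceding it, without an explicit induction or the $v.ds$ versus $\widehat{v.ds}$ distinction you track. Your induction on chain length and your structural remark that a chain ending in a PUT must terminate with a same-client link are genuine additions that make rigorous what the paper leaves implicit.
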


Whenever a server returns a version, it includes the  update time of the version in the dependency set returned with the version/versions (see Line \ref{line:includeUt} of Algorithm \ref{alg:server1}, and Line \ref{line:rotx_include_self_in_ds} of Algorithm \ref{alg:rotx}). Thus, based on Observation \ref{obs:dep_h'_h}, we have the following observation,
\begin{observation}
\label{obs:dep_h_ut}
Let $v_1$ and $v_2$ be two versions for two keys such that $v_2$ is written in data center $j$. If $v_1 \deps v_2$, then $v_1.ds$ has member $\langle j, h\rangle$ such that $h \geq v_2.ut$.   
\end{observation}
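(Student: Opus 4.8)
The plan is to reduce the claim to a bookkeeping statement about how dependency sets accumulate entries, and then to prove that statement by induction along a happens-before chain witnessing $v_1 \deps v_2$. Write $j = v_2.sr$ and let $c_1$ be the client whose PUT created $v_1$; since $v_1.ds$ is exactly the dependency set $DS_{c_1}$ carried in that $\textsc{PutReq}$ message (the line setting $d.ds \leftarrow ds$ in Algorithm \ref{alg:server1}), it suffices to show that at the moment $c_1$ sends this request, $DS_{c_1}$ already contains an entry $\langle j, h\rangle$ with $h \geq v_2.ut$.

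Two structural facts about the protocol drive the argument. \emph{Monotonicity}: every update to a client's dependency set, and the assignment of $DS_c$ to a new version's $ds$, goes through $maxDS$; hence, along the timeline of any single client an entry $\langle j, h\rangle$ once present is never removed and its value never decreases, and a version's $ds$ dominates the writer's $DS_c$ at an earlier point of its timeline (and, since $\textsc{Replicate}$ messages leave $ds$ untouched, it is immutable thereafter). \emph{Injection at $j$}: whenever a client reads, via GET or ROTX, a version $w$ stored at $j$, the reply folds $\langle w.sr, w.ut\rangle = \langle j, w.ut\rangle$ into $DS_c$ (Line \ref{line:includeUt} of Algorithm \ref{alg:server1} with Line \ref{line:client_update_ds} of Algorithm \ref{alg:client}, and Lines \ref{line:rotx_include_self_in_ds}, \ref{line:rotx_return_ds}, \ref{line:rotx_cliet_update_ds} of Algorithm \ref{alg:rotx}); likewise, right after a client writes a version at $j$, Line \ref{line:put_update_ds} of Algorithm \ref{alg:client} folds in that version's $\langle j, ut\rangle$.

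I would then unroll $v_1 \deps v_2$, i.e. $\mathit{write}(v_2) \to \mathit{write}(v_1)$, into a finite chain of base happens-before steps of Definition \ref{def:happens} (``same client, later'' and ``reads a value written by''), and prove by induction on the length of the shortest such chain ending at an event $e$: if $\mathit{write}(v_2) \to e$ or $e = \mathit{write}(v_2)$, and $e$ is performed by client $c$, then immediately after $e$ the set $DS_c$ contains $\langle j, h\rangle$ with $h \geq v_2.ut$. The base case is exactly injection. A ``same client'' step preserves the invariant by monotonicity. For a ``reads a value written by'' step, $e$ reads a version $w$ whose write is the chain-predecessor of $e$; since a PUT event can only be entered in a chain by a ``same client'' step, the chain reaches $\mathit{write}(w)$ from some earlier event $g$ of $w$'s writer, so the (shorter-chain) induction hypothesis at $g$ plus monotonicity gives that $w.ds$ already carries $\langle j, h\rangle$ with $h \geq v_2.ut$; reading $w$ then folds $w.ds$ into $DS_c$, re-establishing the invariant. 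Applying this to the ``same client'' step entering $\mathit{write}(v_1)$ gives the claim; alternatively, once $w.ds$ is known to carry the entry for an intermediate version $w$ with $v_1 \deps w$, Observation \ref{obs:dep_h'_h} finishes directly.

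The main obstacle is the ``reads a value written by'' step: the reading client receives the \emph{stored} field $w.ds$, not the current live dependency set of $w$'s writer, so the induction cannot be phrased purely over live client state — it must simultaneously control the $ds$ fields attached to versions. The clean resolution is that a PUT event is only ever reached in a happens-before chain through a ``same client'' predecessor, which lets the hypothesis peek inside $w.ds$ via the writer's state just before $w$ was written. Everything else — the $maxDS$ monotonicity bookkeeping and the immutability of $w.ds$ — is routine.
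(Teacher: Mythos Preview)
Your argument is correct and is essentially a careful unfolding of the paper's one-line justification, which derives the observation directly from Observation~\ref{obs:dep_h'_h} together with the fact that every read reply folds $\langle sr, ut\rangle$ into the returned dependency set (Line~\ref{line:includeUt} of Algorithm~\ref{alg:server1}, Line~\ref{line:rotx_include_self_in_ds} of Algorithm~\ref{alg:rotx}). One small patch to your ``reads'' inductive step: when $w = v_2$ (i.e.\ the read is directly of $v_2$) there is no earlier event $g$, and $w.ds$ itself need not carry $\langle j, v_2.ut\rangle$; the case is covered instead by your own \emph{Injection} fact, since the reply also folds $\langle w.sr, w.ut\rangle = \langle j, v_2.ut\rangle$ into $DS_c$---so just invoke that explicitly there.
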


According to Line \ref{line:takeMaxforDt} of Algorithm \ref{alg:server1}, using $HLC$ algorithm, our protocol assigns a timestamp higher than the timestamps of all of the dependencies of a version. Thus, we have

\begin{observation}
\label{obs:dep_ut_ut}
Let $v_1$ and $v_2$ be two versions for two keys. If $v_1 \deps v_2$, then $v_1.ut > v_2.ut$. 
\end{observation}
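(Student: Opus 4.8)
The plan is to chain Observation~\ref{obs:dep_h_ut} together with the timestamp-assignment logic in the PUT handler of Algorithm~\ref{alg:server1}. First I would fix the data center $j$ in which $v_2$ is written, so that $v_2.ut$ is the HLC value stamped on $v_2$ at creation. Since $v_1 \deps v_2$, Observation~\ref{obs:dep_h_ut} supplies an entry $\langle j, h\rangle \in v_1.ds$ with $h \geq v_2.ut$. Note that every issue about transitivity of $\deps$ and about how dependency sets propagate between clients and servers has already been handled in Observations~\ref{obs:dep_h'_h}--\ref{obs:dep_h_ut}, so I can take that as given. It then remains only to prove $v_1.ut > h$.

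Second, I would trace the creation of $v_1$ on its source partition $p^m_n$. On the $\textsc{PutReq}$ that produced $v_1$, Line~\ref{line:takeMaxforDt} sets $dt$ to the maximum value in $\{ds.values \cup \{DSV^m_n[m]\}\}$ with $ds = v_1.ds$; since $\langle j, h\rangle \in v_1.ds$, this gives $dt \geq h$ in the lexicographic order on HLC timestamps recalled in Section~\ref{sec:hlc}. The handler then calls $updateHLCforPut(dt)$ and immediately afterwards sets $d.ut \leftarrow VV^m_n[m]$, so $v_1.ut$ is exactly the value of $VV^m_n[m]$ right after that call. Hence the statement reduces to showing that $updateHLCforPut(dt)$ always leaves $VV^m_n[m]$ strictly larger than $dt$.

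That reduction is where the only genuine work lies: a four-way case split on which branch of $updateHLCforPut$ fires. Writing $\langle l, c\rangle$ for the updated $VV^m_n[m]$ and $l'$ for its old $l$-component, in every branch $l = \max(l', PC^m_n, dt.l) \geq dt.l$. If $l > dt.l$ the conclusion is immediate, and this already covers the final ``else $c \leftarrow 0$'' branch, which is entered only when $l \neq dt.l$. If $l = dt.l$, then one of the branches that sets $c$ to $\max(\cdot, dt.c)+1$ or to $dt.c+1$ fires, so $c > dt.c$ and $\langle l, c\rangle > dt$ lexicographically. Combining, $v_1.ut = \langle l,c\rangle > dt \geq h \geq v_2.ut$, which is the claim. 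I expect the main obstacle to be essentially notational --- keeping the HLC comparison convention of the footnote in Section~\ref{sec:hlc} straight through the case analysis --- because the conceptual content (dependency-set propagation and the logical-clock property of HLC) is already delivered by the earlier observations and by~\cite{hlc}.
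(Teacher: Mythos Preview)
Your proposal is correct and follows essentially the same approach as the paper, which justifies the observation in a single sentence by pointing to Line~\ref{line:takeMaxforDt} of Algorithm~\ref{alg:server1} and the HLC algorithm. You have simply made explicit what the paper leaves implicit: the chaining through Observation~\ref{obs:dep_h_ut} to bound $dt$ from below, and the four-way case split inside $updateHLCforPut$ that guarantees the strict increase; the paper compresses all of this into the phrase ``using $HLC$ algorithm, our protocol assigns a timestamp higher than the timestamps of all of the dependencies of a version.''
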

We consider the last-write-wins policy for conflict resolution. The last-write-wins means, if two versions are conflicting, then the winner version is the one with higher timestamp. If timestamps are equal, the winner version is the one with higher replication id. Specifically, we have 
\begin{observation}
\label{obs:f_property}
For conflict resolution function $f=\textsf{last-writer-wins}$, $f(v,v') = v$ iff 
\begin{itemize}
\item $v \deps v'$, or
\item $\neg(v \deps v') \wedge \neg(v' \deps v) \wedge \langle v.ut, v.sr\rangle$ is lexicographically greater than $\langle v'.ut, v'.sr\rangle$.
\end{itemize}
\end{observation}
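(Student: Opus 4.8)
The plan is to unfold the definition of the last-writer-wins conflict resolution function $f$ introduced in Section~\ref{sec:impossibility} and argue by a case split on whether $v$ and $v'$ conflict (throughout, as always, $f$ is applied to two \emph{distinct} versions of the same key). Recall that by definition $f$ returns, for two non-conflicting versions, the causally later one --- i.e., if $v \deps v'$ then $f(v,v')=v$ --- and, for two conflicting versions, the one with the higher timestamp, ties being broken in favor of the higher source-replica id; the latter rule is precisely ``return the version whose pair $\langle v.ut, v.sr\rangle$ is lexicographically larger.'' Thus the claimed equivalence is essentially a transcription of these two clauses, and the proof is mostly bookkeeping.

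Before the case split I would record two auxiliary facts. First, the happens-before relation of Definition~\ref{def:happens}, and hence the dependency relation of Definition~\ref{def:causal_dependecy}, is irreflexive and transitive; so $v \deps v'$ and $v' \deps v$ cannot both hold for distinct versions, since together they would give $v \deps v$, i.e., the event of writing $v$ happening before itself. This makes ``the causally later version'' well defined whenever $v$ and $v'$ are non-conflicting. Second, distinct versions carry distinct pairs $\langle v.ut, v.sr\rangle$: within a data center every PUT obtains, through the HLC update in Algorithm~\ref{alg:server1}, a value of $VV^m_n[m]$ strictly larger than any previously assigned there, so two versions written in the same data center differ in $ut$, and two versions written in different data centers differ in $sr$. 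This fact guarantees that among two conflicting versions exactly one has the lexicographically larger pair, which is what makes the ``iff'' tight.

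For the ``if'' direction: if $v \deps v'$, then $v$ and $v'$ are non-conflicting and $f(v,v')=v$ by the first clause. If $\neg(v \deps v') \wedge \neg(v' \deps v)$ and $\langle v.ut, v.sr\rangle$ is lexicographically greater than $\langle v'.ut, v'.sr\rangle$, then $v$ and $v'$ conflict and the second clause selects $v$, so again $f(v,v')=v$. For the ``only if'' direction, assume $f(v,v')=v$ and split on conflict. If $v$ and $v'$ are non-conflicting, then by the definition of Conflict we have $v \deps v'$ or $v' \deps v$; in the second case the first clause forces $f$ to return the causally later version, namely $v' \neq v$, a contradiction, so $v \deps v'$ and the first disjunct holds. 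If $v$ and $v'$ conflict, then by the second clause $v$ is the version with the lexicographically larger pair, and by the uniqueness fact this is strict, so $\langle v.ut, v.sr\rangle$ is lexicographically greater than $\langle v'.ut, v'.sr\rangle$; together with the conflict condition this is the second disjunct. As a sanity check one notes this is consistent with Observation~\ref{obs:dep_ut_ut}: when $v \deps v'$ we also have $v.ut > v'.ut$, so the ``lexicographically greater'' rule would point to $v$ as well, and the two clauses never disagree.

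The only step that is not a purely syntactic unfolding of the definition of $f$ is the second auxiliary fact, the uniqueness of the pairs $\langle v.ut, v.sr\rangle$, which is where one actually appeals to the behavior of the protocol and of HLC rather than to the abstract definition; I expect that to be the single point worth stating with care, while everything else follows immediately.
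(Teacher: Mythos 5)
Your proof is correct; the paper itself states this Observation without a proof, since it is presented as a direct symbolic transcription of the prose definition of last-writer-wins given just above it (``if two versions are conflicting, the winner is the one with higher timestamp; ties broken by replica id''), together with the general requirement on any conflict-resolution function that it return the causally later version when the two are not conflicting. Your two auxiliary facts --- acyclicity of the dependency relation (so at most one of $v \deps v'$, $v' \deps v$ holds) and distinctness of the pairs $\langle v.ut, v.sr\rangle$ across distinct versions (so exactly one of two conflicting versions is lexicographically larger) --- are precisely the points the paper leaves implicit, and you are right that the second is the only one that appeals to the protocol's HLC behavior rather than to the definitions alone.
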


\begin{lemma}
\label{lemma:local_visibility}
All versions written in data center $i$ are immediately visible+ for $\{GET\}$ to any client at data center $i$. 
\end{lemma}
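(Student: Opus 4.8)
The plan is to read the statement directly off the $GET$ handler of Algorithm~\ref{alg:server1} and then discharge the conflict-resolution bookkeeping with the observations already in hand. Fix a version $v$ of a key $k$ written in data center $i$; since each key has a unique hosting partition per data center, $v$ was created at a fixed partition $p^i_n$ (so the algorithm's $m$ equals $i$ there). Let $c$ be any client bound to data center $i$ that issues $GET(k)$ after the $\textsc{PutReq}$ creating $v$ has returned; because $c$ is sticky this request is served by $p^i_n$, and because the store is multi-version, $v$ is still in the version chain of $k$ when the $GET$ is processed. The first step I would take is to observe that $v$ satisfies the \emph{first} disjunct ($d.sr = m$) of the selection condition on Line~\ref{line:get:obtain}, so $v$ is a candidate version regardless of the current value of $DSV^i_n$; moreover the preceding $\max$ with $dsv$ on Line~\ref{line:get:takeMaxDSV} only enlarges $DSV^i_n$, which can only add candidates. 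This is precisely what makes visibility \emph{immediate} for a locally written version: unlike a remote version, no stable-time or stable-vector precondition is ever consulted for it.

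Next I would use that the handler returns the candidate $v'$ whose pair $\langle ut, sr\rangle$ is lexicographically largest; since $v$ is a candidate, $\langle v'.ut, v'.sr\rangle \geq_{\mathrm{lex}} \langle v.ut, v.sr\rangle$. If $v' = v$ we are done by Definition~\ref{def:visibility+}. Otherwise I would argue this comparison is strict: the HLC update executed on each PUT strictly increases $VV^i_n[i]$ as an HLC value in every branch, so successive versions created at $p^i_n$ get strictly increasing timestamps and two distinct local versions cannot tie on $ut$, while a version originating in another data center differs from $v$ in $sr$ and hence cannot tie on the whole pair. Therefore $\langle v'.ut, v'.sr\rangle >_{\mathrm{lex}} \langle v.ut, v.sr\rangle$.

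It then remains to show $v' = f(v, v')$ for the conflict resolution function $f = \textsf{last-writer-wins}$ (equivalently $f(v', v) = v'$, as the winner is independent of argument order). First, $v \deps v'$ cannot hold: by Observation~\ref{obs:dep_ut_ut} it would force $v.ut > v'.ut$, contradicting the strict ordering just established. Hence either $v' \deps v$, in which case Observation~\ref{obs:f_property} gives $f(v', v) = v'$ directly, or $\neg(v' \deps v) \wedge \neg(v \deps v')$ holds together with $\langle v'.ut, v'.sr\rangle >_{\mathrm{lex}} \langle v.ut, v.sr\rangle$, which by the second clause of Observation~\ref{obs:f_property} again yields $f(v', v) = v'$. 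In both cases the value returned to $c$ is $v'$ with $v' = f(v, v')$, so $v$ is visible+ for $\{GET\}$ to $c$; and since $v$ entered the candidate set at the instant its PUT completed, this holds immediately.

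I expect the main obstacle to be not conceptual but rather pinning down the tie-break details: justifying that distinct versions never collide on $\langle ut, sr\rangle$, so that the server's lexicographic selection really is a strict total order on the candidate set and ``latest version'' is well defined, and being precise about what ``immediately'' delivers --- namely that the unconditional disjunct $d.sr = m$ on Line~\ref{line:get:obtain} places a locally written version into the candidate set with no auxiliary timing requirement, in sharp contrast with GentleRain. If one preferred to avoid the strictness argument, it would be enough to observe that the returned $v'$ satisfies $v' = v$, or that $v$ is not a dependency of $v'$ and $\langle v'.ut, v'.sr\rangle \geq_{\mathrm{lex}} \langle v.ut, v.sr\rangle$, which already suffices to invoke Observation~\ref{obs:f_property} once one notes that $f$ breaks lexicographic ties deterministically; the strict inequality would be needed only if $f$ had to adjudicate a genuine tie, which cannot occur here.
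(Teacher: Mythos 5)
Your proof is correct and takes essentially the same approach as the paper: both rest on the observation that a locally written version unconditionally satisfies the $d.sr = m$ disjunct on Line~\ref{line:get:obtain} and is therefore always a candidate, that the server returns the candidate with lexicographically greatest $\langle ut, sr\rangle$, and that Observations~\ref{obs:f_property} and~\ref{obs:dep_ut_ut} guarantee last-writer-wins respects this lexicographic ordering. The only difference is presentational --- the paper argues by contradiction while you argue directly and spell out the tie-breaking details the paper leaves implicit.
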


\begin{proof}
Suppose a local version $v$ is not immediately visible+ for $\{GET\}$ to a client. That requires that in response to a $GET$ operation, another version $v'$ is returned such that $f(v, v') = v$. According to Observations \ref{obs:f_property} and \ref{obs:dep_ut_ut}, $\langle v.ut, v.sr\rangle$ in any case is lexicographically higher than $\langle v'.ut, v'.sr\rangle$. Thus, since the version is local, the hosting partition has this version in the version chain of the key. Thus, according to Line \ref{line:get:obtain} of Algorithm \ref{alg:server1}, it is impossible to return $v'$ (contradiction).
\end{proof}

A non-local version is not visible for $\{GET\}$ in a data center, either if it has not arrived its hosting partition, or DSV of the hosting partition is behind of one of the dependencies. Specifically, we have the following Lemma, 

\begin{lemma}
\label{lem:not_visible_get}
Let $v$ be a version that is written in partition $p^m_n$. If $v$ is not visible+ for $\{GET\}$ for conflict resolution function $f=$ last-write-wins to a client in data center $i$, then
\begin{itemize}
\item $v$ has not arrived partition $p^i_n$, or
\item there is a member $\langle k, h\rangle \in v.ds$, such that $DSV^i_n[k] < h$. 
\end{itemize}

\end{lemma}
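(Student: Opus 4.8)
The plan is to prove the contrapositive. Fix the $GET$ operation in question: it is issued by some client in data center $i$ and reads the key of $v$. Assume that at the moment this operation executes, (a) $v$ has arrived at partition $p^i_n$, and (b) $DSV^i_n[j] \ge h$ for every $\langle j, h\rangle \in v.ds$ (I rename the index $k$ of the lemma statement to $j$ to avoid clashing with the key name). I will show the operation returns some $v'$ with $v' = v$ or $v' = f(v,v')$, i.e.\ $v$ is visible+ for $\{GET\}$ in the sense of Definition~\ref{def:visibility+}. The case $v.sr = i$ is already handled by Lemma~\ref{lemma:local_visibility}, so assume $v.sr = m \ne i$.

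First I would check that $v$ qualifies as a candidate in the selection of Line~\ref{line:get:obtain} of Algorithm~\ref{alg:server1}: since $v.sr \ne i$, qualification of $v$ amounts to ``$h \le DSV^i_n[j]$ for every $\langle j, h\rangle \in v.ds$'', which is exactly hypothesis~(b) (the DSV consulted there is the value after the merge in Line~\ref{line:get:takeMaxDSV}, which only raises $DSV^i_n$ and so cannot spoil qualification). By hypothesis~(a), $v$ lies in the version chain of its key, so the version $d$ actually returned --- the qualifying version with lexicographically largest $\langle ut, sr\rangle$ --- satisfies $\langle d.ut, d.sr\rangle \ge_{\mathrm{lex}} \langle v.ut, v.sr\rangle$.

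Then I would split on whether $d = v$; if so we are done. Otherwise I claim $\langle d.ut, d.sr\rangle >_{\mathrm{lex}} \langle v.ut, v.sr\rangle$ strictly: update times produced by a single partition are strictly increasing (from the HLC update executed on each PUT, Line~\ref{line:takeMaxforDt} and the lines that follow it), and the source-replica component distinguishes versions created in distinct data centers, so for versions of one key the pair $\langle ut, sr\rangle$ is a unique identifier. Given the strict inequality, $v \deps d$ is impossible, since it would force $v.ut > d.ut$ by Observation~\ref{obs:dep_ut_ut}; hence either $d \deps v$, or $d$ and $v$ conflict with $\langle d.ut, d.sr\rangle$ lexicographically above $\langle v.ut, v.sr\rangle$. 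In either case Observation~\ref{obs:f_property} yields $f(v,d) = d$, so the returned $d$ witnesses that $v$ is visible+, completing the contrapositive.

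The proof is short, so rather than a deep obstacle the points needing care are: (i) reading Line~\ref{line:get:obtain} literally --- its qualification test ranges over the \emph{stored} set $d.ds$, not the augmented set $maxDS(d.ds,\{\langle d.sr, d.ut\rangle\})$ that is returned to the client --- so hypothesis~(b) plugs in directly; and (ii) the uniqueness of $\langle ut, sr\rangle$ for versions of a given key, which must be extracted from the monotonicity of each partition's HLC. Everything else is a direct appeal to Lemma~\ref{lemma:local_visibility} and Observations~\ref{obs:dep_ut_ut} and~\ref{obs:f_property}.
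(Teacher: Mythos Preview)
Your proof is correct and follows essentially the same route as the paper: both argue that if $v$ has arrived and its $ds$ is dominated by $DSV^i_n$, then $v$ qualifies at Line~\ref{line:get:obtain}, so the returned version has lexicographically at least $\langle v.ut, v.sr\rangle$, and Observations~\ref{obs:dep_ut_ut} and~\ref{obs:f_property} then force $f(v,d)=d$. Your version is simply a cleaner contrapositive with a few extra details made explicit (separating out the local case via Lemma~\ref{lemma:local_visibility}, and spelling out uniqueness of $\langle ut, sr\rangle$), whereas the paper folds these into a terser contradiction argument.
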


\begin{proof}
When $v$ is not visible+ for $\{GET\}$ in data center $i$, it means in the response of a $GET$ operation another version $v'$ is returned such that $f (v, v') = v$ (see Definition \ref{def:visibility+}). According to Observation \ref{obs:f_property}, two cases are possible. In the first case, according to Observation \ref{obs:dep_ut_ut}, $v.ut > v'.ut$, thus in both cases $\langle v.ut, v.sr \rangle > \langle v'.ut, v'.sr\rangle$. Now suppose both conditions of the this lemma are false. Then, according to Line \ref{line:get:obtain} of Algorithm \ref{alg:server1}, it is impossible that $GET$ operation returns $v'$ (contradiction).
\end{proof}


We define $DSV^i_{real}$ as entry-wise minimum of all $VV$ of partitions in data center $i$. In other, words, for all $1 < m < M$, $DSV^i_{real}[i] = \min_{1 < n < N} VV^i_n[m]$.

Now, we have the following lemma about the relation of the $DSV_{real}$ and other DSV and $DS$ values inside a data center: 

\begin{lemma}
\label{lem:real_is_greater}
In data center $i$, 
\begin{itemize}
\item for all $1 < n < N$, and $1 < m < M \wedge i \neq m$, $DSV^i_n[m] \leq DSV^i_{real}[m]$, and
\item for any client $c$ accessing data center $i$, for all $1 < m < M \wedge i \neq m$, $DSV_c[m] \leq DSV^m_{real}[m]$, and 
\item for any client $c$ accessing data center $i$, for any $\langle k, h
 \rangle \in DS_c \wedge k \neq i$, $h \leq DSV^i_{real}[m]$.
\end{itemize}
\end{lemma}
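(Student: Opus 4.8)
The plan is to prove the three items \emph{simultaneously}, by induction on the length of the execution. This is essentially forced: the server quantity $DSV^i_n$ is updated from the client quantities $DSV_c,DS_c$ (Lines~\ref{line:get:takeMaxDSV},~\ref{line:put_max_dsv},~\ref{line:rotx_taking_max}) and, conversely, $DSV_c,DS_c$ are updated from what servers return, so no single item can be established on its own. So I would strengthen the statement to a combined invariant $\mathcal I$ asserting items~1--3 at every point of every execution, together with the auxiliary monotonicity facts that, for $m\neq i$, $VV^i_n[m]$ is non-decreasing in time, that $DSV^i_n$, $DSV_c$ and (entry-wise) $DS_c$ are non-decreasing, and hence that $DSV^i_{real}[m]=\min_{1\le n\le N}VV^i_n[m]$ is non-decreasing. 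These auxiliary facts are immediate: HLC values produced at a partition strictly increase, the inter-partition channels are FIFO (so the timestamps a partition receives from a fixed peer form a non-decreasing sequence, making the assignments $VV^i_n[k]\leftarrow d.ut$ and $VV^i_n[k]\leftarrow hlc$ effectively maxima), and every write to $DSV^i_n$, $DSV_c$, $DS_c$ is a $\max$ or $maxDS$.

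For the induction, the base case is vacuous (all vectors are zero, all dependency sets empty initially). In the inductive step I assume $\mathcal I$ holds just before an event $e$ and case on the type of $e$. (i) A periodic DSV computation (Line~\ref{line:computeDSV}) sets $DSV^i_n[m]$ to at most $\min_{n'}VV^i_{n'}[m]$ evaluated at possibly earlier times, which by monotonicity is $\le DSV^i_{real}[m]$ now, preserving item~1; items~2--3 are untouched. (ii) Receipt of a \textsc{Replicate} or \textsc{Heartbeat} only raises some $VV^i_n[k]$, which can only raise $DSV^i_{real}$, so all three items survive. (iii) An $updateHLC$/$updateHLCforPut$ touches only the local entry $VV^i_n[i]$, irrelevant to claims about $m\neq i$. (iv) Server handling of a \textsc{GetReq}/\textsc{PutReq}/\textsc{ROTX} from a (sticky, hence data-center-$i$) client $c$ updates $DSV^i_n$ by $\max$/$maxDS$ with $c$'s $DSV_c$/$DS_c$; by the induction hypothesis (items~2 and~3 for $c$) each incoming value for entry $m\neq i$ is already $\le DSV^i_{real}[m]$, and so was the old $DSV^i_n[m]$, so item~1 is preserved. (v) Client $c$ handling a reply updates $DSV_c$ by $\max$ with the returned $dsv$ and $DS_c$ by $maxDS$ with the returned $ds$.

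The main obstacle is case~(v): I must show the dependency set $ds$ the server returns has, for every $\langle k,h\rangle$ with $k\neq i$, $h\le DSV^i_{real}[k]$ (and that $dsv$, being the server's $DSV^i_n$, is bounded by item~1, which handles item~2). For \textsc{PutReply} this is immediate since $ds$ carries only $\langle i,d.ut\rangle$, which is excluded. For \textsc{GetReply} and each \textsc{SliceReply}, $ds=maxDS(d.ds,\{\langle d.sr,d.ut\rangle\})$ for the selected version $d$. If $d$ is locally written ($d.sr=i$) its self-entry is excluded, and its stored $d.ds$ equals the issuing client's $DS_c$ at PUT time, hence was $\le DSV^i_{real}[j]$ then and still is now by monotonicity --- a small auxiliary invariant about stored local versions that folds into the same induction. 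If $d$ is remote ($d.sr=k\neq i$), the selection rule (Line~\ref{line:get:obtain} for \textsc{Get}, Line~\ref{line:obtain_sv} for a slice), together with the augmentation at Line~\ref{line:includeUt}/\ref{line:rotx_include_self_in_ds}, forces every entry of $maxDS(d.ds,\{\langle d.sr,d.ut\rangle\})$ --- the explicit dependencies and the self-timestamp $d.ut$ alike --- to be at most the corresponding entry of $DSV^i_n$ at read time, which by item~1 is at most $DSV^i_{real}[k]$, which only grew. Hence $maxDS$ on the client side preserves item~3, closing the induction. The one point needing genuine care is pinning down that the read-time selection condition really does bound the self-timestamp of a returned remote version; once that is in hand, every remaining case is a one-line appeal to monotonicity and the induction hypothesis.
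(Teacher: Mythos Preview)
Your approach matches the paper's almost exactly: both prove the three items simultaneously by induction over the execution, stepping through the events (periodic DSV recomputation, \textsc{GetReq}/\textsc{PutReq}/\textsc{ROTX} handling, replicate/heartbeat receipt) and using the induction hypothesis on client-side quantities to bound server-side updates and vice versa. You are, if anything, more explicit than the paper --- you isolate the auxiliary invariant on the stored $d.ds$ of locally written versions, and you spell out the monotonicity facts the paper uses silently.

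The step you flag as ``the one point needing genuine care'' is precisely where your argument, like the paper's, leaves a gap. For a \emph{remote} version $d$ returned by \textsc{Get}, you assert that the selection rule at Line~\ref{line:get:obtain} bounds ``the explicit dependencies and the self-timestamp $d.ut$ alike'' by $DSV^i_n$. But that rule tests only the entries of $d.ds$; unlike the slice rule at Line~\ref{line:obtain_sv}, it does \emph{not} test $\langle d.sr,d.ut\rangle$. A remote version with $d.ds=\emptyset$ therefore passes the \textsc{Get} filter irrespective of $DSV^i_n$, and after Line~\ref{line:includeUt} the returned $ds$ carries $\langle d.sr,d.ut\rangle$ with $d.ut$ possibly exceeding $DSV^i_{real}[d.sr]$ (the version having arrived at partition $n$ gives only $VV^i_n[d.sr]\ge d.ut$, not the minimum over all partitions). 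The paper's proof makes the identical unjustified claim (``since for any $\langle k,h\rangle\in ds$, $h\le DSV^i_n[k]$''), so your proposal is faithful to the original in both structure and in this lacuna; you have correctly located, but not actually discharged, the lemma's hardest case.
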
 

\begin{proof}
We prove this lemma, by induction over time.

At the beginning all $VV$, $DSV$, and $DS$ values at servers and client are zero. Thus, $DSV^m_{real} = DSV^m_n$ for any $1 < n < N$. For any client $c$, $DSV_c = DSV^m_{real}$. Also, for any $\langle k, h\rangle \in DS_c$, $h = DSV^m_{real}[k]$. Now we consider different operations and note how $DSV^m_n$, $DSV_c$, or $DS_c$ values changes. For sake of brevity, we assume $1 < n < N$, $1 < m < M \wedge i \neq m$, and $i \neq k$ implicit:

\begin{itemize}
\item \textbf{$DSV$ calculation}: At Line \ref{line:computeDSV} of Algorithm \ref{alg:server2}, each server computes the the $DSV$ as the entry-wise minimum of all $VVs$. Thus, by definition of $DSV^i_{real}$,  $DSV^i_{real} \geq DSV^m_n$.

\item \textbf{$GET$ operation}: By induction hypothesis, we have $dsv \leq DSV^i_{real}$. Thus, when the server updates its $DSV^i_n$ at Line \ref{line:get:takeMaxDSV} of Algorithm \ref{alg:server1}, still we have $DSV^i_n \leq DSV^i_{real}$. The server returns the value to client, and client updates its $DSV_c$ at Line \ref{line:client_update_dsv} of Algorithm \ref{alg:client} using $DSV^i_n$ received from server. Thus, still we have $DSV_c \leq DSV^i_{real}$. The client also updates its $DS_c$ with $ds$ received from server at Line \ref{line:client_update_ds} of Algorithm \ref{alg:client}. If the version is local, all of its dependencies are the version is smaller than $DSV_{real}^i$ by induction hypothesis. Otherwise, since for any $\langle k, h\rangle \in ds$, $h \leq DSV^i_n[k]$, and $DSV^i_n \leq DSV^i_{real}$, for any $\langle k, h\rangle$ in the new $DS_c$, we still have $h \leq DSV^i_{real}$.

\item \textbf{$PUT$ operation}: By induction hypothesis,  we know  for any $\langle k, h
 \rangle \in ds$, $h \leq DSV^i_{real}[k]$. Thus, when the server updates is $DSV$ at Line \ref{line:put_max_dsv} of Algorithm \ref{alg:server1}, we still have $DSV^i_n \leq DSV^i_{real}$.
 
\item \textbf{$ROTX$ operation}: By induction hypothesis, we know the $dsv$ and $ds$ received from client is less than or equal to $DSV^i_{real}$. Thus, when the server updates its $DSV^i_n$ at Line \ref{line:rotx_taking_max} of Algorithm \ref{alg:rotx}, still we have $DSV^i_n \leq DSV^i_{real}$. The server returns the values to client, and client updates its $DSV_c$ at Line \ref{line:rotx_update_dsv} of Algorithm \ref{alg:rotx} using $DSV^i_n$ received from server. Thus, still we have $DSV_c \leq DSV^i_{real}$. The client also updates its $DS_c$ with $ds$ received from server at Line \ref{line:client_update_ds} of Algorithm \ref{alg:client}. Since for any $\langle k, h\rangle \in ds$, $h \leq sv[k]$, and $sv \leq DSV^i_n \leq DSV^i_{real}$, for any $\langle k, h\rangle$ in the new $DS_c$, we still have $h \leq DSV^i_{real}$.
\end{itemize}
\end{proof}

We assume FIFO channels between replicas. Thus, we have the following observation 

\begin{observation}
\label{obs:fifo_real}
If a version $v$ written in data center $m$, has not arrived data center $i$, $DSV^i_{real}[m] < v.ut$.
\end{observation}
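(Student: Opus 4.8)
The plan is to trace how the $m$-th entry of the version vectors inside data center $i$ evolves, and to combine two ingredients already available: the HLC timestamps carried by the stream of messages that partition $p^m_n$ sends to its peer $p^i_n$ are strictly increasing, and the inter-replica channels are FIFO. First I would fix notation: the version $v$ is a version of some key whose home partition index is $n$, so $v$ is created at $p^m_n$, and at the instant of creation $v.ut = VV^m_n[m]$ while a message $\langle \textsc{Replicate}\ v\rangle$ is enqueued toward $p^i_n$ (Line~\ref{line:sendReplicate} of Algorithm~\ref{alg:server1}). Note the hypothesis forces $i \neq m$, since a locally created version is placed in the local version chain immediately and has therefore already ``arrived'' in data center $m$.

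Next I would establish that $VV^m_n[m]$ is strictly increasing along the sequence of events at $p^m_n$. Every event that can change $VV^m_n[m]$ runs one of the HLC-update routines, and inspecting those routines together with the HLC rules of Algorithm~\ref{alg:hlc} shows that either the $l$-component strictly increases, or $l$ is unchanged and the $c$-component is incremented; in both cases the pair $\langle l, c\rangle$ increases strictly in lexicographic order. Consequently every \textsc{Replicate} or \textsc{Heartbeat} message that $p^m_n$ sent before creating $v$ carries a timestamp strictly below $v.ut$, and $v$'s own replicate message carries exactly $v.ut$, while any message sent afterward carries a timestamp above $v.ut$.

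Now I would apply FIFO delivery. If $v$ has not arrived $p^i_n$, then $v$'s replicate message is still in transit, so no message that $p^m_n$ sent at or after the creation of $v$ has been delivered to $p^i_n$ either. Hence every message from $p^m_n$ that $p^i_n$ has processed so far was sent strictly before $v$ was created, and therefore carries a timestamp strictly below $v.ut$. Since $VV^i_n[m]$ is only ever assigned the timestamp carried by such a message---the replicate branch sets $VV^i_n[m] \leftarrow d.ut$, the heartbeat branch sets $VV^i_n[m] \leftarrow hlc$---and is never decreased, we obtain $VV^i_n[m] < v.ut$. Finally, since $DSV^i_{real}$ is the entry-wise minimum of the version vectors of all partitions of data center $i$, we get $DSV^i_{real}[m] \le VV^i_n[m] < v.ut$, which is the claim.

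The main obstacle is the bookkeeping around heartbeat messages: they, too, update $VV^i_n[m]$, so I must check they obey the same monotone-timestamp-plus-FIFO reasoning (a heartbeat emitted after $v$ sits behind $v$'s replicate message in the channel, so if it had been delivered, $v$ would have been delivered as well). A secondary but essential care point is landing on the strict inequality ``$<$'' rather than ``$\le$'', which rests on the HLC value strictly increasing between any two distinct events at $p^m_n$; everything else is routine case analysis over the server-side handlers.
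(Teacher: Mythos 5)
Your proof is correct. The paper itself states Observation~\ref{obs:fifo_real} without any proof, justified only by the one-line remark ``We assume FIFO channels between replicas,'' so you are filling in an argument the authors left implicit. Your reconstruction isolates exactly the two ingredients that are really needed: FIFO delivery on the $p^m_n \to p^i_n$ channel, and the fact that both \textsf{updateHLCforPut} and \textsf{updateHLC} strictly increase $VV^m_n[m]$ (either the $l$ component grows or $l$ stays fixed and $c$ is incremented), so that the stream of timestamps stamped on outgoing \textsc{Replicate} and \textsc{Heartbeat} messages is strictly increasing. That second ingredient is what actually secures the strict inequality $DSV^i_{real}[m] < v.ut$ rather than a mere $\le$, and it is not something that ``FIFO'' alone gives you --- so your care here is warranted, not overkill. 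The final step, dropping from $VV^i_n[m]$ to $DSV^i_{real}[m]$ via the entry-wise minimum, is sound, and your remark that the hypothesis forces $i \neq m$ is also a correct observation the paper never spells out.
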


According to Observation \ref{obs:fifo_real}, and Lemma \ref{lem:real_is_greater}, we have the following observation: 

\begin{observation}
\label{obs:fifo}
If a version $v$ written in data center $m$, has not arrived data center $i$, for all $1 \leq n \leq N$, $DSV_n^i[m] < v.ut$. 
\end{observation}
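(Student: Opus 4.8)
The plan is to derive this observation as an immediate consequence of the two facts that precede it, namely Observation \ref{obs:fifo_real} and the first item of Lemma \ref{lem:real_is_greater}; no new machinery is needed. First I would dispose of the degenerate case $i = m$: a version written in data center $m$ is local to $m$ and has trivially arrived there, so the hypothesis ``$v$ has not arrived data center $i$'' is vacuously false when $i=m$. Hence we may assume $i \neq m$, which is exactly the range of indices for which Lemma \ref{lem:real_is_greater} is stated.

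Next, fix a version $v$ written in data center $m$ (concretely, in the partition $p^m_n$ hosting $v.k$) that has not yet arrived at data center $i$ (i.e., has not arrived at $p^i_n$). Observation \ref{obs:fifo_real} — which relies on the FIFO assumption on the channels between replicas together with the monotonicity of the HLC timestamps generated at $p^m_n$ — gives $DSV^i_{real}[m] < v.ut$, where $DSV^i_{real}$ is defined as the entry-wise minimum of the $VV$ vectors of all partitions of data center $i$.

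Then I would invoke the first bullet of Lemma \ref{lem:real_is_greater}: for every partition index $n$ one has $DSV^i_n[m] \le DSV^i_{real}[m]$. Chaining the two inequalities yields $DSV^i_n[m] \le DSV^i_{real}[m] < v.ut$, hence $DSV^i_n[m] < v.ut$ for all $n$, which is exactly the claim.

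This is a pure transitivity step, so I do not expect any genuine obstacle. The only point to be careful about is that the strict inequality is supplied entirely by Observation \ref{obs:fifo_real}, and it is preserved when $DSV^i_{real}[m]$ is replaced by the (possibly smaller) $DSV^i_n[m]$, since we are only weakening the larger side of a strict inequality. I would also note explicitly that this argument presupposes Observation \ref{obs:fifo_real}, which in turn rests on the FIFO channel assumption stated immediately before it, so that the combined argument introduces no assumption beyond those already in force.
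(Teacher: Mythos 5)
Your proposal is correct and takes essentially the same route as the paper: the paper introduces this observation with the phrase ``According to Observation \ref{obs:fifo_real}, and Lemma \ref{lem:real_is_greater}, we have the following observation,'' which is exactly the chaining argument you spell out. Your explicit dispatch of the degenerate $i=m$ case is a small additional rigor the paper leaves implicit, but it does not change the argument.
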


According to Line \ref{line:rotx_setting_sv} of Algorithm \ref{alg:rotx}, we set the value of $sv$ for each $ROTX$ operation by a DSV value. Thus, according to Lemma \ref{lem:real_is_greater}, we have the following observation

\begin{observation}
\label{obs:sv_less_than_dsv_max}
For any $ROTX$ operation at data center $i$, for all $1 \leq m \leq M \wedge i \neq m$, $sv[m] \leq DSV^i_{real}[m]$.
\end{observation}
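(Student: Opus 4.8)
The plan is to trace the value of $sv$ back to the $DSV$ of the coordinating partition and then invoke Lemma \ref{lem:real_is_greater}. Fix a particular $ROTX$ operation issued at data center $i$, and let $p^i_n$ be the partition that acts as the coordinator for it, i.e., the partition that receives the $\langle \textsc{ROTX} \ kset, dsv, ds\rangle$ message. Looking at Algorithm \ref{alg:rotx}, this partition first updates $DSV^i_n$ at Line \ref{line:rotx_taking_max} using the $dsv$ and $ds$ values supplied by the client, and then at Line \ref{line:rotx_setting_sv} assigns $sv \leftarrow DSV^i_n$. Consequently, at the instant $sv$ is fixed, $sv[m] = DSV^i_n[m]$ for every index $m$.

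Next I would apply the first clause of Lemma \ref{lem:real_is_greater}, which states that $DSV^i_n[m] \le DSV^i_{real}[m]$ for every partition $n$ and every data center $m \ne i$. Chaining the two facts gives $sv[m] = DSV^i_n[m] \le DSV^i_{real}[m]$ for all $m \ne i$, which is exactly the statement. The only point that needs care — and the closest thing to an obstacle here — is that this inequality must be invoked at precisely the moment $sv$ is assigned, that is, \emph{after} the merge at Line \ref{line:rotx_taking_max}. But that merge is precisely the $ROTX$ case handled in the inductive proof of Lemma \ref{lem:real_is_greater} (where it is shown that $DSV^i_n \le DSV^i_{real}$ is preserved by that update, since by the induction hypothesis the incoming $dsv$ and $ds$ are themselves bounded by $DSV^i_{real}$), so no additional argument is required and the observation follows immediately.
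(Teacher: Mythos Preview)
Your proposal is correct and follows essentially the same approach as the paper: the paper justifies the observation simply by noting that $sv$ is set from a $DSV$ value at Line~\ref{line:rotx_setting_sv} and then invoking Lemma~\ref{lem:real_is_greater}. Your write-up is in fact more careful than the paper's one-line justification, since you explicitly address the timing issue---that the bound must hold \emph{after} the merge at Line~\ref{line:rotx_taking_max}---and correctly point to the $ROTX$ case in the inductive proof of Lemma~\ref{lem:real_is_greater} to handle it.
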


Note that in all lines where we change the DSV values, we always increase them. Thus, we note the following observation: 

\begin{observation}
\label{obs:dsv_does_not_decrease}
For all $1 \leq i \leq M$, $1 \leq n \leq N$, and $1 \leq m \leq M$, $DSV^i_n[m]$ never decreases. 
\end{observation}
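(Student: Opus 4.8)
I would first observe that the claim is exactly that every entry of a partition's local vector is monotone non-decreasing over the course of any execution; in the notation of the algorithms it suffices to show that for every partition $p^m_n$ and every data center index $k$, the value $DSV^m_n[k]$ never decreases (the statement of the observation is this same assertion with the index names relabeled). The plan is a purely syntactic inspection of the algorithms: enumerate every program location that writes to $DSV^m_n$ and verify that each such write can only raise each of its entries.

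The only lines that assign to $DSV^m_n$ are Line~\ref{line:get:takeMaxDSV} of Algorithm~\ref{alg:server1} (on a \textsc{GetReq}), Line~\ref{line:put_max_dsv} of Algorithm~\ref{alg:server1} (on a \textsc{PutReq}), Line~\ref{line:computeDSV} of Algorithm~\ref{alg:server2} (the periodic DSV recomputation), Line~\ref{line:rotx_taking_max} of Algorithm~\ref{alg:rotx} (on an \textsc{ROTX} request), and Line~\ref{line:ubdateDsvWithSv} of Algorithm~\ref{alg:rotx} (on a \textsc{SliceReq}). I would first argue that no other line touches $DSV^m_n$ --- in particular $sv$ is only \emph{read off} $DSV^m_n$ at Line~\ref{line:rotx_setting_sv} and never written back, and the client-side updates at Lines~\ref{line:client_update_dsv} and~\ref{line:rotx_update_dsv} modify $DSV_c$, not any server vector. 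Then I would note that each of the five assignments has the shape $DSV^m_n \leftarrow \max(DSV^m_n, X)$ taken entry-wise: at Lines~\ref{line:get:takeMaxDSV} and~\ref{line:computeDSV}, $X$ is a full vector (a client DSV, resp.\ the entry-wise minimum of the local $VV$s) and the max is coordinatewise; at Lines~\ref{line:put_max_dsv} and~\ref{line:rotx_taking_max}, $X$ is a dependency set, and $maxDS$ replaces $DSV^m_n[k]$ by $\max(DSV^m_n[k],h)$ for each $\langle k,h\rangle$ present in that set while leaving all other entries unchanged; and at Line~\ref{line:ubdateDsvWithSv} only $DSV^m_n[m]$ is replaced by $\max(DSV^m_n[m], sv[m])$. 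In every case the prior value of each entry is among the arguments of the maximum that defines its new value, so no entry can decrease. Since all $DSV$ entries are initialized to $0$ and the maxima are with respect to the total lexicographic order on HLC timestamps $\langle l,c\rangle$, ``never decreases'' is well defined.

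There is no genuinely hard step here; the one place to be careful is the reading convention for $maxDS$ when its arguments mix a full vector with a dependency set that omits some data centers --- an omitted data center $k$ must be treated as contributing nothing (so $DSV^m_n[k]$ is left as is), not as contributing $0$, otherwise an entry could spuriously drop. With that understood, the result is immediate and requires no induction beyond the trivial remark that the initial value is $0$ and that only the listed max-assignments alter it.
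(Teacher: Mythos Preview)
Your proposal is correct and follows exactly the approach the paper takes: the paper justifies this observation with the single preceding sentence ``Note that in all lines where we change the DSV values, we always increase them,'' and you have simply spelled out that sentence by enumerating the five assignment sites and checking that each is an entry-wise maximum with the prior value. Your added remark about how $maxDS$ treats absent entries is a harmless over-caution (even ``contributes $0$'' would be fine under a max), but the enumeration and conclusion are sound.
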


\begin{lemma}
\label{lem:remains_visible_get}
Once a version becomes visible+ for $\{GET\}$ for a client in data center $i$ it remains visible+ for $\{GET\}$ for any client in data center $i$.
\end{lemma}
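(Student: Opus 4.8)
The plan is to reduce the claim to a monotonicity property of the single partition $p^i_n$ that hosts the key $k$ of $v$ in data center $i$: the ``returnable frontier'' of $p^i_n$ — the lexicographically greatest $\langle ut,sr\rangle$ among versions of $k$ that a $GET(k)$ could currently return at $p^i_n$ — never moves backward in time. Since (as shown below) ``$v$ is visible+ for $\{GET\}$ to a client in data center $i$'' is equivalent to this frontier being at least $\langle v.ut,v.sr\rangle$, once the frontier reaches that height it stays there, so $v$ remains visible+ for \emph{every} client in data center $i$. The local case is disposed of first: if $v.sr=i$, then by Lemma~\ref{lemma:local_visibility} the version $v$ is visible+ for $\{GET\}$ to every client at data center $i$ at all times, so there is nothing to prove; assume henceforth $v.sr\neq i$.

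The first real step is to translate ``visible+ for $\{GET\}$'' into a statement about the value a $GET(k)$ returns. By Line~\ref{line:get:obtain} of Algorithm~\ref{alg:server1}, a $GET(k)$ served by $p^i_n$ returns the version $d$ of $k$ with lexicographically greatest $\langle d.ut,d.sr\rangle$ among those that are \emph{eligible}, i.e., $d.sr=i$ or every $\langle k',h\rangle\in d.ds$ has $h\le DSV^i_n[k']$ (where $DSV^i_n$ has first been raised by the client's incoming $dsv$ in Line~\ref{line:get:takeMaxDSV}). Using Observation~\ref{obs:f_property} (the characterization of last-writer-wins) and Observation~\ref{obs:dep_ut_ut} ($d\deps v \Rightarrow d.ut>v.ut$), together with the fact that a key and a pair $\langle ut,sr\rangle$ uniquely identify a version (the HLC of the writing partition is strictly increasing), one gets: $d=v$ or $d=f(v,d)$ iff $\langle d.ut,d.sr\rangle$ is lexicographically $\ge\langle v.ut,v.sr\rangle$. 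Hence ``$v$ visible+ for $\{GET\}$ to a client $c$ at data center $i$'' is precisely ``the version $p^i_n$ returns to $c$ has $\langle ut,sr\rangle$ lexicographically $\ge\langle v.ut,v.sr\rangle$''. (Lemma~\ref{lem:not_visible_get} gives the complementary direction but is not needed here.)

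The core of the argument is monotonicity of $p^i_n$'s state. Version chains are append-only — no step of any algorithm removes a version — and by Observation~\ref{obs:dsv_does_not_decrease} every entry of $DSV^i_n$ is non-decreasing; moreover the only way a client's $dsv$ or $ds$ affects a server is to raise $DSV^i_n$ (Lines~\ref{line:get:takeMaxDSV} and \ref{line:put_max_dsv} of Algorithm~\ref{alg:server1}, Lines~\ref{line:rotx_taking_max} and \ref{line:ubdateDsvWithSv} of Algorithm~\ref{alg:rotx}). Consequently, once a version $d_0$ of $k$ is eligible at $p^i_n$, it stays eligible forever: it never leaves the chain, and the entries of $DSV^i_n$ relevant to its dependencies only grow, even across the $\max$ with later clients' $dsv$ values. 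Therefore, if at some time $t_0$ a $GET(k)$ at $p^i_n$ returns $d_0$, then at every later time $t_1\ge t_0$ a $GET(k)$ at $p^i_n$ returns some version $d_1$ with $\langle d_1.ut,d_1.sr\rangle$ lexicographically $\ge\langle d_0.ut,d_0.sr\rangle$, since $d_0$ is still eligible and the server returns the lexicographically greatest eligible version. Combining this with the translation of the previous paragraph: if $v$ becomes visible+ for $\{GET\}$ for some client $c$ at data center $i$ at time $t_0$, the witnessing $GET(k)$ returns some eligible $d_0$ with $\langle d_0.ut,d_0.sr\rangle$ lexicographically $\ge\langle v.ut,v.sr\rangle$; then for any client $c'$ at data center $i$ and any $GET(k)$ it issues at time $t_1\ge t_0$, the returned $d_1$ satisfies $\langle d_1.ut,d_1.sr\rangle\ge\langle d_0.ut,d_0.sr\rangle\ge\langle v.ut,v.sr\rangle$ lexicographically, so $v$ is visible+ for $\{GET\}$ to $c'$. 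This proves the lemma.

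I expect the main obstacle to be the careful handling of the client-supplied $dsv$/$ds$ values: because the server merges them into $DSV^i_n$ via $\max$, one must verify that they can never make the returnable frontier retreat (they cannot — $\max$ only raises $DSV^i_n$, and by Observation~\ref{obs:dsv_does_not_decrease} the raised values persist), and hence that visibility established through one client's request is not an artifact of that client's metadata that a different client might lack. The only other point needing attention is the uniqueness ``one version per $\langle$key$, ut, sr\rangle$'', which follows from the HLC at each partition being strictly increasing across the writes it performs.
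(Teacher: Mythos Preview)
Your proof is correct and takes essentially the same approach as the paper: both rely on the monotonicity of $DSV^i_n$ (Observation~\ref{obs:dsv_does_not_decrease}) and the append-only nature of version chains to conclude that the lexicographically greatest eligible $\langle ut,sr\rangle$ at Line~\ref{line:get:obtain} of Algorithm~\ref{alg:server1} never decreases. Your explicit ``returnable frontier'' framing is in fact slightly more careful than the paper's own proof, which asserts directly that $v$ itself was eligible (local or with all of $v.ds$ below $DSV^i_n$) rather than that some witness $d_0$ with $\langle d_0.ut,d_0.sr\rangle\ge\langle v.ut,v.sr\rangle$ was.
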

\begin{proof}
Suppose a version $v$ was visible+ to a client, but later it is not visible+. That requires that in response to a $GET$ operation, another version $v'$ is returned such that $f(v, v') = v$. According to Observations \ref{obs:f_property} and \ref{obs:dep_ut_ut}, $\langle v.ut, v.sr\rangle$ in any case is lexicographically higher than $\langle v'.ut, v'.sr\rangle$. Since $v$ was visible, it is either a local version, or for all $\langle i,h \rangle \in v.ds$, DSV had a greater value than $h$. According to Observation \ref{obs:dsv_does_not_decrease}, DSV never decreases, the condition is still valid. Thus, according to Line \ref{line:get:obtain} of Algorithm \ref{alg:server1}, it is impossible to return $v'$ (contradiction). 
\end{proof}


When a $GET$ operation returns a version, either the version is a local version, or the entries of DSV in the hosting partition is greater than the specified dependency for the version (see Line \ref{line:get:obtain} of Algorithm \ref{alg:server1}). Specifically, we have the following observation,

\begin{observation}
\label{obs:GET}
Let $v$ be a version for key $k$ that is written in partition $p^m_n$. If $GET(k)$ returns $v$ in data center $i \neq m$, then for any member $\langle j, h\rangle \in v.ds$, $DSV^i_n[j] \geq h$. 
\end{observation}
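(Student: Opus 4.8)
The plan is to read the claim directly off the selection rule used by the GET handler in Line~\ref{line:get:obtain} of Algorithm~\ref{alg:server1}. That rule returns the lexicographically newest version $d$ of $k$ in the version chain satisfying either $d.sr = i$, or: for every $\langle j, h\rangle \in d.ds$ we have $h \leq DSV^i_n[j]$. So the proof amounts to arguing that the particular version $v$ that is actually returned must satisfy the \emph{second} disjunct.

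First I would note that $v$ is written in partition $p^m_n$, hence $v.sr = m$. Since the $GET(k)$ under consideration is served in data center $i \neq m$, we have $v.sr = m \neq i$, so $v$ fails the first disjunct ($d.sr = i$) of the selection condition. Consequently, for the hosting server $p^i_n$ to have chosen $v$ among the eligible versions, $v$ must satisfy the second disjunct: for every $\langle j, h\rangle \in v.ds$, $h \leq DSV^i_n[j]$, i.e., $DSV^i_n[j] \geq h$, which is exactly the statement.

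I do not expect a real obstacle here: the observation is a direct unfolding of the algorithm. The only point worth making explicit is \emph{which} value of $DSV^i_n$ is meant --- namely the value in effect when the server performs the selection in Line~\ref{line:get:obtain}, after the possible update in Line~\ref{line:get:takeMaxDSV}. If the observation is later invoked with the value of $DSV^i_n$ at some subsequent time, one additionally appeals to Observation~\ref{obs:dsv_does_not_decrease}, which guarantees each $DSV^i_n[j]$ is non-decreasing, so the inequality $DSV^i_n[j] \geq h$ is preserved thereafter.
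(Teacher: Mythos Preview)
Your proposal is correct and follows exactly the paper's reasoning: the paper simply remarks that the observation is read off the selection condition in Line~\ref{line:get:obtain} of Algorithm~\ref{alg:server1}, and your argument makes this explicit by noting that $v.sr = m \neq i$ rules out the first disjunct, forcing the second. Your additional remark about the timing of $DSV^i_n$ and the appeal to Observation~\ref{obs:dsv_does_not_decrease} is a helpful clarification that the paper leaves implicit.
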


\begin{lemma}
\label{lem:local_dep_has_read}
Let $v_1$ be a version written in data center $i$. Then, for any non-local version $v_2$ such that $v_1 \deps v_2$, there must be a client $c$ at data center $i$ that has read either $v_2$, or a non-local version $v_3$ such that $v_3  \deps v_2$. 
\end{lemma}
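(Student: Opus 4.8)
The plan is to induct on the causal-dependency structure of $v_1$, or more precisely on the way $v_1.ds$ came to record a dependency on the non-local version $v_2$. The key observation to exploit is how dependency sets are populated: a version's $ds$ is exactly the client's $DS_c$ at the moment of the PUT (Line \ref{line:put_update_ds} and the $d.ds \leftarrow ds$ assignment in Algorithm \ref{alg:server1}), and a client's $DS_c$ only grows through $maxDS$ calls that fold in the $ds$ returned by GET/ROTX operations (Lines \ref{line:client_update_ds}, \ref{line:put_update_ds} of Algorithm \ref{alg:client} and Line \ref{line:rotx_cliet_update_ds} of Algorithm \ref{alg:rotx}), or through the self-entry $\langle sr, ut\rangle$ recorded when the client itself writes. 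So if $v_1 \deps v_2$ and $v_2$ is non-local (written in some data center $j \neq i$), then by Observation \ref{obs:dep_h_ut} the entry $\langle j, h\rangle \in v_1.ds$ with $h \geq v_2.ut$ must have been introduced into the writing client's $DS_c$ at some earlier point. Since data center $i \neq j$, this entry was not the self-entry of a local PUT; it must have entered via a GET or ROTX response. Let $c$ be the client that performed this read, and let $v_3$ be the version it read.

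First I would set up the induction carefully. The natural measure is $v_1.ut$ (well-founded since by Observation \ref{obs:dep_ut_ut} dependencies have strictly smaller update times, and timestamps along any causal chain strictly decrease). In the base case, $v_1$ is read directly — that is, the client that wrote $v_1$ read $v_2$ itself, and we are done with $c$ being that client. Otherwise, $v_1$'s writer $c$ read some version $v_3$ whose $ds$ carried the entry $\langle j, h\rangle$, and by the transitivity of dependency tracking (the $ds$ of $v_3$ is a subset, in the $maxDS$ sense, of $v_1.ds$, which is what Observation \ref{obs:dep_h'_h} formalizes) we have $v_3 \deps v_2$. Now two subcases: if $v_3$ is non-local, then $c$ is the desired client and $v_3$ the desired witness. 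If $v_3$ is local (written in data center $i$), then $v_3$ is itself a local version with $v_3 \deps v_2$ and $v_3.ut < v_1.ut$, so the induction hypothesis applied to $v_3$ produces a client at data center $i$ that has read $v_2$ or a non-local version depending on $v_2$, which is exactly the conclusion we need for $v_1$.

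The one subtlety to nail down — and the step I expect to be the main obstacle — is the claim that the entry $\langle j, h\rangle$ in $v_1.ds$ with $j \neq i$ necessarily originates from a GET/ROTX \emph{read} rather than from some other path, and that the version read along that path genuinely satisfies $v_3 \deps v_2$ rather than merely carrying a numerically large enough timestamp in its $ds$ entry. This requires an auxiliary invariant: every entry $\langle j, h\rangle$ with $h > 0$ appearing in any $DS_c$ or any version's $ds$ is ``justified'' in the sense that it equals $v'.ut$ for some actual version $v'$ of some key written in data center $j$ with $v' \deps$ (the version being written), and moreover, whenever it is not a self-entry, it was copied from a version that the relevant client actually read. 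This is straightforward to prove by induction over the execution alongside the main induction — the only places $ds$ entries are created are the self-entry at PUT time and the $maxDS$ merges at read time — but it needs to be stated explicitly so that the chain $v_1 \deps v_3 \deps v_2$ is genuine causal dependency and not an artifact of timestamps. Once that invariant is in hand, the rest is the routine two-case induction sketched above.
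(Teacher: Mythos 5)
Your route is genuinely different from the paper's. The paper proves this lemma purely at the level of the happens-before relation, without touching $DS_c$, $ds$, $maxDS$, or timestamps at all: from $v_1 \deps v_2$ it takes a happens-before chain from write($v_2$) to write($v_1$), observes that under the sticky-client assumption consecutive events in the chain change data centers only across a ``reads a value written by'' edge, and picks the crossing where the chain last enters data center $i$. That crossing is a read event $e$ at data center $i$ of a non-local version $v_3$, with write($v_2$) $\rightarrow$ write($v_3$) (or write($v_3$) $=$ write($v_2$)) because the prefix of the chain up to $v_3$'s write lies before the crossing; hence $v_3 = v_2$ or $v_3 \deps v_2$. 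You instead induct on $v_1.ut$ and trace the metadata: how the entry $\langle j, h\rangle$ with $h \geq v_2.ut$ got into $v_1.ds$, observing it is not a self-entry (since $j \neq i$) and so must have arrived through a GET/ROTX response.

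Both routes are sound, but they buy different things. The paper's argument is shorter and more robust precisely because it is protocol-agnostic: it is a fact about causal chains and stickiness, and needs no claim that the protocol's $ds$ metadata faithfully reflects real dependencies. Your argument, as you correctly flag, hinges on an auxiliary invariant --- that every entry $\langle j, h\rangle$ in a $ds$ or $DS_c$ is ``justified'' by a genuine causal predecessor written in data center $j$ with update time exactly $h$, not merely by a numerically convenient timestamp (e.g., one propagated by a heartbeat). That invariant is provable, but it is a whole additional induction over the execution, roughly of the same weight as Observations \ref{obs:dep_h'_h}--\ref{obs:dep_h_ut} in reverse; and until it is pinned down, the inductive step ``$c$ read $v_3$ and therefore $v_3 \deps v_2$'' is exactly the gap you identified. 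The net effect is that your proof does more work to establish the same fact, because it proves a stronger operational statement (metadata faithfully encodes dependency) on the way, whereas the paper gets by with only the happens-before structure. One small thing worth making explicit in the paper's own terms, which your version handles implicitly via the $ut$-based well-founded induction: the paper silently needs to choose the \emph{last} crossing into data center $i$ (not an arbitrary one) so that the segment of the chain before the crossing justifies $v_3 \deps v_2$; your induction on $v_1.ut$ avoids having to make that choice by peeling off one read at a time.
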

\begin{proof}
According to Definition \ref{def:causal_dependecy}, (event of writing $v_2$) $\rightarrow$ (event of writing $v_1$). Since $v_1$ is a local version, and $v_2$ is not a local version, a non-local version $v_3$ must be read at data center $i$ in event $e$ such that (event of writing $v_2$) $\rightarrow e$ and $e \rightarrow$ (event of writing $v_1$). By definition \ref{def:happens}, (event of writing $v_3$) $\rightarrow e$. Thus, $v_3 = v_2$, or  (event of writing $v_2$) $\rightarrow$ (event of writing $v_3$) that leads to $v_3 \deps v_2$. 
\end{proof}
\begin{lemma}
\label{lem:local_dep_visibility}
Once client $c$ reads local version $v_1$ by a $GET$ operation, any version $v_2$ such that $v_1 \deps v_2$ is visible+ for $GET$ to $c$. 
\end{lemma}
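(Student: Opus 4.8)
The plan is to split on whether $v_2$ is written in $c$'s data center. Write $i$ for the data center of $c$ and $n_1$ for the partition hosting the key of $v_1$. If $v_2$ is local (i.e. $v_2.sr = i$), then Lemma \ref{lemma:local_visibility} immediately gives that $v_2$ is visible+ for $\{GET\}$ to $c$, so I would be done. Hence the interesting case is $v_2$ written at some partition $p^j_{n_2}$ with $j \neq i$, and there I would argue by contradiction, assuming that some $GET(k_2)$ performed by $c$ (served, say, by $p^i_{n_2}$) returns a version $v'$ with $f(v_2,v') = v_2$.

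The heart of the argument is to show that, the moment $c$ finishes reading $v_1$, its vector $DSV_c$ already dominates every dependency of $v_2$: $DSV_c[j] \geq v_2.ut$ and $DSV_c[k] \geq h$ for every $\langle k,h\rangle \in v_2.ds$. I would establish this by going back to the $PUT$ that created $v_1$: since $v_1$ is local, that $PUT$ ran at $p^i_{n_1}$, and the dependency set it carried is exactly $v_1.ds$. From $v_1 \deps v_2$, Observation \ref{obs:dep_h_ut} yields $\langle j,h_j\rangle \in v_1.ds$ with $h_j \geq v_2.ut$, and Observation \ref{obs:dep_h'_h} yields, for each $\langle k,h\rangle \in v_2.ds$, some $\langle k,h'\rangle \in v_1.ds$ with $h' \geq h$. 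By Line \ref{line:put_max_dsv} of Algorithm \ref{alg:server1} that $PUT$ raised $DSV^i_{n_1}$ so that $DSV^i_{n_1}[j] \geq h_j$ and $DSV^i_{n_1}[k] \geq h' \geq h$; by Observation \ref{obs:dsv_does_not_decrease} these inequalities survive forever after. The $GET$ that reads $v_1$ happens at the same partition $p^i_{n_1}$ and after that $PUT$, so the reply it returns carries such a $DSV^i_{n_1}$, and Line \ref{line:client_update_dsv} of Algorithm \ref{alg:client} forces $DSV_c \geq DSV^i_{n_1}$ entry-wise; since $c$ only ever raises $DSV_c$ by entry-wise maxima, the domination persists through all later operations of $c$.

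With that in hand I would finish as follows. The assumed failing $GET(k_2)$ carries $DSV_c$, so by Line \ref{line:get:takeMaxDSV} of Algorithm \ref{alg:server1} the serving partition sets $DSV^i_{n_2} \geq DSV_c$ before evaluating the selection on Line \ref{line:get:obtain}; hence at that point $DSV^i_{n_2}[j] \geq v_2.ut$ and $DSV^i_{n_2}[k] \geq h$ for all $\langle k,h\rangle \in v_2.ds$. Now Lemma \ref{lem:not_visible_get}, applied to $v_2$ (written at $p^j_{n_2}$, reader in data center $i$), says that either $v_2$ has not arrived $p^i_{n_2}$ --- which Observation \ref{obs:fifo} contradicts, since it would force $DSV^i_{n_2}[j] < v_2.ut$ --- or some $\langle k,h\rangle \in v_2.ds$ has $DSV^i_{n_2}[k] < h$, contradicting what was just shown. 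Either branch is absurd, so $v_2$ is visible+ for $\{GET\}$ to $c$.

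I expect the main obstacle to be the local version $v_1$: a $GET$ returning a local version does no dependency check (Line \ref{line:get:obtain} short-circuits on $d.sr = m$), so nothing about the serving partition's $DSV$ can be read off from that $GET$ itself. The only route I see to the needed $DSV$ bound is to trace back to the $PUT$ that produced $v_1$, note that its dependency set (which becomes $v_1.ds$) already covers all dependencies of $v_2$ by Observations \ref{obs:dep_h'_h} and \ref{obs:dep_h_ut}, and then push that information forward along the monotone chain $DSV^i_{n_1} \to DSV_c \to DSV^i_{n_2}$. I would also double-check that this is uniform in whether the dependency's source data center $k$ equals $i$ or not --- it is, because Lines \ref{line:put_max_dsv}, \ref{line:client_update_dsv}, and \ref{line:get:takeMaxDSV} all take entry-wise maxima over every coordinate.
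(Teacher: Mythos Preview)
Your proposal is correct. It differs from the paper's proof primarily in how the case ``$v_2$ has not arrived $p^i_{n_2}$'' is handled. The paper invokes Lemma~\ref{lem:local_dep_has_read} there: since $v_1$ is local and $v_2$ is not, some client in data center $i$ must already have read $v_2$, or a non-local $v_3$ with $v_3 \deps v_2$; it then uses the read-time conditions (Observation~\ref{obs:GET} for a $GET$, or the $sv$ bound for an $ROTX$) together with Lemma~\ref{lem:real_is_greater} to force $DSV^i_{real}[m] \geq v_2.ut$ and contradict Observation~\ref{obs:fifo_real}. You instead reuse the same mechanism you set up for the second case: the $PUT$ that created $v_1$ raised $DSV^i_{n_1}$ at Line~\ref{line:put_max_dsv} by $v_1.ds$, and since Observation~\ref{obs:dep_h_ut} puts $\langle j,h_j\rangle \in v_1.ds$ with $h_j \geq v_2.ut$, the chain $DSV^i_{n_1} \to DSV_c \to DSV^i_{n_2}$ already yields $DSV^i_{n_2}[j] \geq v_2.ut$, contradicting Observation~\ref{obs:fifo} directly. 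Your route is more uniform---both branches of Lemma~\ref{lem:not_visible_get} are closed by the same monotone $DSV$ propagation---and sidesteps Lemma~\ref{lem:local_dep_has_read} entirely; the paper's route is more semantic, tracing back to an actual read event in data center $i$. For the remaining case (some $\langle k,h\rangle \in v_2.ds$ with $DSV^i_{n_2}[k] < h$) and for the local-$v_2$ shortcut via Lemma~\ref{lemma:local_visibility}, your argument matches the paper's.
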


\begin{proof}
Wlog assume $v_1$ and $v_2$ are versions for keys $k_1$, and $k_2$ written in partitions $n_1$ and $n_2$, respectively. Also, assume $v_1$ is written in data center $i$, and $v_2$ is written in data center $m$. 
According to Lemma \ref{lem:not_visible_get}, we have two cases: 

\begin{itemize}
\item [\textbf{Case 1}] $v_2$ has not arrived partition $p^i_{n_2}$ 

Since $v_1$ is a local version, and $v_2$ is not a local version, according to Lemma \ref{lem:local_dep_has_read}, we have two cases: 
\begin{itemize}
\item there is a client at data center $i$ that has read $v_2$

Contradiction to assumption that $v_2$ has not arrived partition $p^i_{n_2}$. 

\item there is a client at data center $i$ that has read non-local version $v_3$ such that $v_3 \deps v_2$

Wlog, assume $v_3$ is written in partition $n_3$. 

Since $v_2$ has not arrived data center $i$, according to Observation \ref{obs:fifo_real}, $DSV^i_{real}[m] < v_2.ut$. Since $v_3 \deps v_2$, according to Observation \ref{obs:dep_h_ut}, there is $\langle m , h\rangle \in v_3.ds$ such that $h \geq v_2.ut$. We have two cases for reading $v_3$: 
\begin{itemize}
\item $v_3$ has been read by a $GET$ operation: Since $v_3$ is not a local version, according to Observation \ref{obs:GET} for any $\langle m, h\rangle \in v_3.ds$, $DSV^m_{n_1}[m] \geq h$. According to Lemma \ref{lem:real_is_greater}, $DSV^i_{n_1}[m] \leq DSV^i_{real}[m]$. Thus, $DSV^i_{real}[m] \geq v_2.ut$ (contradiction).

\item $v_3$ has been read by an $ROTX$ operation:  for any $\langle m, h\rangle \in v_3.ds$, $sv[m] \geq h$. According to Observation \ref{obs:sv_less_than_dsv_max}, $sv[m] \leq DSV^i_{real}[m]$. Thus, $DSV^i_{real}[m] \geq v_2.ut$ (contradiction).
\end{itemize} 

\end{itemize}

\item [\textbf{Case 2}] there is a member $\langle k, h\rangle \in v_2.ds$, such that $DSV^i_{n_2}[k] < h$. 

 Since $v_1 \deps v_2$, according to Observation \ref{obs:dep_h'_h}, there is member $\langle k, h'\rangle \in v_1.ds$ such that $h' \geq h$. According to Line \ref{line:put_max_dsv} of Algorithm \ref{alg:server1}, we update the $DSV$ value with the set of dependecies when we write a new version in a $PUT$ operation. Thus, $DSV^i_{n_1}[k] \geq h'$ that leads to $DSV^i_{n_1}[k] \geq h$. Since client read $v_1$ before reading $k_2$, the $DSV^i_{n_2}[k]$ at time of reading $k_2$ is greater or equal to $DSV^i_{n_1}$, according to Lines \ref{line:client_update_dsv} of Algorithm \ref{alg:client}, and Line \ref{line:get:takeMaxDSV} of Algorithm \ref{alg:server1}. Thus, $DSV^i_{n_2}[k] \geq h$ (contradiction). 
\end{itemize}
\end{proof}

\begin{theorem}
\label{lem:p_for_get_get}
The data store running \name protocol defined in Section \ref{sec:protocol} is causal+ consistent for $\{GET\}$ with reader operations $\{GET\}$ for $f =$ last-writer-wins. 
\end{theorem}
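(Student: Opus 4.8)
## Proof Proposal

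The plan is to verify the three conditions of causal+ consistency (Definition \ref{def:causal+}) for $O = R = \{GET\}$ under the last-writer-wins conflict resolution function, leveraging the battery of observations and lemmas already established. Crucially, all the hard work has been front-loaded into the lemmas, so this proof is essentially an assembly step. I would first note that the three bullets of Definition \ref{def:causal+} correspond almost verbatim to lemmas already proved, and then handle the one case that still needs attention.

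First I would dispatch the easy bullets. \textbf{Read-your-writes:} any version written by client $c$ at its data center $i$ is immediately visible+ for $\{GET\}$ to $c$ by Lemma \ref{lemma:local_visibility} (since $c$ is sticky, $c$'s writes are all local), which a fortiori gives visibility+. \textbf{Monotonic reads:} once $c$ reads a version by a $GET$, it stays visible+ for $\{GET\}$ to any client in data center $i$ — this is exactly Lemma \ref{lem:remains_visible_get}. So the only condition requiring real argument is the third: if $c$ has read $v_1$ via a $GET$ and $v_1 \deps v_2$, then $v_2$ is visible+ for $\{GET\}$ to $c$.

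For that third condition I would split on whether $v_1$ is a local version (written in $c$'s data center $i$) or not. If $v_1$ is local, Lemma \ref{lem:local_dep_visibility} applies directly and we are done. If $v_1$ is non-local, I would argue as follows: since $c$ read $v_1$ via $GET$ in data center $i$ and $v_1$ was written in some partition $p^m_{n_1}$, Observation \ref{obs:GET} gives $DSV^i_{n_1}[j] \ge h$ for every $\langle j, h \rangle \in v_1.ds$. Now let $v_2$ be written in data center $m'$; by Observation \ref{obs:dep_h_ut}, $v_1.ds$ contains some $\langle m', h\rangle$ with $h \ge v_2.ut$, and by Observation \ref{obs:dep_h'_h} every dependency entry $\langle k, h''\rangle \in v_2.ds$ is dominated by a corresponding entry in $v_1.ds$. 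Combining these with monotonicity of DSV (Observation \ref{obs:dsv_does_not_decrease}) and the fact that the client carries $DSV_c$ forward and pushes it to the partition hosting $k_2$ on the subsequent $GET(k_2)$ (Line \ref{line:client_update_dsv} of Algorithm \ref{alg:client}, Line \ref{line:get:takeMaxDSV} of Algorithm \ref{alg:server1}), I get that when $c$ later reads $k_2$, the hosting partition's DSV dominates every entry of $v_2.ds$ and also $v_2$ has already arrived (since $DSV$ dominating $v_2.ut$ forces arrival, via Observation \ref{obs:fifo} contrapositively). Then Lemma \ref{lem:not_visible_get} says $v_2$ cannot fail to be visible+, which is what we want.

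The main obstacle I anticipate is the bookkeeping in the non-local case of the third condition: one must be careful that the client's $DS_c$ (not just $DSV_c$) correctly propagates the dependency information of $v_1$ forward, and that the $maxDS$ update at Line \ref{line:client_update_ds} together with the server taking $\max$ with $ds$ at Line \ref{line:put_max_dsv}/\ref{line:get:takeMaxDSV} actually lifts the hosting partition's DSV above every entry of $v_2.ds$. The subtlety is that $v_2$'s dependencies may live in yet other data centers, so one genuinely needs the transitivity captured in Observation \ref{obs:dep_h'_h} rather than a one-step argument. Once that chain of inequalities is laid out explicitly, the proof closes by citing Lemma \ref{lem:not_visible_get}. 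I would also remark that Observations \ref{obs:comp_R} and \ref{obs:comp_O} are what will later let us combine this theorem with the analogous $ROTX$ statement, but that is not needed here.
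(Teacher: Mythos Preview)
Your proposal is correct and follows essentially the same approach as the paper: the first two bullets are dispatched by Lemmas~\ref{lemma:local_visibility} and~\ref{lem:remains_visible_get}, and the third bullet is handled by splitting on whether $v_1$ is local (Lemma~\ref{lem:local_dep_visibility}) or non-local (Observation~\ref{obs:GET} plus the DSV-propagation chain through Observations~\ref{obs:dep_h_ut}, \ref{obs:dep_h'_h}, \ref{obs:fifo} and Lemma~\ref{lem:not_visible_get}). The only organizational difference is that the paper argues the third bullet by contradiction, first invoking Lemma~\ref{lem:not_visible_get} to branch on the two failure modes and then using Lemma~\ref{lem:local_dep_visibility} inside each case to rule out $v_1$ being local, whereas you branch on locality of $v_1$ up front; the logical content is the same. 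One minor slip: your reference to Line~\ref{line:put_max_dsv} (the PUT handler) is a red herring here, since the propagation you need for a subsequent GET goes entirely through $DSV_c$ via Lines~\ref{line:client_update_dsv} and~\ref{line:get:takeMaxDSV}, not through $DS_c$.
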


\begin{proof}
We prove this theorem by showing that \name protocol satisfies Definition \ref{def:causal+}. 

The first condition is satisfied based on the fact that clients are sticky and Lemma \ref{lemma:local_visibility}. The second condition  is satisfied based on Lemma \ref{lem:remains_visible_get}. 
We prove the third condition via contradiction: 

Let $k_1$ and $k_2$ be any two arbitrary keys in the store residing in partitions $n_1$ and $n_2$. Let $v_1$ be a version of key $k_1$, and $v_2$ be a version of key $k_2$ such that $v_1 \deps v_2$. Now suppose client $c$ reads $v_1$ in data center $i$ via a $GET$ operation, but $v_2$ is not visible+ for a $GET$ operation to client $c$. 
%
%
Let $v_2$ be a version written in a data center $m$. According to Lemma \ref{lem:not_visible_get}, two cases are possible: 
\begin{itemize}
\item  [\textbf{Case 1}] $v_2$ has not arrived partition $p^i_{n_2}$.

Since $v_2$ has not arrived at data center $i$, according to  Observation \ref{obs:fifo}, $DSV^i_{n_1}[m]< v_2.ut$. Since, $v_1 \deps v_2$, according to Observation \ref{obs:dep_h_ut}, $v_1.ds$ has member $\langle m, h\rangle$ such that $h \geq v_2.ut$. According to Lemma \ref{lem:local_dep_visibility}, $v_1$ is not a local version. Since $GET(k_1) = v_1$, and $v_1$ is not a local version, according to Observation \ref{obs:GET}, for any member $\langle j, h\rangle \in v.ds$, $DSV^i_{n_1}[j] \geq h$. Thus, $DSV^i_{n_1}[m] \geq v_2.ut$ (contradiction).

\item [\textbf{Case 2}] $v_2$ has arrived, but there is a member $\langle k, h\rangle \in v_2.ds$, such that $DSV^i_{n_2}[k] < h$. 

Since client $c$ asks for $k_2$ after reading $v_1$, according to Line \ref{line:get:takeMaxDSV} of Algorithm \ref{alg:server1}, $DSV^i_{n_2}[k]$ at the time of reading $k_2$ is higher than $DSV^i_{n_1}[k]$ at the time of reading $k_1$ ($DSV^i_{n_1}[k] \leq DSV^i_{n_2}[k]$). Since $v_1 \deps v_2$, according to Observation \ref{obs:dep_h'_h}, there is member $\langle k, h'\rangle \in v_1.ds$ such that $h' \geq h$. According to Lemma \ref{lem:local_dep_visibility}, $v_1$ is not a local version. Since $GET(k_1) = v_1$, and $v_1$ is not a local version, according to Observation \ref{obs:GET}, $DSV^i_{n_1}[k] \geq h'$ that leads to $DSV^i_{n_2}[k] \geq h$ (contradiction).

\end{itemize}


\end{proof}



With Lemma \ref{lemma:local_visibility} and Theorem \ref{lem:p_for_get_get}, we have the following corollary: 
\begin{corollary}
\label{th:pp_for_get_get}
The data store running \name protocol defined in Section \ref{sec:protocol} is causal++ consistent for $\{GET\}$ for reader operations $\{GET\}$ and instantly visible operations $\{GET\}$ for conflict resolution function $f = \textsf{last-writer-wins}$. 
\end{corollary}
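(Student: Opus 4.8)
The plan is to read off Corollary~\ref{th:pp_for_get_get} directly from Definition~\ref{def:causal++} by instantiating it at $O=R=I=\{GET\}$ and $f=\textsf{last-writer-wins}$, and then discharging each of the two conjuncts in that definition with a result already proved. So the proof is short and modular: the real content lives in Lemma~\ref{lemma:local_visibility} and Theorem~\ref{lem:p_for_get_get}, and all that remains is to check that their statements line up verbatim with what Definition~\ref{def:causal++} asks for.

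Concretely, I would first recall that a store is causal++ consistent for $O$ with reader operations $R$ and immediate-visible operations $I$ for $f$ iff (i) it is causal+ consistent for $O$ with reader operations $R$ for $f$, and (ii) every version $v$ written in a data center $r$ is immediately visible+ for operations of $I$ for $f$ to any client accessing $r$. For our parameter choice, conjunct~(i) becomes ``the store is causal+ consistent for $\{GET\}$ with reader operations $\{GET\}$ for $f=\textsf{last-writer-wins}$'', which is exactly the statement of Theorem~\ref{lem:p_for_get_get}, so it holds. Conjunct~(ii) becomes ``every version written in data center $r$ is immediately visible+ for $\{GET\}$ for $f$ to any client accessing $r$'', which (since clients are sticky and thus a client accessing $r$ is precisely a client \emph{at} $r$) is exactly Lemma~\ref{lemma:local_visibility}. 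Hence both conjuncts hold, and the store satisfies Definition~\ref{def:causal++} for the stated parameters.

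There is no genuine obstacle here — the work has already been done in Lemma~\ref{lemma:local_visibility} (which in turn rests on Observations~\ref{obs:f_property} and~\ref{obs:dep_ut_ut} and Line~\ref{line:get:obtain} of Algorithm~\ref{alg:server1}) and in Theorem~\ref{lem:p_for_get_get} (which uses stickiness, Lemma~\ref{lem:remains_visible_get}, Lemma~\ref{lem:local_dep_visibility}, and the DSV-monotonicity/FIFO machinery of Lemmas~\ref{lem:real_is_greater}–\ref{lem:not_visible_get}). The only thing to be careful about is bookkeeping: confirming that ``immediately visible+'' in Lemma~\ref{lemma:local_visibility} is the same notion as ``immediately visible+'' invoked in the second bullet of Definition~\ref{def:causal++}, and that the conflict-resolution function is the same $f=\textsf{last-writer-wins}$ in both the lemma and the theorem so that the two conjuncts can be glued without any mismatch in $f$. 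Once that is observed, the corollary follows immediately, and I would simply write: ``By Theorem~\ref{lem:p_for_get_get} the first condition of Definition~\ref{def:causal++} holds, and by Lemma~\ref{lemma:local_visibility} the second holds; therefore the store is causal++ consistent for the stated parameters.''
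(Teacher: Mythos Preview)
Your proposal is correct and matches the paper's approach exactly: the paper derives the corollary directly from Lemma~\ref{lemma:local_visibility} and Theorem~\ref{lem:p_for_get_get}, which are precisely the two ingredients you invoke for the two conjuncts of Definition~\ref{def:causal++}.
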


\begin{lemma}
\label{lem:remains_visible_rotx_get}
Once a client $c$ reads version $v$ of key $k$ at partition $n$ by an $\{ROTX\}$  operation at data center $i$, it remains visible+ for $\{GET\}$ for $c$.
\end{lemma}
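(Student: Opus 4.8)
The plan is to follow the dependency information from the \textsc{SliceReq} handler that served $v$ back to $c$, and then push it forward into the hosting partition on any later $GET$. Fix the data center $i$ where $c$ is sticky, let $p^i_n$ be the partition hosting $k$, and let $p^i_{n'}$ be the coordinator partition of the $ROTX$ that returned $v$. Let $sv$ be the snapshot vector the coordinator chose at Line~\ref{line:rotx_setting_sv} of Algorithm~\ref{alg:rotx}. Since $p^i_n$ returned $v$ in response to the corresponding \textsc{SliceReq}, Line~\ref{line:obtain_sv} guarantees that every $\langle j,h\rangle \in \mathit{maxDS}(v.ds,\{\langle v.sr,v.ut\rangle\})$ satisfies $h \le sv[j]$; in particular $h \le sv[j]$ for every $\langle j,h\rangle \in v.ds$, and $v.ut \le sv[v.sr]$.

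Next I would show that $c$'s DSV knowledge dominates $sv$ from then on. By Observation~\ref{obs:dsv_does_not_decrease} the coordinator's $DSV^i_{n'}$ only grows between Lines~\ref{line:rotx_setting_sv} and~\ref{line:rotx_return}, so the $dsv$ carried in the \textsc{ROTXReply} is pointwise $\ge sv$; hence after Line~\ref{line:rotx_update_dsv} of Algorithm~\ref{alg:rotx} we have $DSV_c \ge sv$ componentwise. Because every subsequent update to $DSV_c$ (Line~\ref{line:client_update_dsv} of Algorithm~\ref{alg:client} and Line~\ref{line:rotx_update_dsv} of Algorithm~\ref{alg:rotx}) is a componentwise maximum, $DSV_c \ge sv$ still holds when $c$ later issues any operation.

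Now take any later $GET(k)$ by $c$. Since $c$ is sticky and $k$ resides on $p^i_n$, the request reaches $p^i_n$, which first sets $DSV^i_n \leftarrow \max(DSV^i_n, DSV_c)$ (Line~\ref{line:get:takeMaxDSV} of Algorithm~\ref{alg:server1}), so now $DSV^i_n \ge sv$ pointwise. The version $v$ is still in the version chain of $k$ at $p^i_n$ (chains only grow), and it satisfies the selection predicate of Line~\ref{line:get:obtain}: either $v.sr = i$ (first bullet), or every $\langle j,h\rangle \in v.ds$ has $h \le sv[j] \le DSV^i_n[j]$ (second bullet). Hence the returned version $d'$, being the one with lexicographically largest $\langle ut,sr\rangle$ among the candidates, satisfies $\langle d'.ut,d'.sr\rangle \ge_{\mathrm{lex}} \langle v.ut,v.sr\rangle$. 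If $d' = v$ we are done. Otherwise, since distinct versions of the same key written in the same data center receive distinct update times (HLC is strictly increasing at any single partition), $d' \neq v$ forces $\langle d'.ut,d'.sr\rangle >_{\mathrm{lex}} \langle v.ut,v.sr\rangle$; then $v \deps d'$ is impossible by Observation~\ref{obs:dep_ut_ut}, and the concurrent tie-break disjunct of Observation~\ref{obs:f_property} fails, so $f(v,d') \neq v$, i.e.\ $f(v,d') = d'$. In either case $v$ is visible$+$ for $\{GET\}$ to $c$ (Definition~\ref{def:visibility+}), which is the claim.

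I expect the bookkeeping of the second and third paragraphs — tracking $sv$ through $DSV_c$ into $DSV^i_n$ and thereby keeping $v$ a valid candidate in Line~\ref{line:get:obtain} — to be the crux; the final conflict-resolution step is a short application of Observations~\ref{obs:f_property} and~\ref{obs:dep_ut_ut}. The local-version case ($v.sr = i$) is dispatched trivially by the first bullet of the $GET$ predicate, and I would explicitly flag the (implicit) assumption that $\langle ut,sr\rangle$ uniquely identifies a version of a given key.
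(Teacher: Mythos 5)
Your proof is correct and takes essentially the same route as the paper: both hinge on the chain $sv \le DSV_c \le DSV^i_n$ (the coordinator's DSV only grows between setting $sv$ at Line~\ref{line:rotx_setting_sv} and replying, and Line~\ref{line:get:takeMaxDSV} then lifts the client's DSV into the partition's on the later $GET$). The paper phrases this as a contradiction via Lemma~\ref{lem:not_visible_get}, whereas you verify the selection predicate of Line~\ref{line:get:obtain} directly and then unwind the conflict-resolution step explicitly --- a stylistic difference only, though your handling of the local case and the final $f(v,d')$ comparison is somewhat more careful than the paper's.
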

\begin{proof}
Since the client has read the version, it is obviously written. According to Lemma \ref{lemma:local_visibility}, $v_2$ is not a local version. Now, according to Lemma \ref{lem:not_visible_get}, two cases are possible: 
\begin{itemize}
\item  [\textbf{Case 1}] $v$ has not arrived partition data center $i$.

Contradiction to the assumption that the client has read the version. 

\item [\textbf{Case 2}] $v$ has arrived, but there is a member $\langle j, h\rangle \in v.ds$, such that $DSV^i_{n}[j] < h$.

Since client $c$ asks for $k$ after reading $v$ by an $ROTX$ operations, according to Line \ref{line:get:takeMaxDSV} of Algorithm \ref{alg:server1}, $DSV^i_{n}[j]$ at the time of reading $k$ using $GET$ is higher than or equal to $DSV_c[j]$ after reading $k$ by $ROTX$ operation. Since $ROTX$ has read $v_1$, according to Line \ref{line:obtain_sv} of Algorithm \ref{alg:rotx}, $sv[j] \geq h$. Since $sv[j] \leq DSV_{c}[j]$, $DSV_{c}[j] \geq h$ that leads to $DSV^i_{n}[j] \geq h$ at the time $GET$ operation (contradiction).

\end{itemize}

\end{proof}

\begin{theorem}
\label{lem:p_for_get_rotx}
The data store running \name protocol defined in Section \ref{sec:protocol} is causal+ consistent for $\{GET\}$ with reader operations $\{ROTX\}$ for $f =$ last-write-wins. 
\end{theorem}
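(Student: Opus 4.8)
The plan is to verify the three clauses of Definition \ref{def:causal+} for $O=\{GET\}$, $R=\{ROTX\}$, and $f=\textsf{last-writer-wins}$, leaning on the apparatus already built for Theorem \ref{lem:p_for_get_get}. The first clause is immediate: since clients are sticky, every version a client $c$ writes is written in the single data center $i$ that $c$ accesses, so Lemma \ref{lemma:local_visibility} makes it immediately visible+ for $\{GET\}$ to $c$. The second clause is exactly Lemma \ref{lem:remains_visible_rotx_get}. Hence all the work is in the third clause: if $c$ has read $v_1$ (a version of $k_1$ at partition $n_1$) via an $ROTX$ with snapshot vector $sv$, and $v_1 \deps v_2$ (with $v_2$ a version of $k_2$ at partition $n_2$ written in data center $m$), then every subsequent $GET(k_2)$ by $c$ must make $v_2$ visible+.

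I would prove the third clause by contradiction, mirroring the two-case skeleton of the proof of Theorem \ref{lem:p_for_get_get}. Suppose some $GET(k_2)$ by $c$ at data center $i$ fails to make $v_2$ visible+. By Lemma \ref{lemma:local_visibility}, $v_2$ is remote, so $m \neq i$, and by Lemma \ref{lem:not_visible_get} either (i) $v_2$ has not arrived at $p^i_{n_2}$, or (ii) some $\langle k,h\rangle \in v_2.ds$ has $DSV^i_{n_2}[k] < h$ at the time of that $GET$. For case (ii): Observation \ref{obs:dep_h'_h} yields $\langle k,h'\rangle \in v_1.ds$ with $h'\geq h$; because the $ROTX$ slice returned $v_1$, Line \ref{line:obtain_sv} of Algorithm \ref{alg:rotx} forces $h' \leq sv[k]$; the $ROTX$ coordinator replies with its own $DSV$, which is $\geq sv$ by Line \ref{line:rotx_setting_sv} of Algorithm \ref{alg:rotx} and Observation \ref{obs:dsv_does_not_decrease}, so after Line \ref{line:rotx_update_dsv} of Algorithm \ref{alg:rotx} the client has $DSV_c[k]\geq sv[k]\geq h$; the later $GET(k_2)$ ships $DSV_c$, the server takes a max at Line \ref{line:get:takeMaxDSV} of Algorithm \ref{alg:server1}, and since these quantities are monotone we obtain $DSV^i_{n_2}[k]\geq h$ when Line \ref{line:get:obtain} of Algorithm \ref{alg:server1} runs, contradicting (ii). For case (i): Observation \ref{obs:dep_h_ut} gives $\langle m,h\rangle \in v_1.ds$ with $h\geq v_2.ut$. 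If $v_1$ is local to $i$, Lemma \ref{lem:local_dep_has_read} forces either a client at $i$ that already read $v_2$ (impossible under (i)) or a remote $v_3$ with $v_3 \deps v_2$; pushing the dependency of $v_3$ on $v_2$ through Observation \ref{obs:GET} (if $v_3$ was read by a $GET$) or Line \ref{line:obtain_sv} of Algorithm \ref{alg:rotx} (if by an $ROTX$), together with Lemma \ref{lem:real_is_greater} or Observation \ref{obs:sv_less_than_dsv_max}, yields $DSV^i_{real}[m]\geq v_2.ut$. If $v_1$ is remote, the slice bound of Line \ref{line:obtain_sv} of Algorithm \ref{alg:rotx} applied to $v_1$, plus Observation \ref{obs:sv_less_than_dsv_max}, gives $sv[m]\leq DSV^i_{real}[m]$ and hence $DSV^i_{real}[m]\geq v_2.ut$. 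Either way this contradicts Observation \ref{obs:fifo_real}.

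The only genuinely new ingredient relative to Theorem \ref{lem:p_for_get_get} is that the reader step is an $ROTX$ rather than a $GET$: wherever that proof invokes Observation \ref{obs:GET} about a partition's $DSV$, I substitute the $ROTX$ slice bound of Line \ref{line:obtain_sv} of Algorithm \ref{alg:rotx} together with the fact that the coordinator's returned $DSV$ dominates the snapshot $sv$ it used; the logical skeleton of the argument is otherwise unchanged. I expect the main obstacle to be the same bookkeeping hot spot as in Theorem \ref{lem:p_for_get_get} --- case (i) with $v_1$ itself remote, where one must trace which operation first carried knowledge of $v_2$ into data center $i$ and carefully relate the per-partition $DSV$s, the $ROTX$ snapshot $sv$, and $DSV^i_{real}$ --- but since all the needed relations are already isolated as observations and lemmas above, no new difficulty should arise.
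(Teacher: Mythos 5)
Your proof is correct, and the first two clauses, together with the contradiction skeleton and your Case (ii), match the paper's proof essentially verbatim. The one place you deviate is Case (i): you split on whether $v_1$ is local to data center $i$, routing the local sub-case through Lemma \ref{lem:local_dep_has_read} and the machinery of Lemma \ref{lem:local_dep_visibility}, just as Theorem \ref{lem:p_for_get_get} is forced to do. But that extra branching is only needed when the reader operation is $GET$, because the $GET$ handler (Line \ref{line:get:obtain} of Algorithm \ref{alg:server1}) bypasses the dependency-vs-$DSV$ check whenever the returned version is local, so one cannot extract a bound on $DSV$ from the mere fact that a local $v_1$ was returned. The $ROTX$ slice handler has no such bypass: Line \ref{line:obtain_sv} of Algorithm \ref{alg:rotx} always requires every entry of $maxDS(d.ds, \{\langle d.sr, d.ut\rangle\})$ to be bounded by $sv$, whether $d$ is local or remote. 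The paper exploits this to keep Case (i) to a single chain --- Observation \ref{obs:dep_h_ut} gives $\langle m,h\rangle \in v_1.ds$ with $h \geq v_2.ut$, the slice condition gives $sv[m] \geq h$, Observation \ref{obs:sv_less_than_dsv_max} gives $sv[m] \leq DSV^i_{real}[m]$, contradicting Observation \ref{obs:fifo_real} --- which is exactly your ``$v_1$ remote'' branch. That branch already covers the local case too, so your local sub-case is correct but superfluous work you could drop by noticing the stronger (unconditional) bound that $ROTX$'s slice check supplies.
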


\begin{proof}
We prove this theorem by showing that \name protocol satisfies Definition \ref{def:causal+}.

The first condition is satisfied based on the fact that clients are sticky and Lemma \ref{lemma:local_visibility}. The second condition  is satisfied based on Lemma \ref{lem:remains_visible_rotx_get}. 
We prove the third condition via contradiction: 

Let $k_1$ and $k_2$ be any two arbitrary keys in the store residing in partitions $n_1$ and $n_2$. Let $v_1$ be a version of key $k_1$, and $v_2$ be a version of key $k_2$ such that $v_1 \deps v_2$. Now suppose client $c$ reads $v_1$ in data center $i$ via an $ROTX$ operation, but $v_2$ is not visible+ for a $GET$ operation to client $c$. 

According to Lemma \ref{lemma:local_visibility}, $v_2$ is not a local version. Let $v_2$ be a version written in a data center $m$. According Lemma \ref{lem:not_visible_get}, two cases are possible: 
\begin{itemize}
\item  [\textbf{Case 1}] $v_2$ has not arrived partition $p^i_{n_2}$. 

Since $v_2$ has not arrived at data center $i$, according to Observation \ref{obs:fifo_real}, $DSV^i_{real}[m]< v_2.ut$. Since, $v_1 \deps v_2$, according to Observation \ref{obs:dep_h_ut}, $v_1.ds$ has member $\langle m, h\rangle$ such that $h \geq v_2.ut$. Since an $ROTX$ operation has read  $v_1$, for any member $\langle j, h\rangle \in v.ds$, $sv[j] \geq h$. According to Observation \ref{obs:sv_less_than_dsv_max}, $sv[m] \leq DSV^i_{real}[m]$. Thus, $DSV^i_{real}[m] \geq v_2.ut$ (contradiction).  

\item [\textbf{Case 2}] $v_2$ has arrived, but there is a member $\langle k, h\rangle \in v_2.ds$, such that $DSV^i_{n_2}[k] < h$.

Since client $c$ asks for $k_2$ after reading $v_1$, according to Line \ref{line:get:takeMaxDSV} of Algorithm \ref{alg:server1}, $DSV^i_{n_2}[k]$ at the time of reading $k_2$ is higher than $DSV_{c}[k]$ after reading $k_1$. Since $v_1 \deps v_2$, according to Observation \ref{obs:dep_h'_h}, there is member $\langle k, h'\rangle \in v_1.ds$ such that $h' \geq h$. Since $ROTX$ has read $v_1$, according to Line \ref{line:obtain_sv} of Algorithm \ref{alg:rotx}, $sv[k] \geq h'$. Since $sv \leq DSV_{c}$, $DSV_{c}[k] \geq h'$ that leads to $DSV^i_{n_2}[k] \geq h$ (contradiction).
\end{itemize}
\end{proof}

With Lemma \ref{lemma:local_visibility} and Theorem \ref{lem:p_for_get_rotx}, we have the following corollary: 

\begin{corollary}
\label{th:pp_for_get_rotx}
The data store running \name protocol defined in Section \ref{sec:protocol} is causal++ consistent for $\{GET\}$ for reader operations $\{ROTX\}$ and instantly visible operations $\{GET\}$ for conflict resolution function $f = \textsf{last-writer-wins}$. 
\end{corollary}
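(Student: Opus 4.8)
The plan is to verify directly that the two defining conditions of causal++ consistency (Definition~\ref{def:causal++}), instantiated with $O = \{GET\}$, $R = \{ROTX\}$, $I = \{GET\}$, and $f = \textsf{last-writer-wins}$, have already been established by results proved earlier in this section, so that the corollary is obtained purely by assembling them.

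First, I would recall that Definition~\ref{def:causal++} requires (i) that the store be causal+ consistent for $O$ with set of reader operations $R$ for $f$, and (ii) that every version $v$ written in a data center $r$ be immediately visible+ for the operations in $I$ for $f$ to any client accessing $r$. For our instantiation, condition (i) reads: the store is causal+ consistent for $\{GET\}$ with reader operations $\{ROTX\}$ for $f = \textsf{last-writer-wins}$, which is exactly the statement of Theorem~\ref{lem:p_for_get_rotx}; no further argument is needed. Condition (ii) reads: every version written in data center $r$ is immediately visible+ for $\{GET\}$ for $f = \textsf{last-writer-wins}$ to any client accessing $r$, which is precisely Lemma~\ref{lemma:local_visibility}. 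Hence the corollary follows by citing Theorem~\ref{lem:p_for_get_rotx} and Lemma~\ref{lemma:local_visibility} and appealing to Definition~\ref{def:causal++}.

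There is no genuine obstacle remaining at this point, since the substantive work has already been discharged inside Theorem~\ref{lem:p_for_get_rotx}: stickiness of clients together with Lemma~\ref{lemma:local_visibility} for the "own writes" clause; Lemma~\ref{lem:remains_visible_rotx_get} for the "remains visible" clause; and, for the cross-operation clause, the case analysis on whether a dependency $v_2$ of an $ROTX$-read version $v_1$ has reached the hosting partition, using Observations~\ref{obs:dep_h_ut}, \ref{obs:fifo_real}, \ref{obs:sv_less_than_dsv_max}, \ref{obs:dep_h'_h} and the snapshot-vector bound enforced at Line~\ref{line:obtain_sv} of Algorithm~\ref{alg:rotx}. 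If I wanted a more structural presentation, I could instead invoke Observation~\ref{obs:comp_R} (composition over reader-operation sets) in combination with Corollary~\ref{th:pp_for_get_get}, but the most direct route is simply to observe that the two bullet points of Definition~\ref{def:causal++} are matched one-to-one by Theorem~\ref{lem:p_for_get_rotx} and Lemma~\ref{lemma:local_visibility}. The corollary is therefore immediate.
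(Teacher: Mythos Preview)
Your proposal is correct and matches the paper's approach exactly: the paper states the corollary as an immediate consequence of Lemma~\ref{lemma:local_visibility} and Theorem~\ref{lem:p_for_get_rotx}, which is precisely your verification that the two bullets of Definition~\ref{def:causal++} are discharged by those two results.
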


According to Corollaries \ref{th:pp_for_get_get} and \ref{th:pp_for_get_rotx}, and Observation \ref{obs:comp_R}, we have the following corollary: 

\begin{corollary}
\label{cor:pp_for_get}
The data store running \name protocol defined in Section \ref{sec:protocol} is causal++ consistent for $\{GET\}$ for reader operations $\{GET, ROTX\}$ and instantly visible operations $\{GET\}$ for conflict resolution function $f =$ last-write-wins. 
\end{corollary}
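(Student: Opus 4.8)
The plan is to derive this corollary as a pure composition result, since all of the protocol-specific reasoning has already been discharged in the preceding lemmas and corollaries. First I would note that Corollary \ref{th:pp_for_get_get} establishes causal++ consistency of the \name store for the set of operations $O = \{GET\}$ with reader operations $R_1 = \{GET\}$ and instantly-visible operations $I = \{GET\}$, under $f = \textsf{last-writer-wins}$; and Corollary \ref{th:pp_for_get_rotx} establishes the identical statement but with reader operations $R_2 = \{ROTX\}$, keeping $O$, $I$, and $f$ unchanged.

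Next I would apply Observation \ref{obs:comp_R}, whose hypotheses are met verbatim: it requires causal++ consistency for a fixed $O$, $I$, $f$ with two different reader sets $R_1$ and $R_2$, and concludes causal++ consistency for that same $O$, $I$, $f$ with reader set $R_1 \cup R_2$. Instantiating $R_1 = \{GET\}$ and $R_2 = \{ROTX\}$ yields $R_1 \cup R_2 = \{GET, ROTX\}$, which is exactly the reader set appearing in the statement to be proved. This completes the argument.

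I do not expect a genuine obstacle here. The only point deserving a second look is that the composition observation is being applied with matching parameters, i.e., that both input corollaries really do share the same $O = \{GET\}$, the same $I = \{GET\}$, and the same conflict-resolution function $f = \textsf{last-writer-wins}$ — which holds by inspection. Conceptually, Observation \ref{obs:comp_R} is itself immediate from Definition \ref{def:causal++}: each of the three clauses of causal++ consistency is phrased in terms of reads performed by ``one of the operations of $R$'', so enlarging $R$ to $R_1 \cup R_2$ is satisfied as soon as the clauses hold for $R_1$ and for $R_2$ separately. This modularity is precisely what the paper's parameterized definitions were designed to exploit, and it is what allows us to conclude correctness with respect to $GET$ and $ROTX$ simultaneously from the two independent proofs.
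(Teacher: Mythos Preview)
Your proposal is correct and matches the paper's own proof exactly: the paper derives this corollary in one line from Corollaries~\ref{th:pp_for_get_get} and~\ref{th:pp_for_get_rotx} together with Observation~\ref{obs:comp_R}. Your additional check that the shared parameters $O$, $I$, $f$ agree is appropriate and the conceptual remark about why Observation~\ref{obs:comp_R} holds is accurate, though the paper does not spell it out.
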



\subsubsection{Causal Consistency for ROTX operations}

\begin{lemma}
\label{lem:not_visible_rotx}
Let $v$ be a version that is written in partition $p^m_n$. If $v$ is not visible+ for $\{ROTX\}$ for conflict resolution function $f=$ last-write-wins to a client in data center $i$, then
\begin{itemize}
\item $v$ has not arrived partition $p^i_n$, or 
\item there is a member $\langle j, h\rangle \in v.ds$, such that $sv[j] < h$, or
\item $sv[v.sr] < v.ut.$ 
\end{itemize}

\end{lemma}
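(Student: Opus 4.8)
The plan is to prove the contrapositive by contradiction, in close parallel with the proof of Lemma \ref{lem:not_visible_get}. Suppose $v$ is a version of key $k$ written in partition $p^m_n$ that is \emph{not} visible+ for $\{ROTX\}$ (for last-writer-wins) to some client $c$ in data center $i$. By Definition \ref{def:visibility+}, there is an $ROTX$ operation by $c$ reading $k$ that returns a version $v' \neq v$ with $f(v,v') = v$. First I would invoke Observation \ref{obs:f_property}: either $v \deps v'$, in which case Observation \ref{obs:dep_ut_ut} gives $v.ut > v'.ut$, or $v$ and $v'$ conflict and $\langle v.ut, v.sr\rangle$ is lexicographically greater than $\langle v'.ut, v'.sr\rangle$. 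In both cases we obtain the strict inequality $\langle v.ut, v.sr\rangle > \langle v'.ut, v'.sr\rangle$, which is the fact we will contradict.

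Next I would assume, toward a contradiction, that all three disjuncts in the statement fail for the snapshot vector $sv$ used by the $\textsc{SliceReq}$ serving $k$: (i) $v$ has arrived partition $p^i_n$, so it lies in the version chain of $k$ there; (ii) for every $\langle j, h\rangle \in v.ds$ we have $h \leq sv[j]$; and (iii) $sv[v.sr] \geq v.ut$. The heart of the argument is to check that $v$ then satisfies the selection predicate on Line \ref{line:obtain_sv} of Algorithm \ref{alg:rotx}, namely that $h \leq sv[j]$ for every $\langle j, h\rangle \in maxDS(v.ds, \{\langle v.sr, v.ut\rangle\})$. For entries with $j \neq v.sr$ this is exactly (ii); for the entry with $j = v.sr$, whose value is $\max(v.ut, h')$ where $\langle v.sr, h'\rangle$ is the (possibly absent) $v.sr$-entry of $v.ds$, both (iii) and (ii) give $v.ut \leq sv[v.sr]$ and $h' \leq sv[v.sr]$, hence $\max(v.ut, h') \leq sv[v.sr]$. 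Therefore $v$ is a candidate on Line \ref{line:obtain_sv}, and since the handler returns the candidate $d$ with lexicographically highest $\langle d.ut, d.sr\rangle$, we get $\langle d.ut, d.sr\rangle \geq \langle v.ut, v.sr\rangle$. Because the $ROTX$ handler forwards the $\textsc{SliceReply}$ value unchanged into $vset$, the version returned for $k$ is exactly this $d$, i.e. $v' = d$, so $\langle v'.ut, v'.sr\rangle \geq \langle v.ut, v.sr\rangle$, contradicting the strict inequality from the first paragraph. Hence at least one disjunct must hold.

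I expect the only mildly delicate point to be the bookkeeping around the self-entry $\langle v.sr, v.ut\rangle$ inside $maxDS$: one must split the predicate into the $j \neq v.sr$ part (covered by disjunct two) and the $j = v.sr$ part (which is precisely what the new disjunct $sv[v.sr] < v.ut$, together with disjunct two, rules out), and observe — as in the $GET$ case — that the update $DSV^m_n[m] \leftarrow \max(DSV^m_n[m], sv[m])$ on Line \ref{line:ubdateDsvWithSv} is irrelevant since the selection predicate on Line \ref{line:obtain_sv} is phrased entirely in terms of $sv$ and not $DSV^m_n$. Everything else is a direct transcription of the $GET$ argument with $sv$ playing the role that $DSV^i_n$ plays there.
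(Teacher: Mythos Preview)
Your proposal is correct and follows essentially the same approach as the paper: the paper's proof simply says it is ``the same as that of Lemma~\ref{lem:not_visible_get}, except we must note Line~\ref{line:obtain_sv} of Algorithm~\ref{alg:rotx},'' which is exactly the contrapositive/contradiction argument you spell out. Your version is more detailed---in particular you make explicit the bookkeeping for the self-entry $\langle v.sr, v.ut\rangle$ inside $maxDS$, which is precisely the reason the third disjunct $sv[v.sr] < v.ut$ appears here but not in the $GET$ lemma---but the underlying logic is identical.
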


\begin{proof}
The poof of this lemma is the same as that of Lemma \ref{lem:not_visible_get}, except we must note Line \ref{line:obtain_sv} of Algorithm \ref{alg:rotx}.
\end{proof}

\begin{lemma}
\label{lem:self-writtens_are_visible_to_rotx}
All versions written by client $c$ are visible+ for $\{ROTX\}$ to client $c$. 
\end{lemma}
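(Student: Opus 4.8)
The plan is to follow an arbitrary version written by $c$ through a later ROTX and show that the value the ROTX returns for that version's key either equals the version or beats it under last-writer-wins. Fix a client $c$; by stickiness $c$ accesses a single data center $m$. Suppose $c$ writes a version $v$ of key $k$ via a PUT handled by partition $p^m_{n'}$. Two facts are immediate from the PUT handler of Algorithm \ref{alg:server1}: $v$ is inserted into the version chain of $k$ at $p^m_{n'}$ with $v.sr = m$; and $v.ds$ equals the value of $DS_c$ at the moment the request reached the server, since the server copies it unchanged into $d.ds$ (the $maxDS$ at Line \ref{line:put_max_dsv} modifies the server's DSV, not the incoming set). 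From Line \ref{line:put_update_ds} of Algorithm \ref{alg:client}, as soon as $c$ gets the reply it sets $DS_c \leftarrow maxDS(DS_c, \{\langle v.sr, v.ut\rangle\})$, so immediately after this PUT we have $DS_c = maxDS(v.ds, \{\langle v.sr, v.ut\rangle\})$. Since $maxDS$ never lowers an existing entry and never deletes one, $DS_c$ dominates $maxDS(v.ds, \{\langle v.sr, v.ut\rangle\})$ entry-wise at every later point, in particular whenever $c$ issues an ROTX whose key set contains $k$.

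Next I would trace that ROTX. The coordinating partition $p^m_n$ folds the client's $DS_c$ into $DSV^m_n$ at Line \ref{line:rotx_taking_max} of Algorithm \ref{alg:rotx} and then sets $sv \leftarrow DSV^m_n$ at Line \ref{line:rotx_setting_sv}; hence $sv[i] \ge h$ for every $\langle i, h\rangle \in DS_c$, and so, by the previous paragraph, $sv[i] \ge h$ for every $\langle i, h\rangle \in maxDS(v.ds, \{\langle v.sr, v.ut\rangle\})$. Now consider the \textsc{SliceReq} for $k$ served by $p^m_{n'}$ at Line \ref{line:obtain_sv}: the version $v$ is in the chain of $k$ there, and $v$ satisfies the selection predicate, because every entry of $maxDS(v.ds, \{\langle v.sr, v.ut\rangle\})$ is at most the corresponding entry of $sv$. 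Since the partition returns the \emph{latest} such version, i.e., the one with lexicographically largest $\langle ut, sr\rangle$ among those that satisfy the predicate, the returned version $d$ has $\langle d.ut, d.sr\rangle$ lexicographically at least $\langle v.ut, v.sr\rangle$.

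I would close with a short case split using Observations \ref{obs:f_property} and \ref{obs:dep_ut_ut}. If $d = v$ we are done. Otherwise $v \deps d$ is impossible, because it would force $v.ut > d.ut$ by Observation \ref{obs:dep_ut_ut}, contradicting $\langle d.ut, d.sr\rangle$ being lexicographically at least $\langle v.ut, v.sr\rangle$. If $d \deps v$, then $f(v, d) = d$ by Observation \ref{obs:f_property}. If $d$ and $v$ conflict, then $d \neq v$ combined with the lexicographic inequality makes $\langle d.ut, d.sr\rangle$ strictly lexicographically greater than $\langle v.ut, v.sr\rangle$, so again $f(v, d) = d$ by Observation \ref{obs:f_property}. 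In every case the value the ROTX returns for $k$ is $v$ or $f(v, \cdot)$, which is exactly what Definition \ref{def:visibility+} requires for $O = \{ROTX\}$.

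The step I expect to be the main obstacle is the first one: pinning down that $v.ds$ is precisely the $DS_c$ current when the PUT was received (so that $c$'s post-PUT $DS_c$ coincides with $maxDS(v.ds, \{\langle v.sr, v.ut\rangle\})$), and that monotonicity of $maxDS$ together with Line \ref{line:rotx_taking_max} carries this domination all the way into the $sv$ actually used by the \textsc{SliceReq}. Once $v$ is exhibited as an admissible candidate at Line \ref{line:obtain_sv}, the remaining conflict-resolution case analysis is routine and parallels the arguments already used in Lemma \ref{lemma:local_visibility} and Lemma \ref{lem:not_visible_get}.
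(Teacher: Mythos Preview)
Your proof is correct, and the core argument coincides with the paper's: after writing $v$, the client's $DS_c$ dominates $maxDS(v.ds,\{\langle v.sr,v.ut\rangle\})$; this domination is carried into $sv$ via Line~\ref{line:rotx_taking_max}, so $v$ passes the selection predicate at Line~\ref{line:obtain_sv}. The only structural difference is packaging. The paper proceeds by contradiction through Lemma~\ref{lem:not_visible_rotx}, which already encapsulates the three ways a version can fail to be visible+ for ROTX (not arrived; some $\langle j,h\rangle\in v.ds$ with $sv[j]<h$; or $sv[v.sr]<v.ut$), and simply rules out each case in one line. You instead unfold that lemma inline: you exhibit $v$ directly as an admissible candidate and then redo the last-writer-wins case analysis explicitly using Observations~\ref{obs:f_property} and~\ref{obs:dep_ut_ut}. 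Your version is more self-contained and makes the monotonicity of $DS_c$ and the role of Line~\ref{line:put_update_ds} explicit; the paper's is shorter because the conflict-resolution reasoning has already been factored into Lemma~\ref{lem:not_visible_rotx}.
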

\begin{proof}
Suppose version $v$ written by client $c$ is not visible to $ROTX$. Suppose client $c$ accesses data center $i$. Now, according to Lemma \ref{lem:not_visible_rotx}, we have following cases: 

\begin{itemize}
\item [\textbf{Case 1}] $v$ has not arrived data center $i$
Impossible, since the version is written by client $c$

\item there is a member $\langle j, h\rangle \in v.ds$, such that $sv[j] < h$
Since the client has written version $v$, for any $\langle j, h\rangle \in v.ds$, $DS_c[j] > h$. According to Line \ref{line:rotx_taking_max}, $sv[j] > h$ (contradiction).

\item $sv[v.sr] < v.ut.$ 
Since the client has written version $v$, $DS_c[v.sr] > v.ut$. According to Line \ref{line:rotx_taking_max}, $sv[j] > v.ut$ (contradiction). 

\end{itemize}

\end{proof}

\begin{lemma}
\label{lem:local_dep_visibility_for_rotx}
Once client $c$ reads \textit{local} version $v_1$ by a $GET$ operation, any version $v_2$ such that $v_1 \deps v_2$ is visible+ for $\{ROTX\}$ to $c$. 
\end{lemma}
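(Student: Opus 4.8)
The plan is to prove Lemma \ref{lem:local_dep_visibility_for_rotx} by contradiction, following the same skeleton as its $\{GET\}$ analogue, Lemma \ref{lem:local_dep_visibility}, but replacing every use of Lemma \ref{lem:not_visible_get} by the corresponding use of Lemma \ref{lem:not_visible_rotx}. Suppose client $c$ at data center $i$ reads a local version $v_1$ of key $k_1$ (hosted at partition $n_1$) via a $GET$, and that afterwards some $ROTX$ by $c$ that reads key $k_2$ (hosted at partition $n_2$) witnesses that $v_2$ is not visible+ for $\{ROTX\}$ to $c$, where $v_1 \deps v_2$; let $sv$ be the snapshot vector this offending $ROTX$ uses. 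By Lemma \ref{lem:not_visible_rotx}, one of the following holds: (i) $v_2$ has not arrived at $p^i_{n_2}$; (ii) some $\langle k, h\rangle \in v_2.ds$ has $sv[k] < h$; or (iii) $sv[v_2.sr] < v_2.ut$. I would derive a contradiction from each.

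The preliminary step is to record what $c$ learns by reading $v_1$. Since $v_1$ is local, the $GET$ returns the dependency set $maxDS(v_1.ds, \{\langle i, v_1.ut\rangle\})$ (Line \ref{line:includeUt} of Algorithm \ref{alg:server1}), which $c$ merges into $DS_c$ (Line \ref{line:client_update_ds} of Algorithm \ref{alg:client}); thus after the $GET$, for every $\langle k, h'\rangle \in v_1.ds$ the set $DS_c$ contains an entry $\langle k, h''\rangle$ with $h'' \geq h'$, and $DS_c$ contains $\langle i, h''\rangle$ with $h'' \geq v_1.ut$. Since the coordinator of the later $ROTX$ sets $sv \leftarrow max(DSV^i_n, DSV_c, DS_c)$ (Lines \ref{line:rotx_taking_max} and \ref{line:rotx_setting_sv} of Algorithm \ref{alg:rotx}) and these vectors never decrease (Observation \ref{obs:dsv_does_not_decrease}), we get $sv[k] \geq h'$ for every $\langle k, h'\rangle \in v_1.ds$ and $sv[i] \geq v_1.ut$. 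Now cases (ii) and (iii) close at once: in (ii), Observation \ref{obs:dep_h'_h} yields $\langle k, h'\rangle \in v_1.ds$ with $h' \geq h$, so $sv[k] \geq h$, contradicting $sv[k] < h$; in (iii), setting $m = v_2.sr$, Observation \ref{obs:dep_h_ut} yields $\langle m, h\rangle \in v_1.ds$ with $h \geq v_2.ut$, so $sv[m] \geq v_2.ut$, contradicting $sv[v_2.sr] < v_2.ut$.

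The main obstacle is case (i), where $v_2$ has not arrived at $p^i_{n_2}$. If $v_2$ were local it would already reside at $p^i_{n_2}$, so $v_2$ is non-local; then, since $v_1$ is local and $v_1 \deps v_2$, Lemma \ref{lem:local_dep_has_read} supplies a client $c'$ at data center $i$ that has read either $v_2$ itself --- so $v_2$ has arrived at $p^i_{n_2}$, a contradiction --- or a non-local version $v_3$ with $v_3 \deps v_2$. In the latter subcase, letting $m$ be the data center where $v_2$ was written, Observation \ref{obs:dep_h_ut} gives $\langle m, h\rangle \in v_3.ds$ with $h \geq v_2.ut$; splitting on how $c'$ read $v_3$, a $GET$ yields $DSV^i_{real}[m] \geq v_2.ut$ via Observation \ref{obs:GET} together with the first item of Lemma \ref{lem:real_is_greater}, and an $ROTX$ yields the same via Line \ref{line:obtain_sv} of Algorithm \ref{alg:rotx} together with Observation \ref{obs:sv_less_than_dsv_max}. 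Either way this contradicts Observation \ref{obs:fifo_real}, which forces $DSV^i_{real}[m] < v_2.ut$ so long as $v_2$ has not reached data center $i$. This subcase is essentially identical to Case 1 of the proof of Lemma \ref{lem:local_dep_visibility}, so I would keep it brief and refer back to that argument. Once all three cases are contradictory, $v_2$ is visible+ for $\{ROTX\}$ to $c$, as required.
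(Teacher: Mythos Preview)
Your proof is correct and follows the same three-case skeleton as the paper, referring back to Case~1 of Lemma~\ref{lem:local_dep_visibility} for the ``not arrived'' case and closing Cases~(ii) and~(iii) via Observations~\ref{obs:dep_h'_h} and~\ref{obs:dep_h_ut} combined with $sv \geq DS_c$. The one place you differ is in Case~(ii): you route the bound $sv[k] \geq h'$ through the client's $DS_c$ (populated by Line~\ref{line:client_update_ds} of Algorithm~\ref{alg:client} after the $GET$ returns $v_1.ds$), whereas the paper routes it through $DSV_c$, arguing that the $PUT$ that created the local version $v_1$ already merged $v_1.ds$ into $DSV^i_{n_1}$ (Line~\ref{line:put_max_dsv} of Algorithm~\ref{alg:server1}), which the subsequent $GET$ then propagated to $DSV_c$. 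Both channels feed into $sv$ via Line~\ref{line:rotx_taking_max}, so both arguments are sound; yours is slightly more direct since it avoids the detour through the writing server's $DSV$.
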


\begin{proof}
Wlog assume $v_1$ and $v_2$ are versions for keys $k_1$, and $k_2$ written in partitions $n_1$ and $n_2$, respectively. Also, assume $v_2$ is written in data center $m$. If version $v_2$ is not visible+ for $\{ROTX\}$, Lemma \ref{lem:not_visible_rotx}, we have three cases:
\begin{itemize}
\item $v_2$ has not arrived partition $p^i_n$

In this case, with the same argument provided in Case 1 of proof of Lemma \ref{lem:local_dep_visibility}, we face a contradiction. 

\item there is a member $\langle k, h\rangle \in v_2.ds$, such that $sv[k] < h$. 

Since $v_1 \deps v_2$, according to Observation \ref{obs:dep_h'_h}, there is member $\langle k, h'\rangle \in v_1.ds$ such that $h' \geq h$. According to Line \ref{line:put_max_dsv} of Algorithm \ref{alg:server1}, we update the $DSV$ value with the set of dependencies when we write a new version in a $PUT$ operation. Thus, $DSV^i_{n_1}[k] \geq h'$ that leads to $DSV^i_{n_1}[k] \geq h$. Since client read $v_1$ before reading $k_2$, the $DSV_c[k]$ at time of reading $k_2$ is greater or equal to $DSV^i_{n_1}$. 
According to Line \ref{line:rotx_taking_max}, and \ref{line:rotx_setting_sv} of Algorithm \ref{alg:rotx}, the $sv$ of the $ROTX$ operation is entry-wise greater than $DSV_c$. Thus, $sv[k] \geq h$ (contradiction).

\item $sv[v_2.sr] < v_2.ut$.\\
Since $v_1 \deps v_2$, according to Observation \ref{obs:dep_h_ut}, $\langle v_2.sr, h\rangle \in v_1.ds$ such that $h \geq v_2.ut$. Since client read $v_1$ before reading $k_2$, $ds[v.sr] \geq h$ for $ROTX$ operation reading $k_2$. According to Line \ref{line:rotx_taking_max} and \ref{line:rotx_setting_sv}, $sv[v_2.sr] \geq v_2.ut$ (contradiction). 

\end{itemize}
\end{proof}

\begin{lemma}
\label{lem:remains_visible_get_rotx}
When client $c$ reads a version $v$ using $GET$ operation, it remains visible+ for $\{ROTX\}$ for $c$.
\end{lemma}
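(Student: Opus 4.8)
The plan is to argue by contradiction, mirroring the proofs of Lemmas \ref{lem:remains_visible_get} and \ref{lem:remains_visible_rotx_get}. Suppose $c$ reads version $v$ of some key $k$ (hosted in partition $n$, written in data center $m$) by a $GET$ operation while accessing data center $i$, but at some later point an $ROTX$ operation by $c$ that includes $k$ returns a version $v'$ of $k$ with $v' \neq v$ and $f(v, v') = v$ for $f = \textsf{last-writer-wins}$. By Observations \ref{obs:f_property} and \ref{obs:dep_ut_ut}, in either branch of Observation \ref{obs:f_property} we get that $\langle v.ut, v.sr\rangle$ is lexicographically greater than $\langle v'.ut, v'.sr\rangle$; the task is to derive a contradiction from the assumption that $v$ is not visible+ for $\{ROTX\}$ to $c$ at that later time.

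First I would record what the $GET$ on $v$ does to $c$'s metadata. By Line \ref{line:includeUt} of Algorithm \ref{alg:server1} the dependency set returned with $v$ equals $maxDS(v.ds, \{\langle v.sr, v.ut\rangle\})$, hence contains an entry $\langle v.sr, h_0\rangle$ with $h_0 \geq v.ut$ and, for every $\langle j, h_j\rangle \in v.ds$, an entry $\langle j, h'\rangle$ with $h' \geq h_j$. By Line \ref{line:client_update_ds} of Algorithm \ref{alg:client} the client then sets $DS_c \leftarrow maxDS(DS_c, ds)$, and since $maxDS$ only ever raises entries (so $DS_c$ is monotone over time), from this point on $DS_c$ has an entry $\langle v.sr, h\rangle$ with $h \geq v.ut$ and an entry $\langle j, h\rangle$ with $h \geq h_j$ for every $\langle j, h_j\rangle \in v.ds$.

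Next I would invoke Lemma \ref{lem:not_visible_rotx} on the offending $ROTX$: since $v$ is not visible+ for $\{ROTX\}$ to $c$, one of three cases holds. (i) $v$ has not arrived partition $p^i_n$: impossible, because $c$ already read $v$ by a $GET$ at data center $i$, so $v$ is present on $p^i_n$ (whether $m = i$ or $v$ was replicated there). (ii) there is $\langle j, h\rangle \in v.ds$ with $sv[j] < h$: the $ROTX$ coordinator, on receiving the request, runs Line \ref{line:rotx_taking_max} of Algorithm \ref{alg:rotx} ($DSV^i_{\cdot} \leftarrow max(DSV^i_{\cdot}, dsv, ds)$ with $ds = DS_c$) and then Line \ref{line:rotx_setting_sv} sets $sv$ to that $DSV$, so $sv$ dominates $DS_c$ entry-wise; with the previous paragraph this gives $sv[j] \geq h$, a contradiction. (iii) $sv[v.sr] < v.ut$: likewise $sv[v.sr] \geq DS_c[v.sr] \geq v.ut$, a contradiction. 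All three cases fail, so no such $v'$ exists and $v$ stays visible+ for $\{ROTX\}$ to $c$.

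The argument is essentially bookkeeping, and I do not expect a substantive obstacle; the only points that need care are (a) that the $GET$ reply already folds the returned version itself into the dependency set via Line \ref{line:includeUt}, so that $DS_c[v.sr] \geq v.ut$ holds afterwards, and (b) that the coordinator of the later $ROTX$ raises $sv$ above $DS_c$ (Lines \ref{line:rotx_taking_max}--\ref{line:rotx_setting_sv}) before issuing any \textsc{SliceReq}, so that the three conditions of Lemma \ref{lem:not_visible_rotx} are all forced false. Once those facts are in place together with monotonicity of $DS_c$, the case analysis closes immediately.
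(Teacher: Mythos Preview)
Your proposal is correct and follows essentially the same approach as the paper: proof by contradiction, then a three-case split via Lemma~\ref{lem:not_visible_rotx}, with each case closed by the fact that the $ROTX$ coordinator's $sv$ (Lines~\ref{line:rotx_taking_max}--\ref{line:rotx_setting_sv}) dominates the client's metadata after the earlier $GET$. The only minor difference is that in Case~2 the paper bounds $sv[j]$ via $DSV_c$ (using that the $\textsc{GetReply}$ carries $DSV^i_n$), whereas you bound it via $DS_c$ (using that the $\textsc{GetReply}$'s $ds$ already contains $v.ds$); both routes are valid since Line~\ref{line:rotx_taking_max} takes the max over both $dsv$ and $ds$, and your version has the small advantage of not needing a separate check for the local-version case.
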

\begin{proof}
Suppose $v$ is not visible+ for $\{ROTX\}$ to client $c$. Since client has read the version before, obviously it is written. According to Lemma \ref{lem:not_visible_rotx}, there are three cases: 

\begin{itemize}
\item [\textbf{Case 1}] $v$ has not arrived the hosting partition. 

Contradiction to the assumption that the client has read the version. 

\item [\textbf{Case 2}]  there is a member $\langle j, h\rangle \in v.ds$, such that $sv[j] < h$ 

Since the client has read $v$, the $DSV_c[j] \geq h$. According to Line \ref{line:rotx_taking_max}, and \ref{line:rotx_setting_sv} of Algorithm \ref{alg:rotx}, the $sv$ of the $ROTX$ operation is entry-wise greater than $DSV_c$. Thus, $sv[j] \geq h$ (contradiction). 

\item [\textbf{Case 3}]  $sv[v.sr] < v.ut$.\\
Since the client has read $v$, the $DS_c[v.sr] \geq v.ut$ after reading $v$. According to Line \ref{line:rotx_taking_max}, and \ref{line:rotx_setting_sv} of Algorithm \ref{alg:rotx}, the $sv$ of the $ROTX$ operation is entry-wise greater than $DS_c$. Thus, $sv[v.sr] \geq v.ut$ (contradiction). 
\end{itemize}
\end{proof}

\begin{theorem}
\label{lem:p_for_rotx_get}
The data store running \name protocol defined in Section \ref{sec:protocol} is causal+ consistent for $\{ROTX\}$ with reader operations $\{GET\}$ for $f =$ last-write-wins. 
\end{theorem}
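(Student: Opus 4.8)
The plan is to check the three bullet conditions of Definition \ref{def:causal+} instantiated with $O=\{ROTX\}$, reader set $R=\{GET\}$, and $f=\textsf{last-writer-wins}$. The first two conditions are immediate from machinery already in place: ``any version written by $c$ is visible+ for $\{ROTX\}$ to $c$'' is exactly Lemma \ref{lem:self-writtens_are_visible_to_rotx}, and ``once $c$ reads a version via a $GET$ it stays visible+ for $\{ROTX\}$ to $c$'' is exactly Lemma \ref{lem:remains_visible_get_rotx}. So the real content is the third condition: if $c$ reads $v_1$ (a version of $k_1$ hosted at partition $n_1$) via a $GET$ in data center $i$, and $v_1 \deps v_2$ with $v_2$ a version of $k_2$ hosted at partition $n_2$ and written in data center $m$, then $v_2$ must be visible+ for $\{ROTX\}$ to $c$.

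For the third condition I would argue by contradiction, reusing the skeleton of Theorems \ref{lem:p_for_get_get} and \ref{lem:p_for_get_rotx}. If $v_1$ is a \emph{local} version, Lemma \ref{lem:local_dep_visibility_for_rotx} closes the case directly, so I may assume $v_1$ is non-local. Suppose some later $ROTX$ of $c$, with snapshot vector $sv$, is not visible+ on $v_2$. By Lemma \ref{lem:not_visible_rotx} one of three situations holds: (i) $v_2$ has not arrived partition $p^i_{n_2}$; (ii) there is $\langle k,h\rangle\in v_2.ds$ with $sv[k]<h$; or (iii) $sv[v_2.sr]<v_2.ut$. In each case the key move is the same ``transfer'' argument: because $GET(k_1)$ returned the non-local version $v_1$, Observation \ref{obs:GET} gives $DSV^i_{n_1}[j]\ge h$ for every $\langle j,h\rangle\in v_1.ds$ at the moment of that read; and because $v_1 \deps v_2$, Observations \ref{obs:dep_h_ut} and \ref{obs:dep_h'_h} guarantee that the relevant dependency entry of $v_2$ (respectively $v_2.ut$ itself) is dominated by a matching entry of $v_1.ds$. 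I then push these bounds forward in time: Lines \ref{line:client_update_dsv} and \ref{line:client_update_ds} of Algorithm \ref{alg:client} ensure $DSV_c$ and $DS_c$ carry $DSV^i_{n_1}$ after the $GET$, and Lines \ref{line:rotx_taking_max} and \ref{line:rotx_setting_sv} of Algorithm \ref{alg:rotx} ensure the $ROTX$ snapshot $sv$ is entry-wise at least $DSV_c$ (and $DS_c$) at the coordinator. For cases (ii) and (iii) this immediately yields $sv[k]\ge h$ (resp. $sv[v_2.sr]\ge v_2.ut$), contradicting the hypothesis. For case (i), I additionally invoke Lemma \ref{lem:real_is_greater} and Observation \ref{obs:fifo_real}: $DSV^i_{n_1}[m]\ge v_2.ut$ forces $DSV^i_{real}[m]\ge v_2.ut$, so by FIFO channels $v_2$ had in fact already arrived at $p^i_{n_2}$ by the time of the $GET$ and hence before the later $ROTX$ — contradiction.

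The main obstacle is case (i). Cases (ii) and (iii) are routine bookkeeping of ``the bound witnessed by the serving partition propagates through the client into the $ROTX$ snapshot,'' but in case (i) the version is not even present at its hosting partition, so there is no local $DSV$ entry there to reason about; one must detour through $DSV^i_{real}$, the entry-wise minimum of the partitions' version vectors, combining Lemma \ref{lem:real_is_greater} (every partition's $DSV$ and every client's $DS$/$DSV$ are dominated by $DSV^i_{real}$) with the FIFO fact of Observation \ref{obs:fifo_real} (a not-yet-arrived version strictly exceeds $DSV^i_{real}$ in its source entry). Care is also needed because the partition $n_1$ that served $v_1$, the coordinator of the $ROTX$, and the host $n_2$ of $v_2$ may all be distinct, so the argument must never assume two of them share a $DSV$ value; the only thing linking them is the client carrying information between its operations. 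Once case (i) is handled, all three conditions of Definition \ref{def:causal+} hold, completing the proof.
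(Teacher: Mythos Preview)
Your proposal is correct and follows essentially the same route as the paper: the first two conditions are dispatched by Lemmas \ref{lem:self-writtens_are_visible_to_rotx} and \ref{lem:remains_visible_get_rotx}, and the third is handled by contradiction via the three cases of Lemma \ref{lem:not_visible_rotx}, using Observations \ref{obs:GET}, \ref{obs:dep_h_ut}, \ref{obs:dep_h'_h} and the transfer of bounds through $DSV_c$/$DS_c$ into $sv$. The only cosmetic differences are that you front-load the local-$v_1$ case via Lemma \ref{lem:local_dep_visibility_for_rotx} (the paper invokes it inside Case 1), and in case (i) you detour explicitly through $DSV^i_{real}$ and Observation \ref{obs:fifo_real}, whereas the paper uses the already-packaged Observation \ref{obs:fifo} to get the contradiction directly at partition $n_1$.
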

\begin{proof}
We prove this theorem by showing that \name protocol satisfies Definition \ref{def:causal+}.

The first condition is satisfied according to Lemma \ref{lem:self-writtens_are_visible_to_rotx}. The second condition  is satisfied based on Lemma \ref{lem:remains_visible_get_rotx}. 
We prove the third condition via contradiction: 

Let $k_1$ and $k_2$ be any two arbitrary keys in the store residing in partitions $n_1$ and $n_2$. Let $v_1$ be a version of key $k_1$, and $v_2$ be a version of key $k_2$ such that $v_1 \deps v_2$. Now suppose client $c$ reads $v_1$ in data center $i$ via a $GET$ operation, but $v_2$ is not visible+ for  $\{ROTX\}$ operation to client $c$. 

Let $v_2$ be a version written in a data center $m$. According to  Lemma \ref{lem:not_visible_rotx}, three cases are possible: 
\begin{itemize}
\item  [\textbf{Case 1}] $v_2$ has not arrived partition $p^i_{n_2}$. 

Since $v_2$ has not arrived at data center $i$, according to Observation \ref{obs:fifo_real},  $DSV^i_{real}[m]< v_2.ut$, and according to Observation \ref{obs:fifo}, $DSV^i_{n_1}[m]< v_2.ut$. Since, $v_1 \deps v_2$, according to Observation \ref{obs:dep_h_ut}, $v_1.ds$ has member $\langle m, h\rangle$ such that $h \geq v_2.ut$. According to Lemma \ref{lem:local_dep_visibility_for_rotx}, $v_1$ is not a local version. Since $GET(k_1) = v_1$, and $v_1$ is not a local version, according to Observation \ref{obs:GET}, for any member $\langle j, h\rangle \in v.ds$, $DSV^i_{n_1}[j] \geq h$. Thus, $DSV^i_{n_1}[m] \geq v_2.ut$ (contradiction).

\item [\textbf{Case 2}] there is a member $\langle j, h\rangle \in v.ds$, such that $sv[j] < h$. 

Since $v_1 \deps v_2$, according to Observation \ref{obs:dep_h'_h}, for any member $\langle j, h\rangle \in v_2.ds$, there exists $\langle j, h'\rangle \in v_1.ds$ such that $h' \geq h$. Since the client has read $v_1$ by a $GET$ operations, the $DSV_c[j] \geq h'$. According to Line \ref{line:rotx_taking_max}, and \ref{line:rotx_setting_sv} of Algorithm \ref{alg:rotx}, the $sv$ of the $ROTX$ operation is entry-wise greater than $DSV_c$. Thus, $sv[j] \geq h'$. This leads to $sv[j] \geq h$ (contradiction). 

\item [\textbf{Case 3}] $sv[v_2.sr] < v_2.ut$.\\
Since $v_1 \deps v_2$, according to Observation \ref{obs:dep_h_ut}, $\langle v_2.sr, h\rangle \in v_1.ds$ such that $h \geq v_2.ut$. Since client read $v_1$ before reading $k_2$, $ds[v.sr] \geq h$ for $ROTX$ operation reading $k_2$. According to Line \ref{line:rotx_taking_max} and \ref{line:rotx_setting_sv} of Algorithm \ref{alg:rotx}, $sv[v_2.sr] \geq v_2.ut$ (contradiction). 

\end{itemize}

\end{proof}

With Lemma \ref{lemma:local_visibility} and Theorem \ref{lem:p_for_rotx_get}, we have the following corollary: 

\begin{corollary}
\label{th:pp_for_rotx_get}
The data store running \name protocol defined in Section \ref{sec:protocol} is causal++ consistent for $\{ROTX\}$ for reader operations $\{GET\}$ and instantly visible operations $\{GET\}$ for conflict resolution function $f = \textsf{last-writer-wins}$. 
\end{corollary}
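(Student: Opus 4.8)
The plan is to unfold the definition of causal++ consistency (Definition \ref{def:causal++}) for the specific instantiation $O=\{ROTX\}$, $R=\{GET\}$, $I=\{GET\}$, and $f=\textsf{last-writer-wins}$, and then discharge each of its two conjuncts using a result already in hand. Definition \ref{def:causal++} asks for exactly two things: (i) that the store be causal+ consistent for $\{ROTX\}$ with reader operations $\{GET\}$ under $f$, and (ii) that every version written in a data center $r$ be immediately visible+ for $\{GET\}$ under $f$ to any client accessing $r$.

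For conjunct (i), I would cite Theorem \ref{lem:p_for_rotx_get} verbatim; that theorem already states that \name is causal+ consistent for $\{ROTX\}$ with reader operations $\{GET\}$ for $f=\textsf{last-writer-wins}$, so nothing further is needed. For conjunct (ii), I would cite Lemma \ref{lemma:local_visibility}, which says that all versions written in data center $i$ are immediately visible+ for $\{GET\}$ to any client at data center $i$; since clients are sticky, ``any client accessing $r$'' coincides with ``any client at $r$'', so the lemma delivers precisely the second bullet of Definition \ref{def:causal++} with $I=\{GET\}$ and the same conflict-resolution function. Assembling (i) and (ii) establishes the corollary.

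There is essentially no technical obstacle here: this corollary is a bookkeeping step that bolts the transaction-side consistency theorem onto the local-visibility lemma. The only points requiring care are parameter matching — that the set of immediately-visible operations is taken to be $\{GET\}$ rather than $\{ROTX\}$ (consistent with the paper's convention that ROTX need not be instantly visible), and that the conflict-resolution function is the same last-writer-wins function in both inputs. Everything downstream of this statement — combining it with the analogous $\{GET\}$-side corollaries via Observations \ref{obs:comp_O} and \ref{obs:comp_R} to get causal++ consistency for $\{GET,ROTX\}$, and separately arguing that the version set returned by a single ROTX is internally causally consistent — is outside the scope of the claim at hand.
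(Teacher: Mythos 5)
Your proposal is correct and takes essentially the same approach as the paper: the paper derives this corollary by citing exactly Lemma \ref{lemma:local_visibility} and Theorem \ref{lem:p_for_rotx_get}, which is the same two-conjunct unfolding of Definition \ref{def:causal++} that you describe, and your remarks on parameter matching ($I=\{GET\}$, sticky clients, same $f$) are the correct observations needed to make the citation go through.
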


\begin{lemma}
\label{lem:remains_visible_rotx_rotx}
When client $c$ read a version $v$ using $ROTX$ operation, it remains visible+ for $\{ROTX\}$ for $c$.
\end{lemma}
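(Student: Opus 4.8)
The plan is to argue by contradiction, mirroring the structure of Lemmas~\ref{lem:remains_visible_get_rotx} and~\ref{lem:remains_visible_rotx_get}. Suppose client $c$, accessing data center $i$, has read version $v$ of key $k$ (hosted at partition $p^i_n$) via an $ROTX$ operation, yet at some later time a subsequent $ROTX$ operation by $c$ fails to have $v$ visible+. Since $c$ has already read $v$, it is certainly written, so Lemma~\ref{lem:not_visible_rotx} applies to the snapshot vector $sv$ of the later $ROTX$ and leaves exactly three possibilities: (1) $v$ has not reached the hosting partition $p^i_n$; (2) some $\langle j,h\rangle\in v.ds$ has $sv[j]<h$; or (3) $sv[v.sr]<v.ut$. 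I would dispatch these one at a time.

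The crux is a bookkeeping fact about what reading $v$ through an $ROTX$ deposits in $DS_c$. When the earlier $ROTX$ fetched the slice for $k$, the responding partition returned (Line~\ref{line:rotx_include_self_in_ds} of Algorithm~\ref{alg:rotx}) the \emph{augmented} set $maxDS(v.ds,\{\langle v.sr, v.ut\rangle\})$; the coordinator merged it into the aggregate reply (Line~\ref{line:rotx_return_ds}) and the client folded it into $DS_c$ (Line~\ref{line:rotx_cliet_update_ds}). Hence immediately after that $ROTX$, for every $\langle j,h\rangle\in v.ds$ there is an entry $\langle j,h'\rangle\in DS_c$ with $h'\ge h$, and there is an entry $\langle v.sr,h''\rangle\in DS_c$ with $h''\ge v.ut$. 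Since $maxDS$ only grows the entries of $DS_c$, these lower bounds persist through any intervening operations, so they still hold at the time of the later $ROTX$.

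Each case then closes quickly. Case (1) directly contradicts the hypothesis that $c$ already read $v$. For cases (2) and (3), observe that the later $ROTX$ computes $DSV^i_n\leftarrow max(DSV^i_n, DSV_c, DS_c)$ (Line~\ref{line:rotx_taking_max}) and then sets $sv\leftarrow DSV^i_n$ (Line~\ref{line:rotx_setting_sv}), so $sv$ is entry-wise at least $DS_c$; combined with the lower bounds above this gives $sv[j]\ge h$ in case (2) and $sv[v.sr]\ge v.ut$ in case (3), contradicting both. Note that no split into local/non-local $v$ is needed here, since whether or not $v$ was written locally, reading it via $ROTX$ inserts the relevant entries into $DS_c$.

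The step I expect to need the most care is the second one: unlike the $GET$-based analogues, the guarantee hinges on the \textsc{SliceReply} carrying the augmented dependency set (not merely $v.ds$) and on this being faithfully propagated through the coordinator back to the client. I would therefore state this propagation explicitly, perhaps as a small observation analogous to Observation~\ref{obs:GET}, and also make the monotonicity of $DS_c$ under $maxDS$ explicit so that the word ``later'' in the statement is handled rigorously. Everything else follows routinely from Lemma~\ref{lem:not_visible_rotx}.
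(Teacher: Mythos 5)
Your proof is correct and mirrors the paper's: both argue by contradiction from Lemma~\ref{lem:not_visible_rotx}, dispatching the same three cases (not arrived, $sv[j]<h$ for some $\langle j,h\rangle\in v.ds$, $sv[v.sr]<v.ut$) by showing that reading $v$ via the earlier $ROTX$ leaves enough client state to force $sv$ high enough in the later one. The only divergence is in Case~2, where the paper bounds $sv[j]$ via $DSV_c[j]\geq h$ (the coordinator's $DSV^m_n$, which dominated the slice's $sv$, is returned and merged into $DSV_c$) whereas you bound it via $DS_c[j]\geq h$ (the augmented dependency set from Line~\ref{line:rotx_include_self_in_ds} propagates through Lines~\ref{line:rotx_return_ds} and~\ref{line:rotx_cliet_update_ds} into $DS_c$); both are sound given Lines~\ref{line:rotx_taking_max}--\ref{line:rotx_setting_sv}, and your route has the minor virtue of treating Cases~2 and~3 uniformly through $DS_c$.
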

\begin{proof}
Suppose $v$ is not visible+ for $\{ROTX\}$ to client $c$. Since client has read the version before, and  according to Lemma \ref{lem:not_visible_rotx}, there are two cases: 

\begin{itemize}
\item [\textbf{Case 1}]  $v$ has not arrived the hosting partition. 

Contradiction to the assumption that the client has read the version. 

\item [\textbf{Case 2}]   there is a member $\langle j, h\rangle \in v.ds$, such that $sv[j] < h$ 

Since the client has read $v$, the $DSV_c[j] \geq h$. According to Line \ref{line:rotx_taking_max}, and \ref{line:rotx_setting_sv} of Algorithm \ref{alg:rotx}, the $sv$ of the $ROTX$ operation is entry-wise greater than $DSV_c$. Thus, $sv[j] \geq h$ (contradiction). 

\item [\textbf{Case 3}]  $sv[v.sr] < v.ut$.\\
Since the client has read $v$, the $DS_c[v.sr] \geq v.ut$ after reading $v$. According to Line \ref{line:rotx_taking_max}, and \ref{line:rotx_setting_sv} of Algorithm \ref{alg:rotx}, the $sv$ of the $ROTX$ operation is entry-wise greater than $DS_c$. Thus, $sv[v.sr] \geq v.ut$ (contradiction). 

\end{itemize}
\end{proof}

\begin{theorem}
\label{lem:p_for_rotx_rotx}
The data store running \name protocol defined in Section \ref{sec:protocol} is causal+ consistent for $\{ROTX\}$ with reader operations $\{ROTX\}$ for $f = \textsf{last-writer-wins}$. 
\end{theorem}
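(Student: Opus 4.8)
The plan is to verify the three defining conditions of causal+ consistency (Definition \ref{def:causal+}) for $O = R = \{ROTX\}$ and $f = \textsf{last-writer-wins}$, following the structure of the proof of Theorem \ref{lem:p_for_rotx_get} but replacing the $\{GET\}$-reader argument by one about the snapshot vector of the $ROTX$ that performed the read. The first condition --- every version written by $c$ is visible+ for $\{ROTX\}$ to $c$ --- is exactly Lemma \ref{lem:self-writtens_are_visible_to_rotx}. The second condition --- once $c$ reads a version by an $\{ROTX\}$ operation it stays visible+ for $\{ROTX\}$ to $c$ --- is exactly Lemma \ref{lem:remains_visible_rotx_rotx}. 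So only the third condition requires genuine work.

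For the third condition I would argue by contradiction: suppose $c$ (sticky to data center $i$) reads $v_1$ of key $k_1$, hosted at partition $n_1$, by an $ROTX$ operation, that $v_1 \deps v_2$ with $v_2$ a version of key $k_2$ hosted at partition $n_2$ and written in data center $m$, yet a later $ROTX$ reading $k_2$ finds $v_2$ not visible+ for $\{ROTX\}$ to $c$. By Lemma \ref{lem:not_visible_rotx}, one of three things holds for the snapshot vector $sv$ of that later $ROTX$: (i) $v_2$ has not arrived partition $p^i_{n_2}$; (ii) some $\langle j,h\rangle \in v_2.ds$ has $sv[j] < h$; or (iii) $sv[v_2.sr] < v_2.ut$. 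The common engine in all three cases is the following chain: when the first $ROTX$ read $v_1$, Line \ref{line:obtain_sv} of Algorithm \ref{alg:rotx} forces that transaction's snapshot vector to dominate $maxDS(v_1.ds,\{\langle v_1.sr, v_1.ut\rangle\})$; this dominated set is returned to $c$ (Lines \ref{line:rotx_include_self_in_ds} and \ref{line:rotx_return_ds}) and merged into $DS_c$ (Line \ref{line:rotx_cliet_update_ds}); and the later $ROTX$ sets $sv$ to at least $DS_c$ entry-wise (Lines \ref{line:rotx_taking_max} and \ref{line:rotx_setting_sv}). Hence for every $\langle j, h'\rangle \in v_1.ds$ we obtain $sv[j] \ge h'$ for the later $ROTX$.

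Each case then closes routinely. In case (ii), Observation \ref{obs:dep_h'_h} yields $\langle j, h'\rangle \in v_1.ds$ with $h' \ge h$, so $sv[j] \ge h$, a contradiction. In case (iii), Observation \ref{obs:dep_h_ut} yields $\langle v_2.sr, h'\rangle \in v_1.ds$ with $h' \ge v_2.ut$, so $sv[v_2.sr] \ge v_2.ut$, a contradiction. In case (i), $v_2$ being unarrived at data center $i$ forces $m \ne i$ and, by Observation \ref{obs:fifo_real}, $DSV^i_{real}[m] < v_2.ut$; but the first $ROTX$'s snapshot vector dominated the entry $\langle m, h'\rangle \in v_1.ds$ (which by Observation \ref{obs:dep_h_ut} has $h' \ge v_2.ut$), and by Observation \ref{obs:sv_less_than_dsv_max} that snapshot vector is itself $\le DSV^i_{real}$ in component $m$, so $v_2.ut \le h' \le DSV^i_{real}[m] < v_2.ut$, a contradiction. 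I expect the main obstacle to be bookkeeping rather than a new idea: one must carefully track how the dependency metadata flows through the two message rounds of $ROTX$ (client $\to$ coordinator $\to$ slice partitions $\to$ coordinator $\to$ client) so that the chains $v_1.ds \le sv_{\text{first } ROTX} \le DSV^i_{real}$ and $v_1.ds \le DS_c \le sv_{\text{later } ROTX}$ are justified link by link, and one must keep straight which DSV/snapshot entries are compared against which timestamps. A pleasant simplification over Theorem \ref{lem:p_for_rotx_get} is that, because the read of $v_1$ is itself by $ROTX$ and therefore already respects a snapshot vector, we never need to case-split on whether $v_1$ is a local version.
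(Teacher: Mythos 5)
Your proposal is correct and follows essentially the same structure as the paper's own proof: conditions one and two are discharged by Lemma \ref{lem:self-writtens_are_visible_to_rotx} and Lemma \ref{lem:remains_visible_rotx_rotx} respectively, and the third condition is handled by contradiction via the three cases of Lemma \ref{lem:not_visible_rotx}, using Observations \ref{obs:dep_h'_h}, \ref{obs:dep_h_ut}, \ref{obs:fifo_real}, and \ref{obs:sv_less_than_dsv_max} exactly as the paper does. The only cosmetic departure is that you propagate the dependency entries through the returned $ds$ and $DS_c$ rather than through $DSV_c$ as the paper writes it (Line \ref{line:rotx_taking_max} takes the max over both $dsv$ and $ds$, so either chain yields $sv_{\text{later}} \geq v_1.ds$ entry-wise), and you correctly observe that, because $v_1$ is read via $ROTX$ and Line \ref{line:obtain_sv} enforces $sv$-dominance regardless of locality, no case-split on whether $v_1$ is local is needed here (unlike in Theorem \ref{lem:p_for_rotx_get}).
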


\begin{proof}
We prove this theorem by showing that \name protocol satisfies Definition \ref{def:causal+}.

The first condition is satisfied according to Lemma \ref{lem:self-writtens_are_visible_to_rotx}. The second condition  is satisfied based on Lemma \ref{lem:remains_visible_rotx_rotx}. 
We prove the third condition via contradiction: 

Let $k_1$ and $k_2$ be any two arbitrary keys in the store residing in partitions $n_1$ and $n_2$. Let $v_1$ be a version of key $k_1$, and $v_2$ be a version of key $k_2$ such that $v_1 \deps v_2$. Now suppose client $c$ reads $v_1$ in data center $i$ via an $ROTX$ operation, but $v_2$ is not visible+ for $\{ROTX\}$ operation to client $c$. 

Let $v_2$ be a version written in a data center $m$. According to Lemma \ref{lem:not_visible_rotx}, three cases are possible: 
\begin{itemize}
\item [\textbf{Case 1}] $v_2$ has not arrived partition $p^i_{n_2}$. 

Since $v_2$ has not arrived at data center $i$, according to Observation \ref{obs:fifo_real}, $DSV^i_{real}[m]< v_2.ut$. Since, $v_1 \deps v_2$, according to Observation \ref{obs:dep_h_ut}, $v_1.ds$ has member $\langle m, h\rangle$ such that $h \geq v_2.ut$. Since an $ROTX$ operation has read  $v_1$, for any member $\langle j, h\rangle \in v.ds$, $sv[j] \geq h$. According to Observation \ref{obs:sv_less_than_dsv_max}, $sv < DSV^i_{real}$. Thus, $DSV_{real}[m] \geq v_2.ut$ (contradiction).

\item [\textbf{Case 2}] there is a member $\langle j, h\rangle \in v.ds$, such that $sv[j] < h$. 

Since $v_1 \deps v_2$, according to Observation \ref{obs:dep_h'_h}, for any member $\langle j, h\rangle \in v_2.ds$, there exists $\langle j, h'\rangle \in v_1.ds$ such that $h' \geq h$. Since the client has read $v_1$ by a $ROTX$ operations, the $sv[j] \geq h'$ where $sv_1$ is that $sv$ of the $ROTX$ reading $v_1$. Since $sv_1 < DSV_c$, $DSV_c[j] > h'$. According to Line \ref{line:rotx_taking_max}, and \ref{line:rotx_setting_sv} of Algorithm \ref{alg:rotx}, the $sv$ of the $ROTX$ operation is entry-wise greater than $DSV_c$. Thus, $sv[j] \geq h'$. This leads to $sv[j] \geq h$ (contradiction). 

\item [\textbf{Case 3}] $sv[v_2.sr] < v_2.ut$.\\
Since $v_1 \deps v_2$, according to Observation \ref{obs:dep_h_ut}, $\langle v_2.sr, h\rangle \in v_1.ds$ such that $h \geq v_2.ut$. Since client read $v_1$ before reading $k_2$, $ds[v.sr] \geq h$ for $ROTX$ operation reading $k_2$. According to Line \ref{line:rotx_taking_max} and \ref{line:rotx_setting_sv} of Algorithm \ref{alg:rotx}, $sv[v_2.sr] \geq v_2.ut$ (contradiction). 

\end{itemize}
\end{proof}

With Lemma \ref{lemma:local_visibility} and Theorem \ref{lem:p_for_rotx_rotx}, we have the following corollary: 

\begin{corollary}
\label{th:pp_for_rotx_rotx}
The data store running \name protocol defined in Section \ref{sec:protocol} is causal++ consistent for $\{ROTX\}$ for reader operations $\{ROTX\}$ and instantly visible operations $\{GET\}$ for conflict resolution function $f = \textsf{last-writer-wins}$. 
\end{corollary}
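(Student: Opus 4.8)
The plan is to observe that this corollary is a pure book-keeping step: it simply instantiates Definition~\ref{def:causal++} and discharges its two bullets by citing results already established. Recall that Definition~\ref{def:causal++} says a store is causal++ consistent for $O$ with reader operations $R$ and immediate-visible operations $I$ (for $f$) iff (i) the store is causal+ consistent for $O$ with reader operations $R$ for $f$, and (ii) every version $v$ written in data center $r$ is immediately visible+ for operations of $I$ for $f$ to any client accessing $r$. Here we take $O=R=\{ROTX\}$, $I=\{GET\}$, and $f=\textsf{last-writer-wins}$.

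First I would handle bullet (i): this is precisely the statement of Theorem~\ref{lem:p_for_rotx_rotx}, so it is immediate. Second I would handle bullet (ii): a version written in $r$ must be immediately visible+ for $\{GET\}$ to clients at $r$, which is exactly Lemma~\ref{lemma:local_visibility} (whose proof in turn uses Observations~\ref{obs:f_property} and~\ref{obs:dep_ut_ut} together with Line~\ref{line:get:obtain} of Algorithm~\ref{alg:server1} to argue the local version always dominates in the last-writer-wins ordering and is present in the local version chain). Since both bullets of Definition~\ref{def:causal++} hold with the chosen parameters, the store is causal++ consistent for $\{ROTX\}$ for reader operations $\{ROTX\}$ and instantly visible operations $\{GET\}$ for $f=\textsf{last-writer-wins}$, which is the claim.

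There is essentially no obstacle of substance here; all the real work was done earlier. The only thing to be careful about is matching the parameters correctly — in particular that the ``immediate visibility'' bullet is stated for $I=\{GET\}$ rather than $\{ROTX\}$, so the invocation of Lemma~\ref{lemma:local_visibility} (which is about $\{GET\}$-visibility) is the right lemma and no $\{ROTX\}$-analogue of immediate local visibility is needed. This mirrors verbatim the structure used to derive Corollaries~\ref{th:pp_for_get_get}, \ref{th:pp_for_get_rotx}, and~\ref{th:pp_for_rotx_get} from their respective causal+ theorems and Lemma~\ref{lemma:local_visibility}, so the proof is one or two sentences long.
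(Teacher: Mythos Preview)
Your proposal is correct and matches the paper's approach exactly: the paper derives this corollary in one line from Lemma~\ref{lemma:local_visibility} and Theorem~\ref{lem:p_for_rotx_rotx}, which are precisely the two ingredients you invoke to discharge the two bullets of Definition~\ref{def:causal++}. Your observation that $I=\{GET\}$ (not $\{ROTX\}$) is what makes Lemma~\ref{lemma:local_visibility} the right tool is spot on.
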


According to Corollaries \ref{th:pp_for_rotx_get} and \ref{th:pp_for_rotx_rotx} and Observation \ref{obs:comp_R} in Section \ref{sec:causalConsistency}, we have the following corollary: 

\begin{corollary}
\label{cor:pp_for_rotx}
The data store running \name protocol defined in Section \ref{sec:protocol} is causal++ consistent for $\{ROTX\}$ for reader operations $\{GET, ROTX\}$ and instantly visible operations $\{GET\}$ for conflict resolution function $f = \textsf{last-writer-wins}$. 
\end{corollary}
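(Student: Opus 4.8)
The plan is to obtain Corollary~\ref{cor:pp_for_rotx} purely by composition, since by this point the two substantive cases have already been settled. Concretely, Corollary~\ref{th:pp_for_rotx_get} states that the store is causal++ consistent for $O=\{ROTX\}$ with reader operations $R=\{GET\}$ and immediate-visibility set $I=\{GET\}$ for $f=\textsf{last-writer-wins}$, and Corollary~\ref{th:pp_for_rotx_rotx} states the same with $R=\{ROTX\}$. These two statements agree on $O$, on $I$, and on $f$, and differ only in the reader set. First I would check that this alignment is exact --- in particular that both corollaries really are stated for the identical immediate-visibility set $\{GET\}$ and the identical conflict resolution function --- so that the hypotheses of Observation~\ref{obs:comp_R} are met verbatim.

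Then the single remaining step is to invoke Observation~\ref{obs:comp_R}, which asserts that causal++ consistency (for fixed $O$, $I$, $f$) is closed under union of the reader set: from causal++ consistency with $R_1=\{GET\}$ and with $R_2=\{ROTX\}$ we conclude causal++ consistency with $R_1\cup R_2=\{GET,ROTX\}$. This yields exactly the claimed statement. No new case analysis is needed; the three clauses of Definition~\ref{def:causal++} for the union reader set are each inherited, clause by clause, from whichever of the two ingredient corollaries the relevant reader operation belongs to.

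The only place that requires any care --- and hence the ``main obstacle'', such as it is --- is confirming that the two ingredient corollaries are genuine instances of the same parameterised property, rather than superficially similar statements with, say, mismatched immediate-visibility sets; once that is verified, Observation~\ref{obs:comp_R} applies mechanically. All the genuinely delicate work --- the $DSV_{real}$ monotonicity invariant of Lemma~\ref{lem:real_is_greater}, the characterisations in Lemmas~\ref{lem:not_visible_get} and \ref{lem:not_visible_rotx} of exactly when a version fails to be visible+, and the contradiction arguments threading dependency sets through $PUT$/$GET$/$ROTX$ --- has already been carried out in establishing Theorems~\ref{lem:p_for_rotx_get} and \ref{lem:p_for_rotx_rotx} and Lemma~\ref{lemma:local_visibility}, so nothing of that kind recurs here.
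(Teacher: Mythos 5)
Your proposal is correct and matches the paper exactly: the paper derives Corollary~\ref{cor:pp_for_rotx} by applying Observation~\ref{obs:comp_R} to Corollaries~\ref{th:pp_for_rotx_get} and~\ref{th:pp_for_rotx_rotx}, taking the union of the reader sets $\{GET\}$ and $\{ROTX\}$ with $O$, $I$, and $f$ held fixed. Nothing further is needed.
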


Now, based on Corollaries \ref{cor:pp_for_get} and \ref{cor:pp_for_rotx} and Observation \ref{obs:comp_O}, we conclude our desired consistency requirement. Specifically, 
\begin{corollary}
\label{cor:pp_for_get_and_rotx}
The data store running \name protocol defined in Section \ref{sec:protocol} is causal++ consistent for $\{GET, ROTX\}$ for reader operations $\{GET, ROTX\}$ and instantly visible operations $\{GET\}$ for conflict resolution function $f = \textsf{last-writer-wins}$. 
\end{corollary}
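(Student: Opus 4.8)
The plan is to derive this corollary directly from the two component results established just above: Corollary~\ref{cor:pp_for_get}, which gives causal++ consistency for the write-operation set $O=\{GET\}$, and Corollary~\ref{cor:pp_for_rotx}, which gives it for $O=\{ROTX\}$ — both with reader operations $\{GET,ROTX\}$, instantly-visible operations $\{GET\}$, and $f=\textsf{last-writer-wins}$. Observation~\ref{obs:comp_O} says that causal++ consistency for $O_1$ together with causal++ consistency for $O_2$ (same $R$, $I$, $f$) implies causal++ consistency for $O_1\cup O_2$; instantiating it with $O_1=\{GET\}$ and $O_2=\{ROTX\}$ yields exactly the claim. So at this level there is nothing to do beyond invoking the composition observation; all the work sits in the four ``$O\times R$'' theorems feeding Corollaries~\ref{cor:pp_for_get} and~\ref{cor:pp_for_rotx}.

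For those, the plan is to first isolate the timestamp/dependency-set invariants that are independent of which operation does the reading: a version's $ds$ records an upper bound on the update times of all its causal dependencies (Observations~\ref{obs:dep_h'_h}, \ref{obs:dep_h_ut}); the HLC assignment forces $v_1.ut>v_2.ut$ whenever $v_1\deps v_2$ (Observation~\ref{obs:dep_ut_ut}); and last-writer-wins is characterized by lexicographic order on $\langle ut,sr\rangle$ (Observation~\ref{obs:f_property}). From these, local versions are immediately visible (Lemma~\ref{lemma:local_visibility}), and a version fails to be visible only if it has not yet arrived or the relevant DSV/$sv$ entry lags one of its dependencies (Lemmas~\ref{lem:not_visible_get}, \ref{lem:not_visible_rotx}). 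Next I would prove the monotonicity/ordering invariants by induction over the execution: DSV entries never decrease (Observation~\ref{obs:dsv_does_not_decrease}); FIFO channels force $DSV^i_{real}$ to lag any remote version not yet received (Observations~\ref{obs:fifo_real}, \ref{obs:fifo}); and $DSV^i_{real}$ dominates every partition DSV, client DSV, client $DS$, and every ROTX snapshot vector in data center $i$ (Lemma~\ref{lem:real_is_greater}, Observation~\ref{obs:sv_less_than_dsv_max}). With these, each of the three clauses of Definitions~\ref{def:causal+} and~\ref{def:causal++} is discharged by contradiction — self-writes from stickiness plus local visibility, ``reads stay visible'' from DSV monotonicity, and the transitivity clause by propagating the dependency bound stored in $v_1.ds$ through the reader's recorded DSV/$DS$/$sv$ down to the hosting partition's DSV at the time of the second read (Observation~\ref{obs:GET}).

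The step I expect to be the main obstacle is the transitivity clause when the already-read version $v_1$ was written \emph{locally}: then $v_1$'s dependency metadata did not reach the data center via a remote GET, so the bound on $v_2.ut$ must be recovered indirectly. The plan there is Lemma~\ref{lem:local_dep_has_read} — some client in data center $i$ must have read $v_2$ itself, or a non-local $v_3$ with $v_3\deps v_2$ — followed by chasing the resulting timestamp bound through whichever operation performed that read (Observation~\ref{obs:GET} for a GET, the $sv$ bound for an ROTX) and combining with $DSV^i_{real}$ dominance to contradict $DSV^i_{real}[m]<v_2.ut$ (Lemmas~\ref{lem:local_dep_visibility}, \ref{lem:local_dep_visibility_for_rotx}). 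Once all four $O\times R$ theorems are in place, Observation~\ref{obs:comp_R} merges the reader sets and Observation~\ref{obs:comp_O} merges the write sets, giving Corollary~\ref{cor:pp_for_get_and_rotx} with no additional argument at the final step.
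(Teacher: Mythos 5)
Your proposal matches the paper exactly: the corollary is obtained by applying Observation~\ref{obs:comp_O} with $O_1=\{GET\}$ and $O_2=\{ROTX\}$ to Corollaries~\ref{cor:pp_for_get} and~\ref{cor:pp_for_rotx}, and your account of the supporting lemma chain (local visibility, the two ``not-visible'' characterizations, $DSV_{real}$ dominance, and Lemma~\ref{lem:local_dep_has_read} for the locally-written $v_1$ case) is the same decomposition the paper uses to establish those two corollaries.
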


\subsubsection{Causal Consistency of Values Returned by $ROTX$ Operations}

We formally define this requirement as follows. First, we define visiblity+ for a set of versions: 

\begin{definition} [Visible+ to a Set of Versions]
Let $vset$ be a set of version. We say version $v$ is visible+ for conflict resolution function $f$ to $vset$, if $vset$ does not include version $v'$ such that $v' \neq  v$ and $v = f(v, v')$. 
\end{definition}

Now, we define causal consistency for a set of versions: 

\begin{definition} [Causal Consistency for a Set of Versions]
Let $vset$ be a set of versions. $vest$ is causally consistent for conflict resolution function $f$, if for any version $v_1 \in vset$, any version $v_2$ such that $v_1 \deps v_2 $ is visible+ for $f$ to $vest$. 
\end{definition}

A group of values returned by any $ROTX$ is causally consistent. Specifically,

\begin{theorem}
\label{the:consistency_vset_rotx}
Let $vset$ be a set of versions returned by an $ROTX$ operation. $vset$ is causally consistent for $f =\textsf{last-writer-wins}$.
\end{theorem}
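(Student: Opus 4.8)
The plan is to argue by contradiction, directly from the two relevant definitions (visible+ to a set of versions, and causal consistency for a set of versions) together with the facts already established for the $ROTX$ algorithm. Fix $v_1 \in vset$, say a version of key $k_1 \in kset$, and a version $v_2$ of a key $k_2$ with $v_1 \deps v_2$, and suppose toward a contradiction that $v_2$ is not visible+ for last-writer-wins to $vset$. By the definition of visibility+ to a set, $vset$ then contains some version $v'$ with $v' \neq v_2$ and $f(v_2, v') = v_2$. Since $vset$ contains exactly one version for each requested key, $v'$ must be a version of $k_2$; hence $k_2 \in kset$ and $v'$ is precisely the version returned by the $\textsc{SliceReq}$ for $k_2$, issued at the partition $p^i_{n_2}$ hosting $k_2$ in the client's data center $i$, using the common snapshot vector $sv$ that is fixed once in Line \ref{line:rotx_setting_sv} and reused for every $\textsc{SliceReq}$ of the transaction (in particular for the one that returned $v_1$).

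The first step is to show that $v_2$ also satisfies the selection predicate of Line \ref{line:obtain_sv} with respect to this same $sv$. Because $v_1 \in vset$, Line \ref{line:obtain_sv} applied to the $\textsc{SliceReq}$ that returned $v_1$ gives $h \leq sv[\ell]$ for every $\langle \ell, h\rangle \in v_1.ds$. Now invoke $v_1 \deps v_2$: Observation \ref{obs:dep_h'_h} supplies, for each $\langle \ell, h\rangle \in v_2.ds$, an entry $\langle \ell, h'\rangle \in v_1.ds$ with $h' \geq h$, so $h \leq sv[\ell]$; and Observation \ref{obs:dep_h_ut} supplies an entry $\langle v_2.sr, h\rangle \in v_1.ds$ with $h \geq v_2.ut$, so $v_2.ut \leq sv[v_2.sr]$. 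Combining these, every entry of $maxDS(v_2.ds, \{\langle v_2.sr, v_2.ut\rangle\})$ is bounded by the corresponding entry of $sv$, which is exactly the predicate checked in Line \ref{line:obtain_sv}.

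The step I expect to be the main obstacle is showing that $v_2$ actually lies in the version chain of $k_2$ at $p^i_{n_2}$, so that it genuinely competes with $v'$ when Line \ref{line:obtain_sv} picks the \emph{latest} (lexicographically largest $\langle ut, sr\rangle$, as in Line \ref{line:get:obtain}) qualifying version. If $v_2$ was written in data center $i$ this is immediate. Otherwise $v_2$ was written in some data center $j \neq i$; then from $v_2.ut \leq sv[v_2.sr]$ just proved, together with Observation \ref{obs:sv_less_than_dsv_max} (recall $sv$ is set from a $DSV$ value in Line \ref{line:rotx_setting_sv}), we get $v_2.ut \leq DSV^i_{real}[j]$, and Observation \ref{obs:fifo_real} then forces $v_2$ to have already arrived at data center $i$, hence to be present in the chain of $k_2$ at $p^i_{n_2}$. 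Since $v_2$ is a qualifying version and $v'$ is the latest qualifying version, $\langle v'.ut, v'.sr\rangle$ is lexicographically greater than or equal to $\langle v_2.ut, v_2.sr\rangle$.

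Finally I would extract the contradiction from $f(v_2, v') = v_2$ using Observation \ref{obs:f_property}: either $v_2 \deps v'$, in which case Observation \ref{obs:dep_ut_ut} gives $v_2.ut > v'.ut$; or $\neg(v_2 \deps v') \wedge \neg(v' \deps v_2)$ and $\langle v_2.ut, v_2.sr\rangle$ is lexicographically greater than $\langle v'.ut, v'.sr\rangle$. In both cases $\langle v_2.ut, v_2.sr\rangle$ lexicographically exceeds $\langle v'.ut, v'.sr\rangle$, contradicting the inequality obtained in the previous step. The only routine bookkeeping is the propagation of the $sv$/$DS$/$DSV$ bounds through $maxDS$; the load-bearing ingredients are Observations \ref{obs:dep_h'_h}, \ref{obs:dep_h_ut}, \ref{obs:sv_less_than_dsv_max}, \ref{obs:fifo_real}, \ref{obs:dep_ut_ut}, \ref{obs:f_property} and the ``latest qualifying version'' semantics of Line \ref{line:obtain_sv}.
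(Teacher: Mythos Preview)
Your overall strategy is sound and in fact tidier than the paper's: the paper reaches the same contradiction but through a heavier case split (first on whether $v_2$ has been written by the time the $\textsc{SliceReq}$ for $k_2$ is served, then, when it has, via the three cases of Lemma~\ref{lem:not_visible_rotx}), whereas you go directly to the selection predicate of Line~\ref{line:obtain_sv} and the ``latest qualifying version'' semantics. The ingredients you list (Observations~\ref{obs:dep_h'_h}, \ref{obs:dep_h_ut}, \ref{obs:dep_ut_ut}, \ref{obs:f_property}, \ref{obs:fifo_real}, \ref{obs:sv_less_than_dsv_max}) are the right ones, and your final contradiction step is correct.

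There is, however, one genuine gap. When $v_2$ is local (written in data center $i$), it is \emph{not} immediate that $v_2$ lies in the version chain of $k_2$ at $p^i_{n_2}$ when the $\textsc{SliceReq}$ is served: the $\textsc{SliceReq}$s are issued in parallel, so the one for $k_2$ may be served before the one for $k_1$, and hence before $v_2$ (and then $v_1$) are even written. This is precisely the scenario the paper isolates as its Case~1.2, and it is closed there using Line~\ref{line:ubdateDsvWithSv}: when the $\textsc{SliceReq}$ for $k_2$ is served, $DSV^i_{n_2}[i]$ is bumped to at least $sv[i]$; any local PUT at $p^i_{n_2}$ occurring afterwards reads this value into $dt$ (Line~\ref{line:takeMaxforDt}) and $updateHLC$ assigns a strictly larger timestamp, so such a $v_2$ would satisfy $v_2.ut > sv[i]$. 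But your Step~1 already gives $v_2.ut \leq sv[v_2.sr] = sv[i]$, a contradiction. So the fix is short and uses only material you have in hand; you just need to invoke Line~\ref{line:ubdateDsvWithSv} rather than declare the local case ``immediate''. Note also that Observation~\ref{obs:sv_less_than_dsv_max} is stated only for entries $m\neq i$, so it cannot help with the local entry $sv[i]$; Line~\ref{line:ubdateDsvWithSv} is really the load-bearing piece here.
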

\begin{proof}
We prove this theorem by contradiction: 

Let $kset$ and $vset$ respectively be the set of keys read and the set of of versions returned by $ROTX$ operation $a$ performed by client $c$ at data center $i$. Let $v_1$ and $v_2$ be  versions for key $k_1 \in kset$ and $k_2\in kset$, respectively, such that $v_1 \deps v_2$. Suppose $v_1 \in vset$, but $v_2$ is not visible+.
When $v_2$ is not visible+ to $vset$, it means it is no visible+ for $\{ROTX\}$ to client $c$. 

We have following cases for $v_2$:

\begin{itemize}
\item [\textbf{Case 1}] $v_2$ is not written  when $ROTX$ reads $k_2$: \\ 
If $v_2$ is a remote version, this case is the same as \textbf{Case 2.1} where $v_2$ has not been arrived. Thus, we focus on the situation where $v_2$ is a local version. Now, we consider two cases for $v_1$: 
\begin{itemize}
\item [\textbf{Case 1.1}] $v_1$ is a remote version:\\
Suppose $v_1$ is written in data center $j$. Since $v_1 \deps v_2$ and $v_2$ is not written at the time of $ROTX$ begins, $v_1$ also is not written at that time. Thus, data center could not have received $v_1$ when it starts the $ROTX$ operation. According to Observation \ref{obs:fifo}, $DSV^i_{real}[j] < v_1.ut$. According to Observation \ref{obs:sv_less_than_dsv_max}, $sv[j] < v_1.ut$. However, since $ROTX$ read $v_1$, $sv[j] \geq v_1.ut$ (contradiction).

\item [\textbf{Case 1.2}] $v_1$ is a local version:\\
According to Line \ref{line:ubdateDsvWithSv} of Algorithm \ref{alg:rotx}, $DSV^i_{n_2}[i] \geq sv[i]$, after $ROTX$ read $k_2$. Since $v_2$ is written after $ROTX$ read $k_2$, according to Line \ref{line:takeMaxforDt} of Algorithm \ref{alg:server1} and the algorithm of $updateHLC$, $v_2.ut > DSV^i_{n_2}[i]$ that leads to $v_2.ut > sv[i]$. Since $v_1 \deps v_2$, according to Observation \ref{obs:dep_h_ut}, $\langle i, h \rangle \in v_1.ds$  such that $h \geq v_2.ut$. Thus, $h > sv[i]$. However, since $ROTX$ reads $v_1$, $sv[i] \geq h$ (contradiction).
\end{itemize}
 
\item [\textbf{Case 2}] $v_2$ is written when $ROTX$ reads $k_2$: \\
In this case, according to Lemma \ref{lem:not_visible_rotx}, two cases are possible for $v_2$:

\begin{itemize}

\item [\textbf{Case 2.1}] $v_2$ has not arrived partition $p^i_{n_2}$. 

Since $v_2$ has not arrived at data center $i$, according to Observation \ref{obs:fifo_real}, $DSV_{real}[m]< v_2.ut$. Since, $v_1 \deps v_2$, according to Observation \ref{obs:dep_h_ut}, $v_1.ds$ has member $\langle m, h\rangle$ such that $h \geq v_2.ut$. Since the $ROTX$ operation has read  $v_1$, for any member $\langle j, h\rangle \in v.ds$, $sv[j] \geq h$. According to Observation \ref{obs:sv_less_than_dsv_max}, $sv \leq DSV^i_{real}$. Thus, $DSV^i_{real}[m] \geq v_2.ut$ (contradiction).

\item [\textbf{Case 2.2}] there is a member $\langle j, h\rangle \in v_2.ds$, such that $sv[j] < h$. 

Since $v_1 \deps v_2$, according to Observation \ref{obs:dep_h'_h}, for any member $\langle j, h\rangle \in v_2.ds$, there exists $\langle j, h'\rangle \in v_1.ds$ such that $h' \geq h$. Since the client has read $v_1$ by the $ROTX$ operation,  $sv[j] \geq h'$. This leads to $sv[j] \geq h$ (contradiction). 

\item [\textbf{Case 2.3}] $sv[v_2.sr] < v_2.ut$.\\
Since $v_1 \deps v_2$, according to Observation \ref{obs:dep_h_ut}, $\langle v_2.sr, h\rangle \in v_1.ds$ such that $h \geq v_2.ut$. Since $ROTX$ reads $v_1$, $sv[v_2.sr] \geq h$ which leads to $sv[v_2.sr] \geq v_2.ut$ (contradiction). 
\end{itemize}
\end{itemize}
\end{proof}

\subsection{Convergence}
\subsubsection{Convergence for $GET$ Operations}
\label{sec:proof_conv_get}
Now, we focus on the convergence aspect of the protocol. First, we observe that DSV values for connected servers never stop increasing via replicate or heartbeat messages. Thus, we have

\begin{observation}
\label{obs:dsvGrows}
If data center $i$ is connected to data center $j$, for all $1 \leq k \leq N$, $DSV^j_k[i]$ and $DSV^i_k[j]$ will never stop increasing.
\end{observation}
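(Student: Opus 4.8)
The plan is to trace the single place where $DSV^j_k[i]$ can grow — Line~\ref{line:computeDSV} of Algorithm~\ref{alg:server2}, where $DSV^j_k$ is raised to the entry-wise minimum of the $VV$'s of data center $j$ — back to the message flow from data center $i$ into data center $j$, and to show that this flow never dries up and carries ever-larger timestamps. Since $DSV^j_k[i]$ is updated only by taking a $\max$ with $\min_{n=1}^{N} VV^j_n[i]$, it suffices to prove that $VV^j_n[i]$ grows without bound for \emph{every} partition index $n$ in data center $j$; Observation~\ref{obs:dsv_does_not_decrease} then guarantees it never falls back.

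First I would observe that in any time window of length $\Delta$ the partition $p^i_n$ sends at least one message to its peer $p^j_n$: either a \textsc{Replicate} (triggered by a local PUT) or, if none occurred, a \textsc{Heartbeat} fired by the $\Delta$-timer of Algorithm~\ref{alg:server2}. Because $i$ and $j$ are connected, every such message is eventually delivered. Next I would pin down the timestamp carried by each of these messages: it is $VV^i_n[i]$ read at the moment of sending, \emph{after} the corresponding $updateHLCforPut$ (for \textsc{Replicate}) or $updateHLC$ (for \textsc{Heartbeat}) call. Two facts about this quantity are needed. (a) It is monotonically non-decreasing in time: the $l$-component is always assigned via a $\max$, and whenever $l$ is unchanged the $c$-component is incremented, so the pair $\langle l,c\rangle$ never decreases in HLC order. (b) At the moment of sending it is at least the physical clock $PC^i_n$, since both update routines take the $\max$ with $PC^i_n$. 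As physical clocks are assumed to progress (they eventually exceed any given value), the timestamps emitted by $p^i_n$ toward $p^j_n$ therefore grow without bound.

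Then, combining the FIFO-channel assumption with monotonicity~(a), I would argue that the receiver's assignment $VV^j_n[i] \leftarrow d.ut$ (respectively $VV^j_n[i] \leftarrow hlc$) never loses ground: messages arrive in send order and their timestamps are non-decreasing, so the assignment behaves like a $\max$ and $VV^j_n[i]$ is itself non-decreasing. Given any target value $T$, choosing a send time at which $PC^i_n$ already exceeds $T$ yields a message with timestamp $> T$ that is eventually delivered, after which $VV^j_n[i] > T$. Hence $VV^j_n[i]$ grows without bound for each $n$, so does $\min_{n=1}^{N} VV^j_n[i]$ (a minimum of finitely many sequences each growing without bound), and so does $DSV^j_k[i]$ via the $\max$ that is re-evaluated every $\theta$ time. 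The statement for $DSV^i_k[j]$ follows by the identical argument with the roles of $i$ and $j$ exchanged.

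The main obstacle I anticipate is the HLC bookkeeping in the second paragraph: I must extract from the $updateHLC$/$updateHLCforPut$ code the precise sandwich $PC^i_n \le VV^i_n[i]$ (unboundedness) together with monotonicity of the transmitted value along a FIFO channel (so that receiver assignments do not erase earlier, larger entries). Everything else — the $\Delta$-periodicity of heartbeats, the $\theta$-periodicity of the DSV recomputation, and the reduction from the vector entry $DSV^j_k[i]$ to the per-partition scalars $VV^j_n[i]$ via the entry-wise minimum — is routine once those two HLC facts are established.
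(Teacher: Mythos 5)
The paper states this as an \emph{observation} and gives no proof at all---only the one-sentence motivation immediately preceding it (``we observe that DSV values for connected servers never stop increasing via replicate or heartbeat messages''). Your argument is a correct and careful expansion of exactly that route: the $\Delta$-timer guarantees a steady stream of \textsc{Replicate}/\textsc{Heartbeat} messages; each carries $VV^i_n[i]$, which is pinned above $PC^i_n$ by the $\max$ in $updateHLC$/$updateHLCforPut$ and is monotone in HLC order; the FIFO channel turns the receiver's plain assignment into an effective $\max$; and the $\theta$-periodic entry-wise-$\min$ recomputation lifts $DSV^j_k[i]$ to follow $\min_n VV^j_n[i]$. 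All of this is the right shape and the two HLC facts you flag as the crux ($PC \le VV[\cdot]$, monotonicity of the emitted timestamp) do hold from the code.

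One small imprecision worth fixing: Line~\ref{line:computeDSV} is not literally the \emph{only} place $DSV^j_k[i]$ can grow---it is also raised to $\max$ with a client-supplied $dsv$ on GET (Line~\ref{line:get:takeMaxDSV}), with a client-supplied $ds$ on PUT (Line~\ref{line:put_max_dsv}), and with $dsv$/$ds$ on ROTX (Line~\ref{line:rotx_taking_max}). This does not damage your proof (those updates are also $\max$-increases, and Lemma~\ref{lem:real_is_greater} bounds every such value by $DSV^i_{\mathrm{real}} = \min_n VV^i_n$, so the periodic recomputation still dominates asymptotically), but the sentence ``the single place where $DSV^j_k[i]$ can grow'' should be softened to something like ``the periodic recomputation at Line~\ref{line:computeDSV} alone suffices to drive the growth.'' You should also make explicit the assumption that physical clocks are unbounded (``eventually exceed any given value''), which the paper uses implicitly but never states.
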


We have the following theorem about the convergence of our protocol, 

\begin{theorem}
\label{the:convergence_get}
The data store running \name protocol defined in Section \ref{sec:protocol} is convergent for $\{GET\}$ for conflict resolution function \textsf{last-writer-wins}.
\end{theorem}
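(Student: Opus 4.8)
The plan is to reduce the claim to the contrapositive of Lemma~\ref{lem:not_visible_get}. Fix a version $v_1$ of some key $k$ written in data center $r$, and a data center $r'$ satisfying the hypotheses of Definition~\ref{def:convergence}: $r'$ is continuously connected to $r$, and to the data center of every version $v_2$ with $v_1 \deps v_2$. Let $n$ be the partition index hosting $k$. By Lemma~\ref{lem:not_visible_get} it suffices to exhibit a time $t_0$ after which (a) $v_1$ has arrived at partition $p^{r'}_n$, and (b) for every entry $\langle j, h\rangle \in v_1.ds$ we have $DSV^{r'}_n[j] \ge h$. Once (a) holds it holds forever (a replicated version stays in the version chain), and once (b) holds it holds forever ($DSV$ entries never decrease, Observation~\ref{obs:dsv_does_not_decrease}); so from $t_0$ on, by the contrapositive of Lemma~\ref{lem:not_visible_get} together with Lemma~\ref{lem:remains_visible_get}, $v_1$ is visible+ for $\{GET\}$ to every client accessing $r'$, which is exactly what Definition~\ref{def:convergence} demands.

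For (a): since $r'$ is continuously connected to $r$, the $\langle\textsc{Replicate}\ v_1\rangle$ message sent by $p^r_n$ in Line~\ref{line:sendReplicate} of Algorithm~\ref{alg:server1} is eventually delivered to $p^{r'}_n$, which then inserts $v_1$ into the version chain of $k$. For (b): I would first establish, by a short induction over the rules that populate client dependency sets --- the PUT-reply, GET-reply and ROTX-reply updates only ever merge in pairs $\langle sr, ut\rangle$ belonging to genuine versions, and $v_1.ds$ is precisely the $DS_c$ that the client passed with the PUT that created $v_1$ --- that every data center $j$ occurring in $v_1.ds$ is the source replica of some version $v_2$ with $v_1 \deps v_2$. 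By the hypothesis of Definition~\ref{def:convergence}, $r'$ is therefore continuously connected to every such $j$. Observation~\ref{obs:dsvGrows} then says $DSV^{r'}_n[j]$ never stops increasing, so, $h$ being a fixed value, eventually $DSV^{r'}_n[j] \ge h$. Because $v_1.ds$ has at most one entry per data center, only finitely many such conditions must be met, and by monotonicity of $DSV$ they all hold simultaneously after some finite time.

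Taking $t_0$ to be the later of the two times produced by (a) and (b) completes the argument. The one step requiring genuine care is the inductive bookkeeping in (b): one must trace each scalar component of $v_1.ds$ back to an actual causal dependency of $v_1$, so that the connectivity hypothesis of Definition~\ref{def:convergence} can be invoked for its source data center; everything else is a direct appeal to Observations~\ref{obs:dsvGrows} and~\ref{obs:dsv_does_not_decrease} and Lemmas~\ref{lem:not_visible_get} and~\ref{lem:remains_visible_get}, with no new machinery needed.
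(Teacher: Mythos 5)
Your proof is correct and takes essentially the same route as the paper's: both reduce visibility to the two failure conditions of Lemma~\ref{lem:not_visible_get} and discharge them via the eventual delivery of the $\textsc{Replicate}$ message and Observation~\ref{obs:dsvGrows}, the paper using a contradiction framing where you use the direct contrapositive. You are a bit more careful than the paper in flagging that each $\langle j,h\rangle \in v_1.ds$ must first be traced back to a genuine causal dependency of $v_1$ written in data center $j$ before the connectivity hypothesis of Definition~\ref{def:convergence} can license Observation~\ref{obs:dsvGrows} for that entry, a step the paper's Case~2 leaves implicit.
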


\begin{proof}
We prove this theorem by showing that \name protocol satisfies Definition \ref{def:convergence} via proof by contradiction: 

Let $v_1$ be a version for key $k$ written in partition $p^m_n$. Let 
\begin{itemize}
\item data center $i$ be connected to data center $m$, and 
\item for any version $v_2$ such that $v_1 \deps v_2$, data center $i$ be connected to data center $j$ where version $v_2$ is written. 
\end{itemize}
Now, suppose $v_1$ is never visible+ in $i$. 

When $v_1$ is never visible+ in $i$, according to Lemma \ref{lem:not_visible_get}, two cases are possible for $v_1$: 

\begin{itemize}
\item [\textbf{Case 1}]$v_1$ will never reach partition $p^i_n$. 

Since data center $i$ is connected to data center $m$, $p^i_n$ will receive replicate message sent in Line \ref{line:sendReplicate} of Algorithm \ref{alg:server1} (contradiction).
\item [\textbf{Case 2}]there is a member $\langle j, h\rangle \in v_1.ds$, such that $DSV^i_n[j]$ will always remain smaller than $h$. 

This is a contradiction to Observation \ref{obs:dsvGrows} and the second assumption of the theorem. 
\end{itemize}
\end{proof}

\subsubsection{Convergence for ROTX Operations}
\begin{theorem}
\label{the:convergence_rotx}
The data store running \name protocol defined in Section \ref{sec:protocol} is convergent for $\{ROTX\}$ for conflict resolution function \textsf{last-writer-wins}.
\end{theorem}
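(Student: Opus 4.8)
The plan is to mirror the structure of the proof of Theorem~\ref{the:convergence_get} (convergence for $\{GET\}$) but replace every appeal to Lemma~\ref{lem:not_visible_get} with its ROTX counterpart, Lemma~\ref{lem:not_visible_rotx}. Concretely, I would argue by contradiction: suppose $v_1$ is a version for key $k$ written in partition $p^m_n$, that data center $i$ is continuously connected to data center $m$, and that for every version $v_2$ with $v_1 \deps v_2$, data center $i$ is continuously connected to the data center $j$ where $v_2$ is written. Assume for contradiction that $v_1$ is never visible+ for $\{ROTX\}$ to some client accessing $i$. Then there is an $ROTX$ operation at $i$ with snapshot vector $sv$ for which $v_1$ is not returned (or a conflicting older version is returned), and Lemma~\ref{lem:not_visible_rotx} gives three cases for this snapshot.

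In \textbf{Case 1} ($v_1$ has not arrived at partition $p^i_n$): since $i$ is continuously connected to $m$, the replicate message sent for $v_1$ in Line~\ref{line:sendReplicate} of Algorithm~\ref{alg:server1} is eventually delivered, so $v_1$ eventually does arrive at $p^i_n$; since $DSV$ values never decrease (Observation~\ref{obs:dsv_does_not_decrease}), once $v_1$ arrives the first condition of Lemma~\ref{lem:not_visible_rotx} cannot hold for any subsequent $ROTX$, contradiction. In \textbf{Case 2} (there is $\langle j, h\rangle \in v_1.ds$ with $sv[j] < h$): by construction of $sv$ in Lines~\ref{line:rotx_taking_max} and~\ref{line:rotx_setting_sv} of Algorithm~\ref{alg:rotx}, $sv$ is entry-wise at least $DSV^i_{n_0}$ of the coordinating partition $n_0$; and by Observation~\ref{obs:dep_h_ut} the entry $\langle j, h\rangle$ of $v_1.ds$ satisfies $h \ge v_2.ut$ for the dependency $v_2$ of $v_1$ written in $j$ (taking $v_2$ to be that very dependency), so it suffices to show $DSV^i_{n_0}[j]$ eventually reaches $v_2.ut$. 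Since $i$ is continuously connected to $j$, $v_2$ eventually arrives at $p^i_{n_0}$ via a replicate message, making $VV^i_{n_0}[j] \ge v_2.ut$; and by Observation~\ref{obs:dsvGrows}, $DSV^i_{n_0}[j]$ never stops increasing and therefore eventually exceeds $h$ — contradicting that $sv[j] < h$ for \emph{every} such $ROTX$ forever. In \textbf{Case 3} ($sv[v_1.sr] < v_1.ut$, i.e. $sv[m] < v_1.ut$): the same argument applies directly to $v_1$ itself — once $v_1$ arrives at $p^i_{n_0}$ (continuous connection to $m$), $VV^i_{n_0}[m] \ge v_1.ut$, and by Observation~\ref{obs:dsvGrows} $DSV^i_{n_0}[m]$ eventually exceeds $v_1.ut$, so $sv[m] \ge v_1.ut$ eventually, contradiction.

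The main obstacle, and the only place where more care than in the $GET$ proof is needed, is the handling of the third case of Lemma~\ref{lem:not_visible_rotx} (the $sv[v.sr] < v.ut$ clause), which has no analogue in the $GET$ setting, together with making precise the claim that $sv$ dominates a fixed partition's $DSV$: one must fix a coordinating partition $n_0$ and observe that $sv$ for an $ROTX$ issued there is set from $DSV^i_{n_0}$ (after taking max with the client's $DSV_c$ and $DS_c$, which only helps), so $sv \ge DSV^i_{n_0}$ entry-wise at the moment of the operation. Combined with Observations~\ref{obs:dsvGrows} and~\ref{obs:dsv_does_not_decrease}, this shows that assuming the bad case persists for \emph{every} future $ROTX$ is untenable. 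I would also invoke Observation~\ref{obs:comp_conv_O} if one wishes to combine this with Theorem~\ref{the:convergence_get} to conclude convergence for $\{GET, ROTX\}$, though that is a separate corollary and not part of this theorem's statement.
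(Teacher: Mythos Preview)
Your proposal is correct and follows essentially the same approach as the paper's proof: contradiction via Lemma~\ref{lem:not_visible_rotx}, with Case~1 handled by delivery of the replicate message and the remaining cases handled by Observation~\ref{obs:dsvGrows} together with the fact that $sv$ is set from (and hence entry-wise dominates) a partition's $DSV$. The paper's proof is terser---it collapses your Cases~2 and~3 into a single case and simply cites Observation~\ref{obs:dsvGrows}, the connectivity assumption, and ``the fact that $sv$ is entry-wise greater than DSV'' without spelling out the coordinating-partition argument---but the substance is the same.

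One minor wrinkle in your Case~2: the detour through Observation~\ref{obs:dep_h_ut} and the claim that $v_2$ arrives at $p^i_{n_0}$ is not quite right (that observation goes in the other direction, and $v_2$ arrives at its own hosting partition, not $n_0$); but this is harmless, since Observation~\ref{obs:dsvGrows} already gives you directly that $DSV^i_{n_0}[j]$ grows without bound once you know $i$ is connected to $j$, which is exactly how the paper argues.
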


\begin{proof}
We prove this theorem by showing that \name-X protocol satisfies Definition \ref{def:convergence} via proof by contradiction: 

Let $v_1$ be a version for key $k$ written in partition $p^m_n$. Let 
\begin{itemize}
\item data center $i$ be connected to data center $m$, and 
\item for any version $v_2$ that $v_1 \deps v_2$, data center $i$ be connected to data center $j$ where version $v_2$ is written. 
\end{itemize}
Now, suppose $v_1$ is never visible+ in $i$. 

When $v_1$ is never visible+ in $i$, according to Lemma \ref{lem:not_visible_rotx}, following cases are possible for $v_1$: 

\begin{itemize}
\item [\textbf{Case 1}]$v_1$ will never reach at partition $p^i_n$. \\
Since data center $i$ is connected to data center $m$, $p^i_n$ will receive replicate message sent in Line \ref{line:sendReplicate} of Algorithm \ref{alg:server1} (contradiction).

\item [\textbf{Case 2}]there is a member $\langle j, h\rangle \in v_1.ds$, such that $sv[j]$ will always remain smaller than $h$, or $sv[v_1.sr]$ always remains smaller than $v_1.ut$\\
This is a contradiction to Observation \ref{obs:dsvGrows} and the second assumption of the theorem, and the fact that $sv$ is entry-wise greater than DSV. 

\end{itemize}
\end{proof}

According to Theorem \ref{the:convergence_get} provided in Section \ref{sec:proof_conv_get}, Theorem \ref{the:convergence_rotx} provided in this section, and Observation \ref{obs:comp_conv_O} provided in Section \ref{sec:background}, we conclude our desired convergence requirement. Specifically,

\begin{corollary}
\label{cor:conv_for_get_rotx}
The data store running \name protocol defined in Section \ref{sec:protocol} is convergent for $\{GET, ROTX\}$ for conflict resolution function \textsf{last-writer-wins}.
\end{corollary}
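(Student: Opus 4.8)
The plan is to derive this corollary as an immediate composition of the two single-operation convergence theorems. By Observation \ref{obs:comp_conv_O}, a store that is convergent for $\{GET\}$ and convergent for $\{ROTX\}$ (under the same conflict resolution function) is convergent for $\{GET\} \cup \{ROTX\} = \{GET, ROTX\}$. Hence it suffices to cite Theorem \ref{the:convergence_get} and Theorem \ref{the:convergence_rotx}, both already established for $f = \textsf{last-writer-wins}$, and then apply the observation. No new argument about the protocol is required at this level.

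For completeness I would recall why each single-operation theorem holds, since that is where the real content sits. Fix a version $v_1$ of key $k$ written at $p^m_n$, assume data center $i$ is continuously connected to $m$, and assume that for every $v_2$ with $v_1 \deps v_2$ the data center $i$ is continuously connected to the data center hosting $v_2$. Suppose, for contradiction, that $v_1$ is never visible+ at $i$ for the operation set in question. Applying the appropriate not-visible characterization --- Lemma \ref{lem:not_visible_get} for $\{GET\}$ and Lemma \ref{lem:not_visible_rotx} for $\{ROTX\}$ --- one of a handful of obstructing conditions must persist forever. The condition ``$v_1$ never reaches $p^i_n$'' is ruled out because $i$ is connected to $m$, so the replicate message sent at Line \ref{line:sendReplicate} of Algorithm \ref{alg:server1} is eventually delivered over the FIFO channel. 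The conditions that some entry $\langle j, h \rangle \in v_1.ds$ keeps $DSV^i_n[j]$ (respectively the snapshot entry $sv[j]$) below $h$ forever, or keeps $sv[v_1.sr]$ below $v_1.ut$ forever, are ruled out by Observation \ref{obs:dsvGrows}: the DSV entries for data centers connected to $i$ grow without bound, and $i$ is connected to each such $j$ as well as to $m = v_1.sr$ by hypothesis. For ROTX one additionally uses that the snapshot vector is taken entry-wise at least the current DSV (Lines \ref{line:rotx_taking_max} and \ref{line:rotx_setting_sv} of Algorithm \ref{alg:rotx}), so it inherits this unbounded growth. Either way we reach a contradiction, so $v_1$ is eventually visible+ at $i$, which is exactly convergence for the single operation set.

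The main point that requires care --- and essentially the only subtlety --- is that the ROTX not-visible characterization carries an extra obstructing clause, $sv[v_1.sr] < v_1.ut$, that has no counterpart in the GET case; one must separately check that this clause is also eventually cleared. This follows because $v_1.sr = m$, and $i$ is assumed continuously connected to $m$, so once the replicate message carrying $v_1$ arrives and DSV entries for $m$ keep increasing (Observation \ref{obs:dsvGrows}), $DSV^i_n[m]$ --- and hence any later $sv[m]$ --- eventually exceeds $v_1.ut$. Beyond this bookkeeping there is no real obstacle: the heavy lifting is already done in Lemmas \ref{lem:not_visible_get} and \ref{lem:not_visible_rotx}, in Observation \ref{obs:dsvGrows}, and in the two named convergence theorems, and the corollary is just their conjunction via Observation \ref{obs:comp_conv_O}.
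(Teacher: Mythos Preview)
Your proposal is correct and takes essentially the same approach as the paper: the corollary is derived immediately from Theorem~\ref{the:convergence_get}, Theorem~\ref{the:convergence_rotx}, and Observation~\ref{obs:comp_conv_O}. The additional recap you give of the single-operation proofs is accurate and matches the paper's arguments for those theorems, though the paper itself states the corollary with only the three citations and no recap.
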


Corollaries \ref{cor:pp_for_get_and_rotx} and \ref{cor:conv_for_get_rotx} and Theorem \ref{the:consistency_vset_rotx} provide the correctness of \name.

\end{document}